\newcommand{\supp}{\mathrm{supp}\,}
\newcommand{\Prob}{\mathbb{P}}
\newcommand{\N}{\mathbb{N}}	% Natuerliche Zahlen
\newcommand{\R}{\mathbb{R}}	% Reelle Zahlen
\newcommand{\Or}{\mathcal{O}}
\newcommand{\X}{\mathbb{X}}	% Hilberträume
\newcommand{\Y}{\mathbb{Y}}
\newcommand{\calL}{\mathcal{L}}
\newcommand{\Op}{\mathcal{F}}
\newcommand{\Ophat}{\widehat{\Op}}
\newcommand{\phihat}{\widehat{\varphi}}
\newcommand{\unknown}{\varphi}
\newcommand{\phidag}{\unknown^{\dagger}}
\newcommand{\Tdag}{T_\dag}
\newcommand{\That}{\widehat T}
\newcommand{\Tdaghat}{\widehat T_{n \dag}}
\newcommand{\Ex}{\mathbb{E}}
\newcommand{\Var}{\mathbb{Var}}
\newcommand{\argmin}{\mathop{\mathrm{argmin}}}
\newcommand{\dom}{\mathop{\mathrm{dom}}}
\def\Var{\mathop{\rm {\mathbb V}ar}\nolimits}%
\newcommand\independent{\protect\mathpalette{\protect\independenT}{\perp}}
\def\independenT#1#2{\mathrel{\rlap{$#1#2$}\mkern2mu{#1#2}}}
\theoremstyle{plain}
\newtheorem{theorem}{Theorem}
\newtheorem{corollary}{Corollary}
\newtheorem{lemma}{Lemma}
\theoremstyle{definition}
\newtheorem{definition}{Definition}
\newtheorem{assumption}{Assumption}
\newtheorem{example}{Example}
\theoremstyle{remark}
\begin{document}
\allowdisplaybreaks
\title{Adaptive estimation for some nonparametric instrumental variable models}% with full independence}
\author{Fabian Dunker\footnote{\noindent School of Mathematics and Statistics, University of Canterbury, Private Bag 4800, Christchurch 8140, New Zealand, \textit{Email:} fabian.dunker@canterbury.ac.nz}\\
University of Canterbury}

%Financial support by the DFG in the SFB 823 (projects C1 and C4) and by the Ministry of Education and Cultural Affairs of Lower Saxony in the project Reducing Poverty Risk is gratefully acknowledged.
\date{}
\maketitle

\parindent0pt\textbf{Abstract:} 
The problem of endogeneity in statistics and econometrics is often handled by introducing instrumental variables (IV) which fulfill the mean independence assumption, i.e. the unobservable is mean independent of the instruments. When full independence of IV's and the unobservable is assumed, nonparametric IV regression models and nonparametric demand models lead to nonlinear integral equations with unknown integral kernels. We prove convergence rates for the mean integrated square error of the iteratively regularized Newton method applied to these problems. Compared to related results we derive stronger convergence results that rely on weaker nonlinearity restrictions. We demonstrate in numerical simulations  for a nonparametric IV regression that the method produces better results than the standard model.

\vskip .2in
\noindent {\sl MSC: AMS 2010 subject classification.} primary 62G08, secondary 62G20\\
%\noindent {\sl JEL classification:} C13, C14, C31, C36\\
\noindent {\sl Keywords and phrases:}  Nonparametric regression, instrumental variables, nonlinear inverse problems, regularization%\\
%\noindent {\sl Short title:} Statistic for nonlinear operator equations.
%\parindent15pt
%\newpage

\onehalfspacing
\parindent8pt

\section{Introduction}

Dependence of an unobservable error term and covariates is a frequent problem in statistical and econometrical modeling known as endogeneity. An efficient way to deal with
endogeneity is to use instrumental variables (IV) in the estimation. These are additional variables which can assumed to be independent or mean independent of the unobservable. In the context of nonparametric estimation the IV approach usually leads to ill-posed problems with an unknown operator that needs to be estimated. The solution $\varphi$ of the nonparametric IV problem can be characterized by a possibly nonlinear operator equation
\begin{align}\label{eq:opeq}
\Op (\varphi) = \psi.
\end{align}
In some regression models $\psi = 0$, in others $\psi$ is a function that has to be estimated from observations by some estimator $\widehat{\psi}$. The operator $\Op: \X \to \Y$
is an integral operator between some Banach or Hilbert spaces $\X$ and $\Y$ which is unknown in applications. Only an estimator $\Ophat$ is available.
The inverse of the operators $\Op$ or $\Ophat$ is usually not continuous. Even with an arbitrarily small variance in $\widehat\psi$ and $\widehat\Op$ we usually
have $\Var(\|\Ophat^{-1}\widehat\psi\|_\X) = \infty$ and the straightforward estimator $\widehat\varphi = \Ophat^{-1}\widehat\psi$ is typically inconsistent.
We discuss specific examples for nonparametric IV models and the related operators together with the respective literature in Section \ref{sec:application}.

In this paper we describe and analyze a consistent estimator for this type of problem, when $\Op$ is an operator between Hilbert spaces. The estimator is based on the iteratively regularized Gau{\ss}-Newton method (IRGNM) with iterated Tikhonov regularization defined below in \eqref{eq:it_tik}. Details about the method will be given in Section \ref{sec:estimation}.

This method was suggested by \cite{Baku:92}. Important monographs on this topic are \cite{BK:04b} and \cite{KNS:08}. These contributions consider only problems with known operators and deterministic right hand side in equation \eqref{eq:opeq}. The use of IRGNM for nonparametric IV problems was proposed and analyzed by \cite{DFHJM:14}. They derived rates for convergence in probability with a priori parameter choice using variational methods. 

The novelty of this paper is that we prove significantly faster convergence rates for the mean integrated squared error (MISE) rather than convergence in probability under a different set of assumptions. In addition, we propose adaptive estimation with Lepski\u\i's principle and prove rates for this case. 
Furthermore, we assume a significantly weaker nonlinearity condition for the operator $\Op$ which has a clear interpretation and is reasonable for most applications while the nonlinearity condition in \cite{DFHJM:14} is difficult to interpret and to check. We also prove faster rates of convergence when the regression function is smooth enough. Our proofs do not use variational methods. Instead we rely on spectral methods as in \cite{BauHohMun:09}. We also use a modification of Hoeffding's inequality from \cite{McD:89}. 

%In the framework of \cite{DFHJM:14} the nonlinearity is constrained by the so called tangential cone condition. This condition is hard to check and difficult to interpret. It is unkonw if it can be reduced to simpler conditions in the context of  nonparametric IV. In this paper a Lipschitz condition \eqref{eq:lipschitz} restricts the nonlinearity of $\Op$. We give an easy interpretation of this condition.
%
%\cite{DFHJM:14} derived rates in terms of $\delta := \|\Op(\phidag) - \Ophat(\phidag)\|_\Y$ where $\phidag$ is the exact solution to \eqref{eq:opeq}. In that framework
%no convergence faster than $\Or_p(\delta^{-3/4})$ was proven even if $\phidag$ is arbitrarily smooth. In this paper we show the rate $\delta^{-3/4}$ under similar smoothness conditions and provide results for higher rates for the risk when $\phidag$ is smooth enough. Hence, results in this paper are complementary 
%to results in \cite{DFHJM:14}. In addition, to rates in $\delta$ we derive rates in the sample size $n$ and analyze adaptive estimation. 

The paper is organized as follows. We discuss in Section \ref{sec:application} some IV models which fit into the framework of this paper and explain the estimator. The estimator is introduced in Section \ref{sec:estimation}. %Section \ref{sec:error} contains details about the error analysis of the Gauss-Newton method applied to IV models. Using this analysis we give 
Section \ref{sec:convergence} contains convergence rate theorems. Finally, we present some numerical simulations in Section \ref{sec:numerics}. All proofs are in the Appendix.% \ref{sec:appenix}.

\section{Nonparametric instrumental variable models}\label{sec:application}
In our general framework a function $\phidag$ is characterized by the possibly nonlinear operator equation
\begin{align}\label{eq:opeq0}
\Op(\phidag) = 0,
\end{align}
i.e. $\phidag$ is the true solution. Here $\Op :B_{2R}(\varphi_0) \subseteq \X \rightarrow \Y$ is an operator between Hilbert spaces with norms $\|\cdot\|_\X$ and $\|\cdot\|_\Y$ respectively. Tyical examples for $\X$ and $\Y$ are $L^2$ and $L^2$ based Sobolev spaces $H^i$ for $i=1,2,\ldots$. A ball $B_{2R}(\varphi_0)$ with radius $2R$ around an initial guess $\varphi_0$ is contained in the domain of $\Op$. In practice, large values of $R$ are possible. The operator equation is allowed to be ill-posed, i.e. $\Op^{-1}$ may not be continuous. Furthermore, the operator $\Op$ is not known in applications. Only a series of estimators $\Ophat_n :B_{2R}(\varphi_0) \subseteq \X \rightarrow \Y$ are available where $n$ denotes the sample size. We assume that $\phidag$ is a unique solution to \eqref{eq:opeq0} in $B_{2R}$, i.e. the problem is locally identified. In the following we discuss econometric examples for this setup.

\subsection{Nonparametric IV regression}

\paragraph{Mean independence}
%The first econometric model to combined nonparametric specification and instrumental variables was a nonparametric regression model with separable error term $U$ and mean independent instruments $Z$
The simplest nonparametric IV regression model has a separable error term and a mean independence condition
\begin{align}\label{eq:iv_reg_mean_independence}
Y = \varphi(X) + U \qquad \mbox{with } \Ex[U|Z] = 0.
\end{align}
%Here and in all following models $Y$ and $U$ are univariate random variables, while $X$ and $Z$ can be multivariate. The dimensions of $X$ and $Z$ do not have to coincide.
Here and in all following models $Y$ and $U$ are univariate random variables, while $X$ and $Z$ can be multivariate and their dimensions do not have to coincide. The regressor $X$, the instrument $Z$ and the response $Y$ are observed, while the error term $U$ is unobservable. 

This model was proposed by \cite{NewPow:03} and \cite{florens:03}. It was further studied and applied in \cite{HalHor:05}, \cite{BluCheKri:07}, \cite{CheRei:11}, \cite{DFFR:11} \cite{FJV:11}, \cite{Hor:11}, \cite{JVV:11}, \cite{GS:12_reg}, \cite{ChePou:12}, \cite{Horowitz14}, \cite{ChenChrist:15}, \cite{BreJoh:15}, \cite{ChenChrist:18}, as well as \cite{Babii:20} among others. For an overview see \cite{Hor:14_survey}.

We can write \eqref{eq:iv_reg_mean_independence} equivalently as $\Ex[\varphi(X)|Z] = \Ex[Y|Z]$ and if the conditional densities $f_{X|Z}$ and $f_{Y|Z}$ exist, as
\begin{align}\label{eq:iv_reg_lin_op}
\int f_{X|Z}(x|z) \varphi(x) dx =  \int f_{Y|Z}(y|z)dy \qquad \mbox{for all } z \in \supp(Z).
\end{align}
%The conditional expectation operator is given as 
We define the linear integral operator $(\Op_{ce}\varphi)(z) := \int f_{X|Z}(x|z) \varphi(x) dx$
%\[
%(\Op_{ce}\varphi)(z) := \int f_{X|Z}(x|z) \varphi(x) dx
%\]
%It is a linear integral operator 
with integral kernel $f_{X|Z}(x|z)$ and the function %. We denote the right hand side of the equation by 
$\psi(z) := \int y f_{Y|Z}(y|z)dy$. Model \eqref{eq:iv_reg_mean_independence} can be given in operator form $(\Op_{ce}\varphi)(z) = \psi(z)$. 
%\begin{align}\label{eq:iv_reg_lin_op}
%(\Op_{ce}\varphi)(z) = \psi(z).
%\end{align}
The integral kernel $f_{X|Z}$ and thereby $\Op_{ce}$ as well as the function $\psi$ are unknown and have to be estimated from a sample of $Y,X,Z$. An Density estimators $\widehat f_{X|Z}$ and $\widehat f_{Y|Z}$ give estimators $\widehat\Op_{ce}$ and $\widehat \psi$ in a natural way.
While the main focus of this paper is on nonlinear operator equations, 
%Hence, we are not particularly interested in this problem. However, for a linear operator like $\Op_{ce}$ the IRGNM with iterated Tikhonov regularization defined in \eqref{eq:it_tik} reduces to the usual iterated Tikhonov regularization
%\begin{align*}
%\overline{\varphi}_{i+1} := \argmin\limits_{\varphi \in \X} \left(\|\Ophat_{ce}(\varphi) - \widehat\psi\|_\Y^2 + \alpha \|\varphi-\overline{\varphi}_i\|_\X^2 \right).
%\end{align*}
%This iteration starts at some initial guess $\overline{\varphi}_{0}$ and is stopped when $i = m$ for some number $m$.
%In this sense our results for the IRGNM hold for \eqref{eq:iv_reg_mean_independence} with iterated Tikhonov regularization as well. Tikhonov regularization for model \eqref{eq:iv_reg_mean_independence} is well understood and our results for the IRGNM applied to this simpler case are not new. 
we use model \eqref{eq:iv_reg_mean_independence} %together with iterated Tikhonov regularization only 
as a benchmark for the IRGNM applied to model
\eqref{eq:iv_reg_full_independence} below.

%The operator formulation \eqref{eq:iv_reg_lin_op} gives rise to an identification analysis of model \eqref{eq:iv_reg_mean_independence}. 
The model identifies the regression function 
$\varphi$ if and only if $\Op_{ce}$ is injective. This property is called completeness, see \cite{Hault:11}, \cite{HF:15}, \cite{Andrews:17}, and \cite{BF:20}.

\paragraph{Full independence}\label{sec:full_ind}
%In practical applications econometricians claim that some variable $Z$ is a good instrument when they have compelling reasons that $Z$ is independent of the unobservable $U$. It is usually not considered that $Z$ might be mean independent but not fully independent of $U$. This motivates the regression model
In many applications the error term can be assumed to be independent of the instrument. Hence, mean independence of the instrument can be replace by full independence as proposed in \cite{DFHJM:14}
\begin{align}\label{eq:iv_reg_full_independence}
Y = \varphi(X) + U \qquad \mbox{with } U \independent Z \mbox{ and } \Ex[U] = 0.
\end{align}
%Here $\Ex[U|Z] = 0$ is replaced by full independence $U \independent Z$ and $\Ex[U] = 0$. This model was proposed in \cite{DFHJM:14}. 
Since the new assumptions $U \independent Z$ and $\Ex[U] = 0$ imply $\Ex[U|Z] = 0$ but not vice versa model \eqref{eq:iv_reg_full_independence} makes stronger assumptions than model \eqref{eq:iv_reg_mean_independence}.
%Thereby, it uses more information than model \eqref{eq:iv_reg_mean_independence}. 
Consequently, whenever \eqref{eq:iv_reg_mean_independence} identifies
the solution so does \eqref{eq:iv_reg_full_independence}. Furthermore, there are cases in which \eqref{eq:iv_reg_full_independence} can identify a solution, while
\eqref{eq:iv_reg_mean_independence} fails. This is for example the case with discrete instruments and continuous regressors as discussed in 
\cite{DFHJM:14}, \cite{Torgo}, \cite{HF:15}, \cite{CFF:19}, and \cite{Loh:19}.

We can translate model \eqref{eq:iv_reg_full_independence} into an operator equation by defining the operator 
\begin{align}\label{eq:nonlininstreg:opeq1}
(\widetilde\Op_{ind}(\varphi))(u,z) := \left(%
\begin{array}{c}
\Prob[Y - \varphi(X) \le u] - \Prob[Y - \varphi(X) \le u| Z=z] \\ 
\Ex[Y - \varphi(X)]%
\end{array}%
\right).
\end{align}
When $Y,X,Z$ have a joint density $f_{YXZ}$, taking the derivative with respect to $u$ yields the alternative operator
\begin{equation}\label{eq:nonlininstreg:opeq2}
(\Op_{ind}(\varphi))(u,z) := \left(\begin{array}{c}
\int f_{YXZ}(u+\varphi(x),x,z) - f_{YX}(u+\varphi(x),x) f_Z(z)\,dx\\
\int \varphi(x)f_X(x)\,dx-\int y f_Y(y)\,dy
\end{array}\right).
\end{equation}
Model \eqref{eq:iv_reg_full_independence} is equivalent to the operator equations $\widetilde\Op_{ind}(\varphi) = 0$ or $\Op_{ind}(\varphi) = 0$.
%\[
%\widetilde\Op_{ind}(\varphi) = 0\qquad \mbox{or} \qquad \Op_{ind}(\varphi) = 0.
%\]
Note that the operators are nonlinear due to the first line of \eqref{eq:nonlininstreg:opeq1} or \eqref{eq:nonlininstreg:opeq2}. Furthermore, the operators are not known and have to be estimated. A density estimator 
$\widehat f_{YXZ}$ gives a straight forward estimator $\widehat \Op_{ind}$.

For any $\varphi$ that sets the first line of the operator $\Op_{ind}$ to $0$ also $c+ \varphi$ with $c\in\R$ sets it to $0$. In addition, for any $\varphi$, the second line of $\Op_{ind}$ is set to $0$ by $\varphi -\Ex[Y-\varphi(X)]$. Hence, for any solution $\varphi$ of the first line of the operator we have $\Op_{ind}\big(\varphi-\Ex[Y-\varphi(X)]\big)=0$. The nonlinear inverse problem is to find a $\varphi$ that solves the first line of $\Op_{ind}$. The second line is a parametric problem that can be estimated with the parametric rate.
%For a given $\varphi$ the term $\Ex[Y-\varphi(X)]$ can be estimated by some empirical counterpart such as $\overline{Y_i} - \overline{\varphi(X_i)}$ which is $\sqrt{n}$-consistent. 
When we discuss this example below we will only consider the first line of the operator as this is dominating the convergence rate. %and estimate $\varphi$ up to an additive constant. Due to the Tikhonov regularization in the IRGNM, our method will converge to a minimum norm solution with respect to $\|\cdot\|_\X$. Hence, for the $L^2$ norm or $L^2$ based Sobolev norms the additive constant will chosen such that $\int\varphi(x)dx = 0$. 

Let us denote the integral kernel of the first line of the operator \eqref{eq:nonlininstreg:opeq2} and its estimator by
$k_{ind}(y,x,z) := f_{YXZ}(y,x,z) - f_{YX}(y,x) f_Z(z)$ and $\widehat k_{ind}(y,x,z) := \widehat f_{YXZ}(y,x,z) - \widehat f_{YX}(y,x) \widehat f_Z(z)$
%
%\begin{align*}
%&k_{ind}(y,x,z) := f_{YXZ}(y,x,z) - f_{YX}(y,x) f_Z(z) \mbox{ and}\\
%&\widehat k_{ind}(y,x,z) := \widehat f_{YXZ}(y,x,z) - \widehat f_{YX}(y,x) \widehat f_Z(z)
%\end{align*}
respectively. Then the first component of the operator reads $(\Op_{ind}(\varphi))(u,z) = \int k_{ind}(u+\varphi(x),x,z)dx$. %We will express some properties of $\Op_{ind}$ and $\widehat \Op_{ind}$ in terms of $k_{ind}$ and $\widehat k_{ind}$.

\subsection{Quantile regression and non-separable models}

\paragraph{Nonparametric IV quantile regression}
Another model that leads to a different nonlinear operator equation
%The first example 
is nonparametric IV quantile regression proposed by \cite{HorLee:07}. For $q \in [0,1]$ the $q$-th quantile regression function $\varphi_q$ is characterized by
\begin{align}\label{eq:iv_quant_reg}
Y= \varphi_q(X) +U \qquad \Prob(U\leq 0|Z=z) = q \qquad \mbox{for all }z.
\end{align}
If the joint density $f_{YXZ}$ exists, the model is equivalent to an operator equation $\Op_q(\varphi_q)=0$ with
\begin{equation}\label{eq:iv_quant_reg_op}
(\Op_q(\varphi))(z) := \int F_{YXZ}(\varphi(x),x,z)\,dx- q f_Z(z)
\end{equation}
where $F_{YXZ}(y,x,z):=\int_{-\infty}^y f_{YXZ}(\tilde{y},x,z)\,d\tilde{y}$. Different estimation procedures for this model were proposed and analyzed in
\cite{HorLee:07}, \cite{ChePou:12}, \cite{GS:12_quant} \cite{DFHJM:14}, and \cite{Breunig:15}. Local identification properties of this and related models are discussed in \cite{CCLN:14}.

%In order to get a consistent notation with Section \ref{sec:full_ind} we define the integral kernel
We can write $(\Op_q(\varphi))(z) = \int k_q(\varphi(x),x,z)dx$ with integral kernel
\[
k_q(y,x,z) := F_{YXZ}(y,x,z) - q f_{XZ}(x,z).
\]
Replacing $q f_Z(z)$ by $\int q f_{XZ}(x,z)dx$ is impractical in applications but makes it easier to discuss properties of \eqref{eq:iv_quant_reg_op} in this paper. $\Op_q$ and $k_q$ are unknown and have to be estimated. If we plug-in a density estimator $\widehat f_{YXZ}$, we get straight forward estimators $\widehat k_q$ and $\widehat{\Op}_q$.

%and denote an estimate by Replacing $q f_Z(z)$ by $\int q f_{XZ}(x,z)dx$ is impractical in applications but makes it easier to discuss properties of \eqref{eq:nonlininstreg:opeq2} and \eqref{eq:iv_quant_reg_op} in an unified framework. Then $(\Op_q(\varphi))(z) = \int k_q(\varphi(x),x,z)dx$. Replacing $q f_Z(z)$ by $\int q f_{XZ}(x,z)dx$ is impractical in applications but makes it easier to discuss properties of \eqref{eq:nonlininstreg:opeq2} and \eqref{eq:iv_quant_reg_op} in an unified framework.

\paragraph{Non-separable model}
%The forgoing models have in common that the unobservable $U$ is an additive error. 
A related example that falls in our framework is nonparametric IV regression with unseparable error, wich was proposed in
\cite{CheImbNew:07}. See also \cite{CheHan:05}. The model is
\begin{align}\label{eq:iv_nonsep_reg}
\begin{split}
Y=\phi(X,U) \qquad \mbox{with } &U \independent Z \mbox{ and}\\
&\phi(x,u) \mbox{ strictly monotonic increasing in } u.
\end{split}
\end{align}
It was pointed out in \cite{HorLee:07}, and \cite{CheImbNew:07} that this model is already
contained in model \eqref{eq:iv_quant_reg}. Let $F_U$ be the cumulative distribution function of $U$. Normalize $\widetilde U:= F_U(U)$ and 
$\widetilde\phi(x, \tilde u) := \phi(x,F_U^{-1}(\tilde u))$. Then $\widetilde U$ is uniformly distributed on $[0,1]$. The value of $\widetilde U$ corresponds to a quantile in model 
\eqref{eq:iv_quant_reg}. This reduces \eqref{eq:iv_nonsep_reg} to model \eqref{eq:iv_quant_reg} with $\varphi_q(x) = \widetilde\phi(x,q)$.

\paragraph{Further examples}
We briefly comment on further econometric models that fall into the framwork of this paper. A problem that has a similar mathematical structure as IV regression appears in some nonparametric demand models for differentiated products. It was considered with mean independence assumption similar to \eqref{eq:iv_reg_mean_independence} in \cite{BH:11}, \cite{BH:14} and with full independence similar to \eqref{eq:iv_reg_full_independence} in \cite{DHK:14}. Some models for games of incomplete information lead to a nonlinear inverse problem with deterministic operator, see for example \cite{FS:10}. Nonlinear inverse problems with deterministic operators also occur in functional linear quantile regression (without instrumental variables) as in \cite{Kato:12}. The estimator in this paper can be applied to these type of problems. However, the error analysis would be different since there is no randomness in the operator.
Also related are nonparametric ARCH($\infty$) models which can be treated as linear inverse problem, see \cite{LM:05}. Further linear inverse problems in econometrics are discussed in \cite{CFR:07}.

%\paragraph{Demand models}
%We would like to mention a related application in industrial organization. Some recent approaches to model demand in differentiated product markets nonparametrically lead to similar
%operator equations as above. These models characterize a structural function 
%that explains the demand for every product in the market by observed (and quantified) properties of the products. Similarities to IV regression appear when  $Y$ corresponds to a specific 
%observed characteristic of products. $X$ corresponds to a vector of other  observed product characteristics including the price and the demand for products itself. $U$ is an unobserved product characteristic. 
%In this interpretation the function $\varphi$ is not a regression function. Instead, it is the demand function inverted in the unobservable $U$. Nevertheless, the mathematical structure of this problem is the same as in a regression models above.
%The price and the unobservable $U$ are usually dependent. This endogeneity can be treated with instrumental variables. An approach with mean independence similar to \eqref{eq:iv_reg_mean_independence} was
%proposed in \cite{BH:11} and \cite{BH:14}. Whereas, \cite{DHK:14} assume full independence and propose a model similar to \eqref{eq:iv_reg_full_independence}.
% For more information on this application we additionally refer to \cite{BLP:95}.

\section{Estimation}\label{sec:estimation}

\subsection{The estimator}
%In this section we introduce an abstract setup which comprises the examples above together with an estimator based on the iteratively regularized Gau{\ss} Newton-method. For this 

%We work in the general setup where a function $\phidag$ is characterized by the possibly nonlinear operator equation
%\begin{align}\label{eq:opeq0}
%\Op(\phidag) = 0.
%\end{align}
%Here $\Op :B_{2R}(\varphi_0) \subseteq \X \rightarrow \Y$ is an operator between Hilbert spaces with norms $\|\cdot\|_\X$ and $\|\cdot\|_\Y$ respectively. A ball $B_{2R}(\varphi_0)$ with radius $2R$ around an initial guess $\varphi_0$ is contained in the domain of $\Op$. In practice, large values of $R$ are possible. The operator equation is allowed to be ill-posed, i.e. $\Op^{-1}$ may not be continuous. Furthermore, the operator $\Op$ is not known in applications. Only a series of estimators $\Ophat_n :B_{2R}(\varphi_0) \subseteq \X \rightarrow \Y$ are available where $n$ denotes the sample size.

%Only a series of estimators $\Ophat_n :B_{2R}(\varphi_0) \subseteq \X \rightarrow \Y_n$ are available where $n$ denotes the sample size. We can allow for image spaces $\Y_n$ that depend on the estimator and might not be contained in $\Y$. This could be a finite dimensional approximation space. Although, this is usually not the case in our applications.
Remember that $\phidag$ denotes the true solution and let $\varphi_0$ be an initial guess.
Our method is based on linearizing $\Op$ which motivates the following assumption.
%\begin{assumption}\label{ass:dif}
%$\|\phidag - \varphi_0\|_\X < R$ and $\Op$ and all $\Ophat_n$ are Fr\'echet differentiable on $B_{2R}(\varphi_0)$.
%\end{assumption}

\begin{assumption}\label{ass:dif}
\begin{enumerate}
\item $\|\phidag - \varphi_0\|_\X < R$
%\item $\Op$ and all $\Ophat_n$ are well defined on $B_{2R}(\varphi_0)$.
\item $\Op$ and all $\Ophat_n$ are Fr\'echet differentiable on $B_{2R}(\varphi_0)$ with Fr\'echet derivatives $\Op'$ and $\Ophat_n'$ respectively.
\end{enumerate}
\end{assumption}

The iteratively regularized Gau\ss-Newton method with iterated Tikhonov regularization consists of two nested iterations. The outer iteration is a Newton method. It starts at $\varphi_0$ and produces in the $j$-th step the estimate $\phihat_{j+1}$. In the $j$-th step the operator is linearized as $\Ophat(\varphi) \approx \Ophat_n'[\phihat_j](\varphi - \phihat_j) + \Ophat_n(\phihat_j)$. A regular Newton method would invert the linear operator $\Ophat_n'[\phihat_j]$ to compute the next step. Due to the ill-posedness, this would be unstable and we use a regularized inverse instead. The regularized inverse is computed by $m$-times iterated Tikhonov regularization which is the inner iteration of the method. In the following scheme the Newton iteration is indexed by $j$ and the Tikhonov iteration by $i$, and $\alpha_j > 0$ is a regularization parameter
\begin{equation}\label{eq:it_tik}
\begin{alignedat}{1}
&\text{for } j = 0 \text{ to } J\\
&\quad \overline\varphi_{j+1,0} :=\, \phihat_j\\
&\quad\text{for } i = 0 \text{ to } m-1\\
&\quad\quad\overline\varphi_{j+1,i+1} := \argmin\limits_{\varphi \in \X} \left(\|\Ophat_n'[\phihat_j](\varphi - \phihat_j) + \Ophat_n(\phihat_j)\|_{\Y}^2 + \alpha_j \|\varphi - \overline\varphi_{j+1,i}\|_\X^2 \right)\\
&\quad \text{end}\\
&\quad\widehat \varphi_{j+1} :=\, \overline\varphi_{j+1,m}\\
&\quad\mbox{stop if } \|\phihat_j-\varphi_0 \|_\X > 2R \mbox{ and set } \phihat_{j+1} = \varphi_0\\
&\text{end}.
\end{alignedat} 
\end{equation}
As usual for Newton methods, convergence can fail if the initial guess $\widehat\varphi_0$ is too far from the true solution $\varphi^\dag$. In practice and in simulations the method proves to be quite robust to the choice of $\widehat\varphi_0$. If no a priori information about $\varphi^\dag$ is available, $\widehat{\varphi}_0 = 0$ is usually a good choice.

With a small $\alpha_j$ the method has a large variance due to the ill-posedness of $\Op$. While a larger $\alpha_j$ controls the variance but adds some bias. We choose $\alpha_0$ large enough to stabilize the problem and let $\alpha_j$ decay in every Newton step by 
\begin{align}\label{eq:alpha}
\alpha_{j+1} = q_\alpha \alpha_j \quad \mbox{with some fixed} \quad 0<q_\alpha<1
\end{align}
to reduce the bias. A second parameter that has to be chosen is the number of inner iterations $m$. A large $m$ is of advantage  for very smooth $\phidag$. We will address the choice of $\alpha_0$ and $m$ in Section \ref{sec:spec_source} and Assumption \ref{ass:saturation}. The Newton iteration needs to be stopped at an appropriate iteration step. The size of the regularization parameter is linked to the number of steps. Hence, the number of steps corresponds to a bias variance trade-off. We will investigate parameter choice with a priori knowledge in Section \ref{sec:rates} and fully data driven in Section \ref{sec:Lepski}.

We introduce the following notations for shorter formulas
\[
\Tdag := \Op'[\phidag] \qquad \That_{n,j} := \Ophat_n'[\phihat_j] \qquad \Tdaghat := \Ophat'_n[\phidag].
\]
An alternative formulation of the method can be obtained by using the functional calculus. Let $\That_{n,j}^*$ denote the adjoint operator of $\That_{n,j}$
%$\Ophat_n'[\phihat_j]^*$ denote the $L^2$-adjoint of $\Ophat_n'[\phihat_j]$ 
and set 
\begin{equation}\label{eq:g_alpha}
g_\alpha(\lambda) := \frac {(\lambda + \alpha)^m - \alpha^m}{\lambda(\lambda + \alpha)^m}~. 
\end{equation}
Then %Lemma \ref{min_it_tik} with $\psi = \Ophat_n'[\phihat_j]\phihat_j - \Ophat_n(\phihat_j)$ shows that 
\eqref{eq:it_tik} is equivalent to
\begin{equation}\label{eq:irgnm_spektral}
\begin{split}
&\text{for } j = 0 \text{ to } J\\
%&\text{While } \left\| g_{\alpha_j}(\That_{n,j}^* \That_{n,j})\That_{n,j}^* \\left(\That_{n,j}(\phihat_j-\varphi_0) - \Ophat_n(\phihat_j) \right) \right\| < 2R\\
&\quad\phihat_{j+1} = \varphi_0 + g_{\alpha_j}(\That_{n,j}^* \That_{n,j})\That_{n,j}^* \left(\That_{n,j}(\phihat_j-\varphi_0) - \Ophat_n(\phihat_j) \right)\\
&\quad\mbox{stop if } \|\phihat_j-\varphi_0 \|_\X > 2R \mbox{ and set } \phihat_{j+1} = \varphi_0\\
&\text{end}.
\end{split}
\end{equation}

\begin{example}[Fr\'echet differentiability]\label{ex:frechet}
Assumption \ref{ass:dif} is usually fulfilled in our examples. The operators are well defined and Fr\'echet differentiable on the whole space under mild integrability conditions on the joint density $f_{YXZ}$. The Fr\'echet derivative of the operator in \eqref{eq:nonlininstreg:opeq2} exists when $f_{YXZ}$ is partially differentiable in the first variable. The operator in \eqref{eq:iv_quant_reg_op} is differentiable without further assumptions.
\begin{align*}
(\Op'_{ind}[\varphi]\psi)(u,z) &= \left(\begin{array}{c}
\int \big[ \frac{\partial}{\partial y}f_{YXZ}(u+\varphi(x),x,z) - \frac{\partial}{\partial y}f_{YX}(u+\varphi(x),x) f_Z(z)\big] \psi(x)\,dx\\
\int \psi(x) f_X(x)\,dx
\end{array}\right),\\
(\Op'_q[\varphi](\psi))(z) &= \int f_{YXZ}(\varphi(x),x,z)\psi(x)\,dx.
\end{align*}
Note that the derivatives are linear integral operators with kernel $\frac{\partial}{\partial y}k(\varphi(x),x,z)$.
\end{example}

\section{Convergence Rates}\label{sec:convergence}

The convergence theory is presented in four steps. We start by introducing assumptions for the general operator equation \eqref{eq:opeq0} as well as for the IV regression models \eqref{eq:nonlininstreg:opeq2} and \eqref{eq:iv_quant_reg_op}. Afterwards, we state convergence rate result for the MISE with a priori choice of the stopping parameter $j$. Then we compare the result to \cite{HorLee:07}. The last step is a theorem with data driven choice of $j$ by Lepski\u\i's principle.

%In order to abbreviate formulas in the rest of the paper we introduce the following notations
%\[
%\Tdag := \Op'[\phidag] \qquad \That_{n,j} := \Ophat_n'[\phihat_j] \qquad \Tdaghat := \Ophat'_n[\phidag].
%\]
%This brings the iteration scheme in a more compact form
%\begin{equation}\label{eq:irgnm_spektral}
%\begin{split}
%%&\text{While } \left\| g_{\alpha_j}(\That_{n,j}^* \That_{n,j})\That_{n,j}^* \left(\That_{n,j}(\phihat_j-\varphi_0) - \Ophat_n(\phihat_j) \right) \right\| < 2R\\
%&\phihat_{j+1} = \varphi_0 + g_{\alpha_j}(\That_{n,j}^* \That_{n,j})\That_{n,j}^* \left(\That_{n,j}(\phihat_j-\varphi_0) - \Ophat_n(\phihat_j) \right)\\
%&\mbox{stop if } \|\phihat_j-\varphi_0 \|_\X > 2R \mbox{ and set } \phihat_{j+1} = \varphi_0.
%\end{split}
%\end{equation}

\subsection{Assumptions}

\subsubsection{Smoothness condition}\label{sec:spec_source}

As usual for nonparametric methods a smoothness assumption has to be imposed on the true solution $\phidag$ to get convergence rates. In our setup with an ill-posed operator equation \eqref{eq:opeq} it is necessary to link the smoothness of $\phidag$ to the smoothing properties of the operator $\Op$. An efficient and popular way to formulate this is a source condition. The following definition uses the functional calculus.
\begin{definition}\label{def:spec_sc}
Let $\Lambda : [0 , \infty) \; \to [0 , \infty)$ be continuous, strictly increasing with $\Lambda(0) = 0$. A representation of the initial error as
\begin{equation}\label{eq:spec_sc}
\varphi_0 - \phidag = \Lambda(T^*_\dag T_\dag)\omega\;, \qquad \omega \in \X \text{ with } \rho:=\|\omega\|_\X
\end{equation}
is called a spectral source condition and $\Lambda$ is called an index function.
\end{definition}
When $\Tdag$ is a linear integral operator with kernel $\frac{\partial}{\partial y}k(\phidag(x),x,z)$ as in Example \ref{ex:frechet}, this definition can be interpreted in the following way. We assume for simplicity that $\Tdag$ is compact which is for example the case if $\frac{\partial}{\partial y}k(\phidag(x),x,z)$ is continuous. It was shown in \cite{reade841} and \cite{reade842} that the singular values of such an operator decay at least polynomially if $\frac{\partial}{\partial y}k(\phidag(x),x,z)$ belongs to a Sobolev space, and exponentially if $\frac{\partial}{\partial y}k(\phidag(x),x,z)$ is analytic. 
%This result can be generalized with the application of some results from approximation theory. The singular values decay at least super-algebraically, if $k$ is arbitrary smooth. (See \cite{PS:91}).

Let $(\sigma_t,\, u_t,\, v_t)$ be a singular system for $\Tdag$. The source condition \eqref{eq:spec_sc} implies for $e_0 = \varphi_0 - \phidag$
\[
\omega = \sum_{t\in \N}\frac{\langle e_0,\, v_t \rangle}{\Lambda(\sigma_t^2)} u_t \in \X \qquad \mbox{and thereby} \qquad
\sum_{t=1}^\infty \left(\frac{\langle e_0,\, v_t \rangle}{\Lambda(\sigma_t^2)}\right)^2 < \infty.
\]
Hence, a $\omega$ fulfilling \eqref{eq:spec_sc} only exists if $\Lambda$ compensates the decay of the singular values in a way that $\langle e_0,\, v_t \rangle\Lambda(\sigma_t^2)^{-1}$ is square summable. The decay of singular values describes the smoothing properties of the $\Tdag$ with respect to the singular vectors. While the decay of $\langle e_0,\, v_t \rangle$ describes the smoothness of $e_0$ with respect to the singular vectors. Thus, the rate of decay for $\Lambda(x)$ when $x \searrow 0$ compares these two degrees of smoothness. For the examples above the source condition compares the smoothness of $f_{YXZ}$ with the smoothness of the regression function $\phidag$.

When $\sigma_t$ and $\langle e_0,\, v_t \rangle$ both decay polynomially or both decay exponentially, i.e. $\sigma_t \lesssim \exp(-c_\sigma t)$ and $\langle e_0,\, v_t \rangle \lesssim \exp(-c_{e_0} t)$ with some constants $c_\sigma$ and $c_{e_0}$, the source condition is fulfilled with $\Lambda(x) = x^\mu$. Where $\mu >0$ is a sufficiently small constant. A source condition with polynomial $\Lambda$ is called a H\"older source condition, which is a concept that goes back to \cite{lavrentev62} and \cite{morozov68}. 
For exponential decay of $\sigma_t$ but only polynomial decay of $\langle e_0,\, v_t \rangle$ the source condition holds when the operator is rescaled to $\|\Tdag\| < 1$ and $\Lambda(x) = (-\ln(x))^{-p}$ with some $0<p$. %In this case the smoothing properties of $\Tdag$ are much stronger than the smoothness of $e_0$. 
This choice of $\Lambda$ was proposed by \cite{mair94} and \cite{hohage:97} and 
is called logarithmic source condition.

% \[
% e_0 = \sum_{t=1}^\infty \Lambda(\sigma_t^2)\langle w, u_t\rangle_\X ~ v_t .
% \]
% Let us start by thinking of $e_0$ as an arbitrary element in $\X$. A $w$ fulfilling \eqref{eq:spec_sc} only exists if $\Lambda$ compensates the decay of the singular values in the sense that $\Lambda(\sigma_t^2)^{-1} = \Or(1)$. For example, when the singular value decay polynomially we can choose $\Lambda(x) = x^\mu$ with a sufficiently small $\mu > 0$. This is called a H\"older source condition. This concept goes back to \cite{lavrentev62} and \cite{morozov68}. For exponential decay the source condition is true when the operator is rescaled to $\|\Tdag\| < 1$ and $\Lambda(x) = (-\ln(x))^{-p}$ with some $0<p$. This is called a logarithmic source condition. It appeared for the first time in \cite{mair94} and \cite{hohage:97}.
% 
% Even when the initial guess was not very good, 
% The source condition does not only depend on the decay of the singular values. It expresses a relation between the operator $\Tdag$ and $e_0$. Hence, it depends on $e_0$ as well. Even if we have exponential decay, a H\"older source condition can still hold true. Only $|\langle e_0,\, v_t \rangle|$ has to decay fast enough, such that
% 
% For $\Lambda(x) = x^\mu$ it is a rather strong condition because it usually means that $e_0$ has to be very smooth. In particular, if the kernel $k$ is analytic, so $e_0$ has to be. 

Despite the word ``condition'' in the name ``source condition'' it is rather a relation that selects an index function. Corollary 2 in \cite{mh08} shows that for any compact injective operator $\Tdag$ and any $e_0$ exists an index function $\Lambda$ such that a source condition is fulfilled.
%\begin{theorem} (\cite{mh08} Corollary 2) 
%Let $H : \X \rightarrow \X$ be a compact, injective, self adjoint, nonnegative linear operator. For every $e_0 \in \X$ there is an $\omega \in X$ and a concave index function $\Lambda$ such that \eqref{eq:spec_sc} is true.
%\end{theorem}
%\begin{proof}
%\cite{mh08} Corollary 2.
%\end{proof}

In this paper we focus on H\"older source conditions with $\mu > 1/2$. Notice that this implies $e_0 \in \text{Range}(\Op'[\phidag]^*)$. The case of $\mu \le 1/2$ and logarithmic source conditions was analyzed in \cite{DFHJM:14}. We make the formal assumption:
\begin{assumption}\label{ass:source}
The true solution $\phidag$ fulfills a source condition \eqref{eq:spec_sc} with sufficiently small $\rho$ and with an index function that satisfies $\Lambda(x) = \mathcal{O}(x^\mu)$ for $x \searrow 0$ with $\mu > 1/2$.
\end{assumption}
%In the theory of series estimators for inverse problems the combination of a smoothness condition on $\phidag$ and a link condition on $\Op$ corresponds to the source condition above. We refer to \cite{ChePou:12} for further discussions.
\begin{example}
For nonparametric IV regression with full independence \eqref{eq:nonlininstreg:opeq2} Assumption \ref{ass:source} implies that $\varphi_0-\phidag$ is in the same smoothness class as
%\begin{align}\label{eq:kernel_deriv_full_ind}
$\frac{\partial}{\partial y}f_{YXZ}(u+\varphi(x),x,z) - \frac{\partial}{\partial y}f_{YX}(u+\varphi(x),x) f_Z(z).$
%\end{align}
For nonparametric IV quantile regression \eqref{eq:iv_quant_reg_op} Assumption \ref{ass:source} implies that $\varphi_0-\phidag$ is in the same smoothness class as
%\begin{align}\label{eq:kernel_deriv_quant_reg}
$f_{YXZ}(\phidag(x),x,z).$
%\end{align} 
%in terms of the singular vectors of $\Op'_{q}[\phidag]^*\Op'_{q}[\phidag]$. Assumption \ref{ass:source} implies that $\varphi_0-\phidag$ is in the same smoothness class as \eqref{eq:kernel_deriv_full_ind} or \eqref{eq:kernel_deriv_quant_reg}. 
The smoother $\varphi_0-\phidag$ the larger is $\mu$.
\end{example}

Closely related to the smoothness of the true solution is the choice of the parameters $\alpha_0$ and $m$ for the IRGNM. In the following assumption $\|T\|_{\calL(\X,\X)} := \sup_{\varphi} \{\|T\varphi\|_\X~| ~\|\varphi\|_\X=1\}$ denotes the usual operator norm for linear operators.
\begin{assumption}\label{ass:saturation}
\begin{enumerate}
\item The number of iterations of the Tikhonov regularization $m$ is larger or equal to $\mu$ in the source conditions $m\ge \mu$, i.e. $\Lambda(x)^{-1}x^{m} = \mathcal{O}(1)$ for $x \searrow 0$.

\item The initial regularization parameter $\alpha_0$ is large enough such that $\alpha_0 \geq \|\Tdaghat^* \Tdaghat\|_{\calL(\X,\X)}/(1-q_\alpha)$.
%\[
%\alpha_0 \geq \frac{\|\Tdaghat^* \Tdaghat\|_{\calL(\X,\X)}}{1-q_\alpha}.
%\]
\end{enumerate}
\end{assumption}
Both parameters need to be large enough but there is not much harm in choosing them larger than necessary. Any $\alpha_0$ and $m$ fulfilling Assumption \ref{ass:saturation} will lead to comparable estimates. However, increasing $\alpha_0$ will lead to a few more Newton steps. The lower bound of $\alpha_0$ depends on the derivative of the estimated operator and is thereby random but not unknown. %We can keep track of the quantity $\|\Tdaghat^* \Tdaghat\|_{\calL(\X,\X)}$.

As usual for nonparametric methods the rate of convergence increases if the true solution $\phidag$ is smoother, i.e. if $\mu$ is larger. But this increase is only realized if $m \ge \mu$. However, $m$ does not act as a regularization parameter. %It is comparable to the role of higher order kernels in kernel estimators, where a higher order kernel is necessary to achieve faster convergence for smoother solutions. If we increase $m$ beyond what is necessary to account for the smoothness of $\phidag$, our estimator will not change significantly. 
Since the inner iteration is numerically cheap it is save to chose a larger value for $m$ without having a significant disadvantage.

\subsubsection{Nonlinearity restriction}
%In order to secure that the linearization of the operator in \eqref{eq:it_tik} is a good approximation the nonlinearity of the operator needs to be restricted.
The non-linearity of $\Op$ needs to be restricted for the algorithm to work. We use a Lipschitz condition on the derivative for this purpose.
\begin{assumption}\label{ass:lipschitz}
There exists $L > 0$ such that 
\begin{equation}\label{eq:lipschitz}
\|\Ophat_n'[\xi_1] - \Ophat_n'[\xi_2]\|_{\calL(\X,\Y)} \leq L \|\xi_1 - \xi_2\|_\X %\text{ for all } \xi_1 , \xi_2 \in B_R(\phidag) \text{ and all sufficiently large } n.
\end{equation}
almost surely for all $\xi_1 , \xi_2 \in B_R(\phidag)$ and large $n$.
\end{assumption}
%The left hand side of the inequality makes use of the operator norm for linear operators 
%\[
%\|T\|_{\calL(\X,\Y)} := \sup_{f \in \X \mbox{ with } \|f\|_\X \le 1} \|Tf\|_\Y.
%\]
The special structure of the of $\Op_{ind}$ and $\Op_q$ allows us to replace Assumption \ref{ass:lipschitz} for the IV regression examples by the following alternative. Lemma \ref{lem:nonlin_restrict} in the appendix shows that Assumption \ref{ass:lipschitz_alt} implies Assumption \ref{ass:lipschitz}.
\begin{assumption}\label{ass:lipschitz_alt}
For the operators \eqref{eq:nonlininstreg:opeq2} and \eqref{eq:iv_quant_reg_op} the integral kernels $k_{ind}
$, $k_q$, and their estimates are twice differentiable with respect to $y$ with bounded derivative and the support of the instrument has finite measure $\mu(\supp(Z)) < \infty$. Furthermore, the integral kernels are estimated by an estimator which is strongly consistent for the second derivative. There exists $L>0$ such that
\[
\mu(\supp(Z))\sup_{y, z, w} \left| \frac{\partial^2}{\partial y^2} k(y, z, w) \right| < L
\] 
with $k = k_{ind}$ or $k=k_q$ respectively.
\end{assumption}

Common nonparametric density estimators are strongly consistent. Assumption \ref{ass:lipschitz_alt} implies for the operators \eqref{eq:nonlininstreg:opeq2} and \eqref{eq:iv_quant_reg_op}
\[
\sup_{y, x, z} \left| \frac{\partial^2}{\partial y^2} f_{YXZ}(y, x, z) \right| < \infty \qquad \mbox{or} \qquad \sup_{y, x, z} \left| \frac{\partial}{\partial y} f_{YXZ}(y, x, z) \right| < \infty
\]
respectively.

\subsubsection{Concentration inequalities}
The estimation error in the operator and its derivative needs to be bounded by exponential inequalities.
\begin{assumption}\label{ass:concentration}
There are constants $c_1, c_2, c_3, c_4 \ge 0$ such that for all $n \in \N$ and all $\tau \ge 0$
\begin{align}\label{eq:concentration_noi}
&\Prob \left\{ \left|\|\Ophat_n(\phidag)\|_\Y - \Ex\|\Ophat_n(\phidag)\|_\Y \right| \ge \sqrt{\tau \Var \left(\|\Ophat_n(\phidag)\|_\Y\right)} \right\} \leq c_1 e^{-c_2\tau}\quad \mbox{and}\\
\begin{split}\label{eq:concentration_der}
&\Prob \Bigg\{\left|\|\Tdaghat - \Tdag\|_D^{1+\mu} - \Ex\left(\|\Tdaghat - \Tdag\|_D^{1+\mu}\right) \right| \geq\\
& \hspace{5.3cm}\sqrt{ \tau \Var\left(\|\Tdaghat - \Tdag\|_D^{1+\mu}\right)}  \Bigg\} \leq c_3 e^{-c_4\tau}.
\end{split}
\end{align}
Where $\|\cdot\|_D$ is the operator norm $\|\cdot\|_{\mathcal{L}(\X,\Y)}$ or some norm that dominates the operator norm. 
\end{assumption}
The following lemma shows that Assumption \ref{ass:concentration} holds for the IV regression applications \eqref{eq:nonlininstreg:opeq2} and \eqref{eq:iv_quant_reg_op} under mild conditions when $\Y$ is a $L^2$ space and $\|\cdot\|_D$ is the Hilbert-Schmidt norm. The Hilbert-Schmidt norm bounds the operator norm from above and is denoted by $\|\cdot\|_{HS}$. For linear integral operators it coincides with the $L^2$ norm of the integral kernel.

\begin{lemma}\label{lem:concentration} 
Consider the operators \eqref{eq:nonlininstreg:opeq2} and \eqref{eq:iv_quant_reg_op} as maps into $L^2(U,Z)$ or $L^2(Z)$ respectively. Assume that $f_{YXZ}$ is estimated by a kernel density estimator with a product kernel composed of a one-dimensional kernel $K_Y$ and two multivariate kernels $K_X$ and $K_Z$ corresponding to the dimensions $\dim(X)=d_X$ and $\dim(Z)=d_Z$ with joint bandwidth $h$. Assume for \eqref{eq:nonlininstreg:opeq2} that $n^{-1}h^{-d_z-1} = O(1)$, and $n^{-1-2\mu}h^{-(1+\mu)(d_x+d_z+3)} = O(1)$. Assume for \eqref{eq:iv_quant_reg_op} that $n^{-1}h^{-d_z} = O(1)$, and $n^{-1-2\mu}h^{-(1+\mu)(d_x+d_z+2)} = O(1)$. Then constants $c_2,c_4>0$ exist such that for all $\tau \ge 0$ % the following exponential inequalities hold for $\|\Ophat_n(\phidag)\|_{L^2}$ and $\|\Tdaghat\|_{HS}$
\begin{align}\label{eq:concentration_op}
\Prob \left\{ \left| \|\Ophat_n(\phidag)\|_{L^2} - \Ex\|\Ophat_n(\phidag)\|_{L^2} \right| \ge \sqrt{\tau \Var \left(\|\Ophat_n(\phidag)\|_{L^2}\right)} 
\right\} \leq 2 e^{-c_2\tau} %\qquad \mbox{and}\\
\end{align}
and
\begin{align}\label{eq:concentration_de}
\Prob \Bigg\{\left|\|\Tdaghat - \Tdag\|_{HS}^{1+\mu} - \Ex\left(\|\Tdaghat - \Tdag\|_{HS}^{1+\mu}\right) \right| \geq \sqrt{ \tau \Var\left(\|\Tdaghat - \Tdag\|_{HS}^{1+\mu}\right)}  \Bigg\} \leq 2 e^{-c_4\tau}.
\end{align}
\end{lemma}

\subsection{Convergence rates with a priori parameter choice}\label{sec:rates}
Our first convergence rate theorem assumes that $\mu$ in Assumption \ref{ass:source} is known, i.e. the smoothness of the true solution is known. Adaptive estimation will be discussed in the next section.

\begin{theorem}\label{theo:spec_conv}
Let Assumptions \ref{ass:dif}, \ref{ass:source}, \ref{ass:saturation}, \ref{ass:lipschitz}, and \ref{ass:concentration} hold. Define the stopping index as
\[
J:= \argmin\limits_{j\in\N} \left(\alpha_j^\mu + \alpha_j^{-1/2} \Ex\big[\|\Ophat_n(\phidag)\|_\Y^2\big] \right)
\]
and set
\begin{align}\label{eq:J*}
J^*:=\begin{cases}
        J \hspace{1cm} \text{ if } \phihat_j \in B_{2R}(\varphi_0) \text{ for } j=1,\ldots,J\\
	0 \hspace{1cm} \text{ else}.
       \end{cases}
\end{align}

Then,
\[
\Ex\left[\|\phihat_{J^*} -\phidag\|_\X^2\right] = \Or \left(\left(\Ex\big[\|\Ophat_n(\phidag)\|_\Y^2\big]\right)^{\frac{2\mu}{2\mu + 1}} + \Ex\big[\|\Tdaghat - \Tdag\|_{\calL(\X,\Y)}^{2+2\mu}\big] \right).
\]
\end{theorem}

In the special cases of the IV regression examples in $L^2$ spaces, the convergence rate can be given more explicitly.
\begin{corollary}\label{cor:rates_npiv}
Let $\X$ be an $L^2$ space and let Assumptions \ref{ass:dif}, \ref{ass:source}, \ref{ass:saturation}, \ref{ass:lipschitz_alt} and the conditions of Lemma \ref{lem:concentration} hold. Assume that in the case of operator \eqref{eq:nonlininstreg:opeq2} the density $f_{YXZ}$ and that in case of operator \eqref{eq:iv_quant_reg_op} the function $F_{YXZ}$ is $r$ times differentiable and is estimated with a kernel estimator where the kernel is of order at least $r$. If $J^*$ is chosen as in Theorem \ref{theo:spec_conv}, then
\[
\Ex\big[\|\phihat_{J^*} -\phidag\|_{L^2}^2\big] = \Or \left((n^{-1} h^{-(d_Z+1)})^{\frac{2\mu}{2\mu + 1}} + n^{-1-\mu} h^{-((1+2\mu)(d_X+d_Z+2)+1)} + h^{\frac{4\mu r}{2\mu + 1}} \right).
\]
\end{corollary}

\subsection{Comparison to an alternative quantile regression estimator}
We can compare our result to the rates for nonparametric quantile regression in \cite{HorLee:07}. They use nonlinear Tikhonov regularization for the operator $\Op_q$ and proved optimal rates under assumptions which are more restrictive than ours. In contrast to our rates, their rates do not depend on the derivative $\Tdaghat$. The main challenge for nonlinear Tikhonov regularization
\[
\phihat = \argmin_\varphi\|\Op_q \varphi \|_\Y^2 + \alpha\|\varphi\|_\X^2 
\] 
is to find the minimizer of the nonlinear functional $\|\Op_q \varphi \|_\Y^2 + \alpha\|\varphi\|_\X^2$ which usually has multiple local minima. In \cite{HorLee:07} it is assumed that this minimum is known exactly which is unrealistic in practice. A convergence analysis which takes the performance of a minimization algorithm into account would typically lead to a different rate which also depends on some derivative depending on the particular minimization algorithm. The IRGNM does not have this problem since we only have to solve a linear least squares problem in every Newton step. For a fair comparison of the convergence rates, we will assume that the term $\left(\Ex\big[\|\Ophat_n(\phidag)\|_\Y^2\big]\right)^{\frac{2\mu}{2\mu + 1}}$ dominates our rate. For sake of simplicity we assume $d_z=1$. Hence, our rate in the case of $\X=L^2$ is
\[
\Ex\big[\|\phihat_{J^*} -\phidag\|_{L^2}^2\big] = \Or \left((n^{-1} h^{-2} + h^{2r} )^{\frac{2\mu}{2\mu + 1}}\right),
\]
while the rate in \cite{HorLee:07} is $\Or\left(n^{-(2\beta-1)/(2\beta+a)}\right)$ with $a$ and $\beta$ defined an their paper as $\alpha$ and $\beta$. \cite{HorLee:07} restricts the bandwidth choice to 
\[
h = C_h n^{-\gamma} \quad \text{ with }\quad \frac{2\beta+a -1)}{2r(2\beta+a} <\gamma< \frac{a}{2(2\beta+a)}.
\]
Under their assumptions on $a$ and $\beta$, the density estimator achieves the rate 
\[
n^{-1} h^{-2} + h^{2r} = \Or\left(n^{-\frac{2\beta +a-1}{2\beta-1}}\right).
\]
Note that this will only coincide with the optimal rate $n^{-\frac{2r}{2r+2}}$ in special cases. Futhermore, in their notation the source condition in our Assumption \ref{ass:source} holds for any $\mu < (\beta-\frac{1}{2})/a$. Hence, $2\mu/(2\mu+1) < (2\beta-1)/(2\beta +a -1)$, where $\mu$ can be chosen such that the left hand side is arbitrarily close to the right hand side. Therefore, under their assumptions our rate is arbitrarily close to
\[
\left(n^{-\frac{2\beta +a-1}{2\beta-1}}\right)^\frac{2\beta-1}{2\beta +a -1} = n^{-\frac{2\beta-1}{2\beta+a}},
\] 
which is also the optimal rate in Theorem 2 and 3 in \cite{HorLee:07}.

Our assumptions are less restrictive which leads to faster rates of convergence in many cases compared to \cite{HorLee:07}. Firstly, we have no restrictions on $h$ which means we can chose the optimal bandwidth $h = \Or\left(n^{-\frac{1}{2r+2}}\right)$ and achieve 
\[
\Ex\big[\|\phihat_{J^*} -\phidag\|_{L^2}^2\big] = \Or \left(n^{-\frac{2r}{2r+2}\frac{2\mu}{2\mu + 1}}\right)
\]
This has also the advantage that we can choose $h$ by standard data driven bandwidth selectors like cross-validation. It is pointed out in \cite{HorLee:07} that a bandwidth selector for their assumptions does not yet exist.

Secondly, we have no upper bound on $\mu$ while their method requires $\mu \le 1$. Hence, our rate can be significantly better for smooth $\phidag$. This is not a contradiction to their optimality result, as their result only holds under their more restrictive assumptions.

\subsection{Convergence rates for adaptive estimation }\label{sec:Lepski}%with Lepski{\u\i}'s principle

The parameter choice \eqref{eq:J*} in Theorem \ref{theo:spec_conv} and Corollary \ref{cor:rates_npiv} depends on the unknown $\mu$%and $\Ex\big[\|\Ophat_n(\phidag)\|_\Y^2\big]$
, which is unfeasible in practice. We present in this section convergence rates for a data driven choice of the stopping parameter $J$ by Lepski{\u\i}'s principle. This is a popular parameter choice rule in the context of statistical inverse problems, see \cite{Tsybakov:00}, \cite{BauHoh:05}, \cite{Mathe:06}, \cite{BauHohMun:09}, and \cite{HW:16}. In the context of nonparametric IV, Lepski{\u\i}'s principle was used for adaptive estimation in \cite{ChenChrist:15}. 
The following theorem gives convergence rates of the MISE with a Lepski{\u\i} type parameter choice. We lose a logarithmic factor compared to Theorem \ref{theo:spec_conv}. The constant $C_d$ and $\gamma_{nl}$ used in the theorem are specified in the appendix in formula \eqref{eq:der_noi_est} and in Lemma \ref{lem:nonlin} respectively.

\begin{theorem}\label{theo:lepskii_risk}
Let the assumptions of Theorem \ref{theo:spec_conv} hold. %Take in the definition of $J_{max}$ in \eqref{eq:Jmax} as $\Phi$ the function
For all sequences $\delta^{noi}_n$, $\sigma^{noi}_n$, $\delta^{der}_n$, and $\sigma^{der}_n$ such that
%\begin{equation}
%\begin{aligned}\label{eq:sequences}
\begin{align*}
&\delta^{noi}_n \geq \Ex(\|\Ophat_n(\phidag)\|_\Y), &&(\sigma^{noi}_n)^2 \geq \Var(\|\Ophat_n(\phidag)\|_\Y),\\
&\delta^{der}_n \geq \Ex(\|\Tdaghat - \Tdag\|_{\calL(\X,\Y)}^{1+\mu}), &&(\sigma^{der}_n)^2 \geq \Var(\|\Tdaghat - \Tdag\|_{\calL(\X,\Y)}^{1+\mu})
\end{align*}
%\end{aligned}
%\end{equation}
set
\[
\widetilde\Phi^{noi}_n(j) := \sqrt{\frac{m}{\alpha_j}} \left(\delta^{noi}_n + \ln((\sigma^{noi}_n)^{-2}) \sigma^{noi}_n \right) + C_d \rho (\delta^{der}_n + \ln((\sigma^{der}_n)^{-2}) \sigma^{der}_n)
\]
and define the Lepski{\u\i} stopping parameter by
\[
J_{Lep}:= \min \left\{j \leq J_{max} \Big| \|\phihat_i - \phihat_j\|_\X \leq 4(1+\gamma_{nl}) \widetilde\Phi^{noi}_n(j) \quad \text{for all } i = 1,\, \ldots,\, J_{max} \right\}
\]
and the stopping parameter by
\[
J^*:=\begin{cases}
        J_{Lep} &\text{if } \phihat_j \in B_{2R}(\varphi_0) \text{ for } j=1,\ldots,J_{max}\\
	0 \hspace{1cm} &\text{else}.
       \end{cases}
\]
%Then there exist constants $C > 1$ and $\bar\delta^{noi}$, $\bar\sigma^{noi}$, $\bar\delta^{der}$ and $\bar\sigma^{der}$ such that
%\begin{align*}
%\sqrt{\Ex(\|\phihat_{J^*} - \phidag\|_\X^2)} \leq C \min\limits_{j \in \N} \bigg( \|e^{app}_j\|_\X + \sqrt{\frac{m}{\alpha_j}} &\big(\delta^{noi}_n + \ln((\sigma^{noi}_n)^{-1}) \sigma^{noi}_n \big)\\
%&+ C_d \rho \left(\delta^{der}_n + \ln((\sigma^{der}_n)^{-1}) \sigma^{der}_n \right) \bigg)
%\end{align*}
%for all $\delta^{der}_n \in (0, \bar\delta^{der}]$, $\sigma^{der}_n \in (0, \bar\sigma^{der}]$, $\delta^{noi}_n \in (0, \bar\delta^{noi}]$ and $\sigma^{noi}_n \in (0, \bar\sigma^{noi}]$.
Then
\begin{align*}
E\big[\|&\phihat_{J^*} -\phidag\|_\X^2\big]\\
& = \Or \left(\left[(\delta^{noi}_n)^2 + \ln\big((\sigma^{noi}_n)^{-1}\big)(\sigma^{noi}_n)^2 \right]^{\frac{2\mu}{2\mu + 1}} + (\delta^{der}_n)^2 + \ln\big((\sigma^{der}_n)^{-1}\big)(\sigma^{der}_n)^2 \right).
\end{align*}
\end{theorem}

\section{Numerical examples}\label{sec:numerics}

We evaluate the small sample behavior of the estimator based on the IRGNM with simulated data. As a test problem we use a nonparametric IV regression consistent with models \eqref{eq:iv_reg_mean_independence} and \eqref{eq:iv_reg_full_independence}, with univariate covariate $X$ and instrument $Z$. This setup allows to compare our estimator with an estimator which solves \eqref{eq:iv_reg_lin_op} with iterated Tikhonov regularization.

\subsection{Implementation}
The test problem described in the next section is solved on the domain %$\supp(Y) \times \supp(X) \times \supp(Z) = [-1/2 , 1/2] \times [0 , 1] \times [0 , 1]$
\[
\supp(Y) \times \supp(X) \times \supp(Z) = [-1/2 , 1/2] \times [0 , 1] \times [0 , 1]
\]
discretized by an equidistant grid with $100 \times 100 \times 100$ nodes. The joint density is estimated on this grid by a standard adaptive density estimator. Trying different density estimators, we found that both method are quite robust with respect to the density estimate. They tolerate some undersmoothing of the density as long as the stopping index $J$ and the regularization parameter $\alpha$ are chosen properly. In the simulations below the same density estimate is used for both the IRGNM and the iterated Tikhonov regularization which allows for a fair comparison of the methods. The initial guess for both methods is the constant function with the value $\Ex[Y]$, and the penalty functional for both methods is the squared $H^1$ norm.

The Fr\'{e}chet derivative is implemented as in Example \ref{ex:frechet}. The partial derivative of the density and the derivatives for the $H^1$ norm are computed by the central differencing scheme. Operators and norms are evaluated using numerical integration. We tried rectangle rule, trapezoid rule and Simpson's rule but could not find a significant difference in the output of the estimator.

The least squares problems in each step of the iterated Tikhonov regularization and in the inner iteration of the IRGNM are computed by QR decomposition. Note that only one QR decomposition is needed in every Newton step. We tried different numbers of iterations $m$ in the inner iteration of the IRGNM and the iterated Tikhonov regularization for the test example below. No significant difference in the results was observed, which indicates that $\mu$ is not large.

The regularization parameters for the example below are $\alpha_0 = 1$ and $\alpha_{n+1} = 0.9 \alpha_n$. Lepski{\u\i}'s principle is used to find the stopping parameter of the Newton iteration. For the alternative estimator using model \eqref{eq:iv_reg_mean_independence} the regularization parameter $\alpha$ has to be chosen instead, which is done by Lepski{\u\i}'s principle as well. The iterated Tikhonov regularization is computed for a large number of different $\alpha$. Then one of these approximation is chosen by Lepski{\u\i}'s principle. %We calibrated Lepski{\u\i}'s principle for both methods in Monte Carlos simulations as described in Section \ref{sec:Lepski}. 
Hence, both methods are fully data driven.

\subsection{Simulations}
The regressor of the test example is generated by some function $g$ and a random variable $V$ such that $X = g(Z) + V$ and $V \independent Z$.
%\[
%X = g(Z) + V \qquad \text{and} \qquad V \independent Z.
%\]
In addition, an exact solution $\phidag$ and an error term $U$ %$U_V$ 
depending on $V$ but not on $Z$ are chosen. Then $Y$ is defined as $Y := \phidag(X) + U$. %$Y := \phidag(X) + U_V$.
%\[
%Y := \phidag(X) + U_V.
%\]
With this construction both models \eqref{eq:iv_reg_mean_independence} and \eqref{eq:iv_reg_full_independence} identify the true solution. The functions and probability densities that were used for the test example are
\begin{align*}
 \phidag(x) &= \frac{1}{6}\sin(2\pi(x+0,25)),\\[6pt]
 f_Z(z) &= \frac{9}{7}\sqrt{z} + \frac{1}{7} \qquad \text{on the interval }[0,\, 1],\\[6pt]
 g(z) &= 0,8z+0,1,\\[6pt]
 f_V(v) &= \frac{1}{0,08\sqrt{2\pi}} \exp\left(-\frac{1}{2}\left(\frac{v}{0,08}\right)^2\right), \mbox{ and}\\[6pt]
 %V &\sim& \mathcal{N}(0,\, 0,08^2)\\
 %f_{U_V}(y,v) &= \frac{1}{0,07\sqrt{2\pi}} \exp\left(-\frac{1}{2}\left(\frac{y-2v}{0,07}\right)^2\right).
 f_{U}(y,v) &= \frac{1}{0,07\sqrt{2\pi}} \exp\left(-\frac{1}{2}\left(\frac{y-2v}{0,07}\right)^2\right).
 %U_V &\sim& \mathcal{N}(\varphi(Z)+2\cdot V,\, 0.07^2)
\end{align*}
The densities of $V$ and $U$ %$U_V$ 
are constructed with Gaussians in a way that the expectation of $U$ %$U_V$
depends on $v$. %The problem is solved on the domain %$\supp(Y) \times \supp(X) \times \supp(Z) = [-1/2 , 1/2] \times [0 , 1] \times [0 , 1]$
%\[
%\supp(Y) \times \supp(X) \times \supp(Z) = [-1/2 , 1/2] \times [0 , 1] \times [0 , 1]
%\]
%discretized by $100 \times 100 \times 100$ nodes. 
Figure \ref{fig:reg_without_W1d} shows the exact solution (blue) compared to the solution a nonparametric regression without instrumental variables would yield asymptotically (green).%Figure \ref{fig:reg_without_W1d} shows the exact solution (dashed line) compared to the solution a nonparametric regression without instrumental variables would yield asymptotically (solid line).

%To be sure that the problem is definitely ill-posed, the singular values of $\Op'[\varphi^\dag]$ are plotted in Figure \ref{fig:sv1d}. They clearly decay exponentially and, according to the Corollaries \ref{cor:spec_conv_prob}, \ref{cor:rates_mise} and \ref{cor:rates_var}, onlye  slow convergence can be expected.

%\begin{center}
%\includegraphics[width=0.6\textwidth]{bsp3reg2_bw.png}
%\captionof{figure}{\label{fig:reg_without_W1d}Necessity of the instrument: A standard nonparametric regression would  asymptotically yield the dashed curve which is considerably different from the true curve $\varphi^\dag$ (solid line).}\bigskip
%\end{center}
\begin{center}
\includegraphics[width=0.5\textwidth]{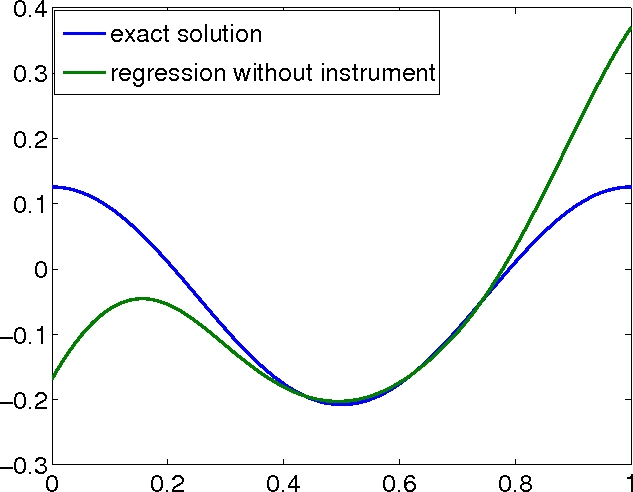}
\captionof{figure}{\label{fig:reg_without_W1d}Necessity of the instrument: A standard nonparametric regression would  asymptotically yield the green curve which is considerably different from the true curve $\varphi^\dag$ (blue).}%\bigskip
\end{center}

%\begin{center}
%\includegraphics[width=0.6\textwidth]{pictures/singularvalues1dim.jpg}
%\captionof{figure}{\label{fig:sv1d}Singular values of $\Op'[\varphi^\dag]$}
%\end{center}

%In a first step both methods were tested on this problem using the exact joint density $f_{YXZ}$ instead of a density estimate. The initial guess for both methods was the constant function with the value $\Ex[Y]$. The penalty functional was the squared $H^1$ norm and the regularization parameters were $\alpha_0 = 1$ and $\alpha_{n+1} = 0.9 \alpha_n$. The reconstructions are plotted in Figure \ref{fig:exact1d}. The dotted curve shows the reconstruction with conditional mean assumption, the dashed curve the reconstruction with full independents. The exact solution is the solid line. 
%
%Both methods converge to the exact solution apart from some deviations at the boundaries. They occur since $V$ and $U_V$ are constructed by Gaussians with unbounded support, while the computation is carried out on a compact domain. A small amount of probability mass gets lost at the boundaries which causes the deformations.
%
%\begin{center}
%\includegraphics[width=0.6\textwidth]{exact1d_bw.png}
%\captionof{figure}{\label{fig:exact1d}Reconstructions using the exact density $f_{YZW}$: The solid curve line is the exact solution $\phidag$, the dotted curve the reconstruction with conditional mean assumption and the dashed line is the reconstruction with independent instrument.}
%\end{center}

%In a second step b
Both methods were tested on samples of $500$ and $1000$ observations. For each of the two sample sizes 1000 samples were generated and the joint density $f_{YXZ}$ was estimated by a kernel method. %For every sample both methods were evaluated on the same estimate of the density. The initial guess for both methods was the constant function with the value $\Ex[Y]$. The penalty functional for both methods was the squared $H^1$ norm and the regularization parameters were $\alpha_0 = 1$ and $\alpha_{n+1} = 0.9 \alpha_n$. Lepski{\u\i}'s principle was used to find the stopping parameter of the Newton iteration. For the alternative estimator using model \eqref{eq:iv_reg_mean_independence} the regularization parameter $\alpha$ had to be chosen instead, which was done by Lepski{\u\i}'s principle as well. The iterated Tikhonov regularization was computed for a large number of different $\alpha$. Then one of these approximation was chosen by Lepski{\u\i}'s principle. Hence, both methods are fully data driven.

Figures \ref{fig:hist500} and \ref{fig:hist1000} show histograms for the $L^2$ error of the reconstructions for both methods and different sample sizes. The values are normed by the initial error, so that on this scale the initial error becomes $1$.
\begin{center}
\includegraphics[width=0.47\textwidth]{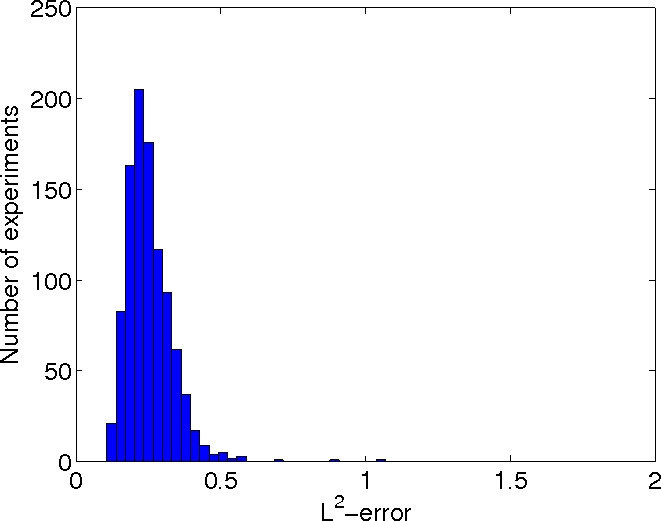}\qquad\includegraphics[width=0.47\textwidth]{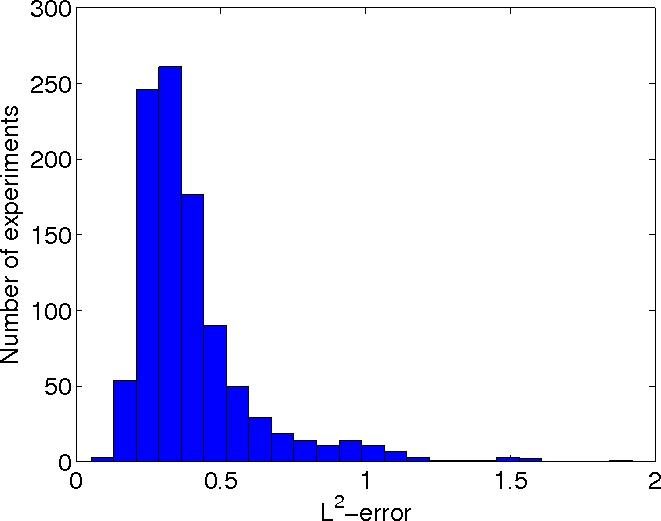}
\captionof{figure}{\label{fig:hist500}$L^2$ error for the sample size $n=500$. Left panel: IRGNM with the assumption $U \independent Z$. Right panel: iterated Tikhonov regularization with the assumption $\Ex[U|Z] = 0$}%\bigskip
\end{center}

In Figure \ref{fig:hist500} we compare the errors of both methods for the sample size $n=500$. Both methods produce acceptable results. The variance as well as the number of outliers observed for the method with independent instrument are significantly smaller than the variance or number of outliers of the method with the conditional mean assumption. The latter method produces a considerable number of outliers with the same or even larger errors than the initial guess. This cannot be observed for the IRGNM. In addition, the mean error of the IRGNM is smaller.
\begin{center}
\includegraphics[width=0.47\textwidth]{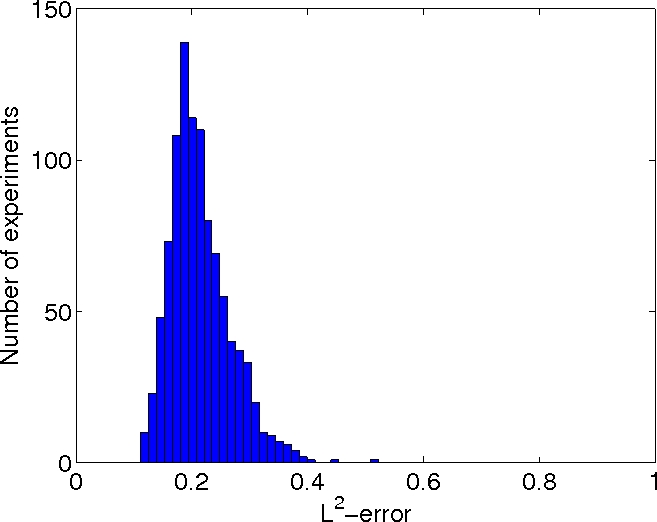}\qquad\includegraphics[width=0.47\textwidth]{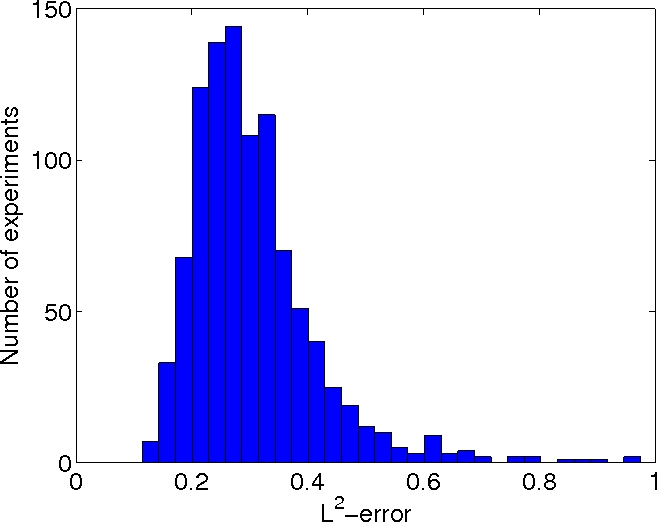}
\captionof{figure}{\label{fig:hist1000}$L^2$ error for the sample size $n=1000$. Left panel: IRGNM with the assumption $U \independent Z$. Right panel: iterated Tikhonov regularization with the assumption $\Ex[U|Z] = 0$}%\bigskip
\end{center}

Similar histograms for sample size $n=1000$ are displayed in Figures \ref{fig:hist1000}. Both methods perform well. The advantages of the IRGNM with less outliers, smaller variance and smaller mean error can be observed again. The following table provides the mean and some quantiles of the errors normed by the initial error.

\begin{small}
\begin{center}
\begin{tabular}{|l||l|ll|l|l|l|}
\hline
sample size and method & mean & quantiles\hspace{-7pt} & $q = 0.25$ & $q = 0.5$ & $q = 0.75$ & $q = 0.9$\\ \hline
$n = 500$, $\;\,U \independent W$ & 0.2535 & & 0.2012 & 0.2398 & 0.2940 & 0.3495\\ \hline
$n = 500$, $\;\,\Ex[U|W] = 0$ & 0.4042 & & 0.2738 & 0.3437 & 0.4475 & 0.6407\\ \hline
$n = 1000$, $U \independent W$ & 0.2152 & & 0.1780 & 0.2064 & 0.2439 & 0.2868\\ \hline
$n = 1000$, $\Ex[U|W] = 0$ & 0.3067 & & 0.2339 & 0.2846 & 0.3482 & 0.4325\\ \hline
\end{tabular}%\bigskip
\end{center}
\end{small}

We close this section with examples of median reconstructions for both sample sizes displayed in Figure \ref{fig:example}. They illustrate the advantage of the regression model with independent instruments solved with the IRGNM.

\begin{center}
\includegraphics[width=0.47\textwidth]{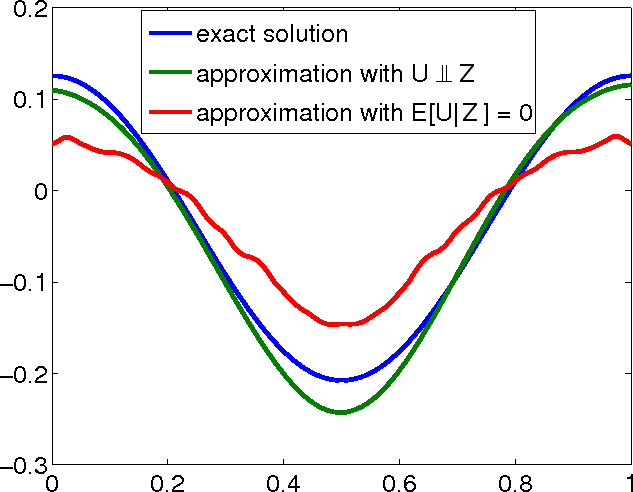}\qquad\includegraphics[width=0.47\textwidth]{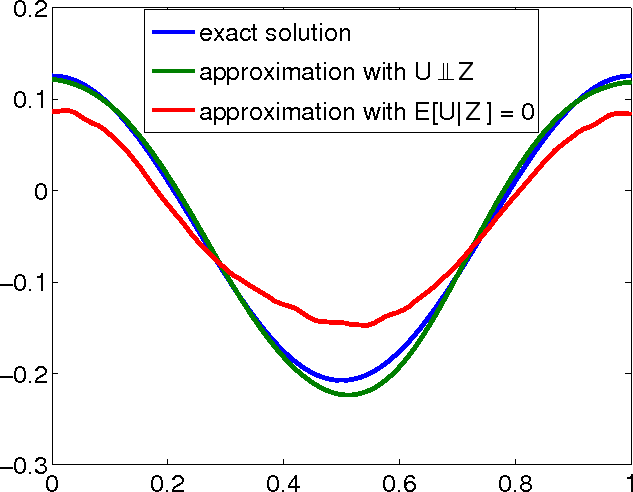}
\captionof{figure}{\label{fig:example}Examples for reconstructions with sample size $n=500$ (left) and $n=1000$ (right). The blue line shows the exact solution, the red curve the reconstruction with the conditional mean assumption and the green curve the reconstruction with independent instrument.}%\bigskip
%\captionof{figure}{\label{fig:ex_1000}Example for reconstructions with sample size $n=1000$. The solid line shows the exact solution, the dotted curve the reconstruction with the conditional mean assumption and the dashed curve the reconstruction with independent instrument.}
\end{center}

These results suggest that both methods give consistent estimators for the nonparametric instrumental regression with clear advantages for the regression model with independent instruments \eqref{eq:iv_reg_full_independence} and the IRGNM.

\section*{Acknowledgment}
The author would like to thank Thorsten Hohage and Johannes Schmidt-Hieber for interesting and fruitful discussions on this topic. He also would like to thank two anonymous referees for valuable comments that improved the paper. %Financial support by the DFG in the SFB 823 (project C1 and C4) and by the Ministry of Education and Cultural Affairs of Lower Saxony in the project Reducing Poverty Risk is gratefully acknowledged. 

\bibliography{lit}
\bibliographystyle{apalike}

%\newpage
\appendix
\section{Appendix}\label{sec:appenix}

\subsection{Nonlinearity restriction}
We prove in this section that for the IV regression examples Assumption \ref{ass:lipschitz_alt} is an alternative to 
Assumption \ref{ass:lipschitz}.
\begin{lemma}\label{lem:nonlin_restrict}
For the operator equations \eqref{eq:nonlininstreg:opeq2} and \eqref{eq:iv_quant_reg_op}  Assumption \ref{ass:lipschitz_alt} implies Assumption \ref{ass:lipschitz}.
\end{lemma}

\begin{proof}
Let $\widehat k_n(\xi_1(x),x,z)$ denote the estimate for $k_{ind}$ or $k_q$. Since the second derivatives with respect to $y$ are bounded we have
\begin{equation*}
\begin{split}
\|\widehat \Op_n'[\xi_1] - \Ophat_n'[\xi_2]\|_{\calL(\X,\Y)} &\leq %\|\widehat \Op'[\xi_1] - \Ophat_n'[\xi_2]\|_{HS}\\
%&= 
\sqrt{\iint \left( \frac{\partial}{\partial y} \widehat k_n(\xi_1(x),x,z) - \frac{\partial}{\partial y} \widehat k_n(\xi_2(x),x,z) \right)^2 dx\, dz}\\
&\leq \sqrt{\iint \left( \sup_{y,\widetilde x} \left|\frac{\partial^2}{\partial y^2} \widehat k_n(y,\widetilde x, z)\right| \left(\xi_1(x) - \xi_2(x)\right) \right)^2 dx\, dz}\\
&=  \mu(\supp(Z)) \sup_{y, x, z} \left|\frac{\partial^2}{\partial y^2} \widehat k_n(y, x, z) \right| \, \|\xi_1 - \xi_2\|_\X
\end{split}
\end{equation*}
If $\widehat k_n$ is a strongly consistent estimator, for any constant $c > 0$
\[                                                                                                                                 
\sup_{y, x, z} \left| \frac{\partial^2}{\partial y^2} \widehat k_n(y, x, z) \right| \leq \sup_{y, x, z} \left| \frac{\partial^2}{\partial y^2} k(y, x, z) \right| + c\qquad \mbox{almost surely for large n.}
\]
Hence, Assumption \ref{ass:lipschitz} holds with
\[
L = \mu(\supp(Z))\sup_{y, x, z} \left| \frac{\partial^2}{\partial y^2} k(y, x, z) \right| + c
\]
for any $c>0$.
\end{proof}

\subsection{Concentration inequalities}%Lemmas \ref{lem:concentration_noi} and \ref{lem:concentration_der}}
This section proves Lemma \ref{lem:concentration} using McDiamid's extension of Hoeffding's inequality.
\begin{theorem}[\cite{McD:89}]\label{the:McD}
Let $W_1, \ldots, W_n$ be independent random variables. If $f : \supp(W_1, \ldots, W_n) \rightarrow \R$ satisfies for $1 \le i \le n$
\begin{align}\label{eq:mcdiamid}
\sup_{\substack{(w_1,\ldots,w_n), (w_1',\ldots,w_n')\\ \in \supp(W_1, \ldots, W_n)}} \left|f(w_1,\ldots,w_n) - f(w_1,\ldots,w_{i-1},w_i',w_{i+1},\ldots,w_n) \right| \le c_i.
\end{align}
Then for $\tau \ge 0$
\[
\Prob\{|f(W_1,\ldots,W_n) - \Ex f(W_1,\ldots,W_n)| \ge \sqrt{\tau}\} \le 2\exp\left(\frac{-2\tau}{\sum_{i=1}^n c_i^2}\right).
\]
\end{theorem}

Now we can prove Lemma \ref{lem:concentration}.

%\begin{lemma}\label{lem:concentration_noi} 
%Let the operators \eqref{eq:nonlininstreg:opeq2} and \eqref{eq:iv_quant_reg_op} map into $L^2(U,Z)$ or $L^2(Z)$ respectively. Assume that $f_{YXZ}$ is estimated by a kernel density estimator with a product kernel composed of a one-dimensional kernel $K_Y$ and two multivariate kernels $K_X$ and $K_Z$ corresponding to the dimensions $\dim(X)=d_X$ and $\dim(Z)=d_Z$ with joint bandwidth $h$. Then the following exponential inequality holds for $\|\Ophat_n(\phidag)\|_{L^2}$
%\begin{equation*}
%\Prob \left\{ \left| \|\Ophat_n(\phidag)\|_{L^2} - \Ex\|\Ophat_n(\phidag)\|_{L^2} \right| \ge \sqrt{\tau \Var \left(\|\Ophat_n(\phidag)\|_{L^2}\right)} 
%\right\} \leq 2 e^{-c_2\tau}.
%\end{equation*}
%\end{lemma}
\begin{proof}(of Lemma \ref{lem:concentration})
%The joint density $f_{YXZ}$ is estimated by a kernel density estimator with kernels $K_Y$, $K_X$, and $K_Z$ and common bandwidth $h$. We write as usual
Denote the kernels by
\[
K_{Y,h}(y) = \frac{1}{h} K_Y\left(\frac{y}{h}\right),\qquad K_{X,h}(x) = \frac{1}{h^{d_X}} K_X\left(\frac{x}{h}\right),\qquad K_{Z,h}(z) = \frac{1}{h^{d_Z}} K_Z\left(\frac{z}{h}\right).
\]
The proof has 4 parts in which we prove for each of the operators  \eqref{eq:nonlininstreg:opeq2} and \eqref{eq:iv_quant_reg_op} the inequalities \eqref{eq:concentration_op} and \eqref{eq:concentration_de}.
\paragraph{part 1} \eqref{eq:concentration_op} for operator \eqref{eq:nonlininstreg:opeq2}

We show \eqref{eq:mcdiamid} with $W = (Y,X,Z)$ and 
\begin{align*}
&f\big((y_1,x_1,z_1),\ldots, (y_n,x_n,z_n)\big) := \|\Ophat_{ind}(\phidag)(u,z)\|_{L_2(u,z)} \\
&=\Big\|\int n^{-1}\sum_{k=1}^n K_{Y,h}(\phidag(x)-u-y_k)K_{X,h}(x-x_k) K_{Z,h}(z - z_k)\\
& \qquad\quad - n^{-2}\sum_{k=1}^n \sum_{l=1}^n K_{Y,h}(\phidag(x)-u-y_k)K_{X,h}(x-x_k) K_{Z,h}(z - z_l) dx\Big\|_{L_2(u,z)}.
\end{align*}
Iterated application of the triangular inequality and dropping the terms that cancel in the sums yields
\begin{align*}
&|f(w_1,\ldots,w_n) - f(w_1,\ldots,w_{i-1},w_i',w_{i+1},\ldots,w_n)|\\
&\le n^{-1} \Big\|\int K_{Y,h}(\phidag(x)-u-y_i)K_{X,h}(x-x_i) K_{Z,h}(z - z_i)\\
& \qquad \qquad - K_{Y,h}(\phidag(x)-u-y'_i)K_{X,h}(x-x'_i) K_{Z,h}(z - z'_i)dx\Big\|_{L_2(u,z)}\\
&+ n^{-2} \Big\|\int \sum_{k\neq i}K_{Y,h}(\phidag(x)-u-y_k)K_{X,h}(x-x_k)\big[ K_{Z,h}(z - z'_i)-K_{Z,h}(z - z_i)\big]dx\Big\|_{L_2(u,z)}\\
&+ n^{-2} \Big\|\int \sum_{l\neq i}\big[K_{Y,h}(\phidag(x)-u-y'_i)K_{X,h}(x-x'_i)\\
&\qquad \qquad\qquad\quad - K_{Y,h}(\phidag(x)-u-y_i)K_{X,h}(x-x_i)\big]K_{Z,h}(z - z_l)dx\Big\|_{L_2(u,z)}\\
&+ n^{-2} \Big\|\int K_{Y,h}(\phidag(x)-u-y'_i)K_{X,h}(x-x'_i) K_{Z,h}(z - z'_i)\\
&\qquad\qquad-K_{Y,h}(\phidag(x)-u-y_i)K_{X,h}(x-x_i)K_{Z,h}(z - z_i)dx\Big\|_{L_2(u,z)}
\end{align*}
We use the triangular inequality again and substitute $x_i,y_i,z_i$ for $x'_i,y'_i,z'_i$ and for $x_k,y_k,z_l$.
\begin{align*}
&\le 2n^{-1} \Big\|\int K_{Y,h}(\phidag(x)-u-y_i)K_{X,h}(x-x_i) K_{Z,h}(z - z_i)dx\Big\|_{L_2(u,z)}\\
&+ 2n^{-2} \Big\|\int \sum_{k\neq i}K_{Y,h}(\phidag(x)-u-y_k)K_{X,h}(x-x_k)K_{Z,h}(z - z_i)dx\Big\|_{L_2(u,z)}\\
&+ 2n^{-2} \Big\|\int \sum_{l\neq i} K_{Y,h}(\phidag(x)-u-y_i)K_{X,h}(x-x_i)K_{Z,h}(z - z_l)dx\Big\|_{L_2(u,z)}\\
&+ 2n^{-2} \Big\|\int K_{Y,h}(\phidag(x)-u-y'_i)K_{X,h}(x-x'_i) K_{Z,h}(z - z'_i)dx\Big\|_{L_2(u,z)}\\
&\le \left(\frac{2}{n}+ \frac{4(n-1)}{n^2} + \frac{2}{n^2}\right)\Big\|\int K_{Y,h}(\phidag(x)-u-y'_i)K_{X,h}(x-x'_i) K_{Z,h}(z - z'_i)dx\Big\|_{L_2(u,z)}\\
&< 6n^{-1}\Big\|\int K_{Y,h}(\phidag(x)-u-y'_i)K_{X,h}(x-x'_i) K_{Z,h}(z - z'_i)dx\Big\|_{L_2(u,z)}\\
&= 6n^{-1}  \bigg(\int\Big(\int K_{Y,h}(\phidag(hx+x_i)-u-y_i)K_{X}(x) K_{Z,h}(z - z_i)dx\Big)^2 d(u,z)\bigg)^{1/2}\\
%&= 6n^{-1}  \bigg(\int\int K_{Y,h}^2(\phidag(hx+x_i)-u-y_i)K_{X}^2(x) K_{Z,h}^2(z - z_i)dx\, d(u,z)\bigg)^{1/2}\\
&\le 6n^{-1}  \bigg(\int K_{X}^2(x)\int K_{Y,h}^2(\phidag(hx+x_i)-u-y_i) K_{Z,h}^2(z - z_i) d(u,z)\,dx\bigg)^{1/2}\\
&= 6n^{-1}  \bigg(\int K_{X}^2(x) \|K_{Y,h}(u) K_{Z,h}(z)\|_{L^2(u,z)}^2dx\bigg)^{1/2}\\
&= 6n^{-1}  \|K_{X}\|_{L^2} \|K_{Y,h}(u) K_{Z,h}(z)\|_{L^2(u,z)}\\
% &\le2n^{-1}  \Big\| \|K_{X,h}\|_{L^1} \int |K_{Y,h}(\phidag(x)-u-y_i) K_{Z,h}(z - z_i)|dx\Big\|_{L_2(u,z)}\\
% &= 6n^{-1}  \|K_{X}\|_{L^1} \bigg(\int \Big(\int K_{Y,h}(\phidag(x)-u-y_i) K_{Z,h}(z - z_i)dx \Big)^2d(u,z) \bigg)^{1/2}\\
% &\le 6n^{-1} \|K_{X}\|_{L^1} \bigg(\int \int \Big( K_{Y,h}(\phidag(x)-u-y_i)K_{Z,h}(z - z_i) \Big)^2 d(u,z) \, dx \bigg)^{1/2}\\
% &= 6n^{-1} \|K_{X}\|_{L^1} \bigg(\int \| K_{Y,h}(\cdot + \phidag(x))K_{Z,h}\|_{L^2}^2  dx\bigg)^{1/2}\\
% &= 6n^{-1} \|K_{X}\|_{L^1} \|K_{Y,h}K_{Z,h}\|_{L^2}\\
&= 6 n^{-1}h^{-(d_Z+1)/2}\|K_{X}\|_{L^2} \|K_{Y}(u)K_{Z}(z)\|_{L^2(u,z)}.
\end{align*}
We used the standard substitution arguments for kernel methods to get from $K_{X,h}$ to $K_{X}$ and from $K_{Y,h}K_{Z,h}$ to $K_{Y}K_{Z}$. Together with Theorem \ref{the:McD} this proves
\begin{align}\label{eq:var_Find}
\Prob \left\{ \left| \|\Ophat_{ind}(\phidag)\|_{L_2} - \Ex\|\Ophat_{ind}(\phidag)\|_{L_2} \right| \ge \sqrt\tau \right\} \le 2\exp\left(\frac{-\tau n h^{d_Z+1}}{18\|K_{X}\|_{L^2}^2\|K_{Y}K_{Z}\|_{L_2}^2}\right).
\end{align}
%This shows subgaussianity of $\|\Ophat_{ind}(\phidag)\|_{L_2}$. 
Hence, there exists a constant $c_2 > 0$ such that
\[
\Prob \left\{ \left| \|\Ophat_{ind}(\phidag)\|_{L_2} - \Ex\|\Ophat_{ind}(\phidag)\|_{L_2} \right| \ge \sqrt{\tau \Var\left(\|\Ophat_{ind}(\phidag )\|_{L_2} \right)}\right\} \le 2\exp\left(- c_2\tau \right).
\]

%This proves the assertion for the operator in \eqref{eq:nonlininstreg:opeq2}. 

\paragraph{part 2}  \eqref{eq:concentration_op} for operator \eqref{eq:iv_quant_reg_op}

A similar argument applies to the quantile regression operator in \eqref{eq:iv_quant_reg_op}. We set $\bar{K}_{Y,h}(y) = \int_{-\infty}^{y} K_{Y}(\tilde y) d\tilde y$ and $\bar C_Y := |\sup_y \bar K_{Y,h}(y)-\inf_t \bar K_{Y,h}(y)|= |\sup_y \bar K_{Y,1}(y)-\inf_y \bar K_{Y,1}(y)|$.
% \[
% \bar C_Y := \sup_y \bar K_{Y,h}(y)-\inf_t \bar K_{Y,h}(y)= \sup_y \bar K_{Y,1}(y)-\inf_t \bar K_{Y,1}(y)
% \]
% \begin{align*}
% \bar C_Y := &\sup_t\left(\int_{-\infty}^t K_{Y,h}(y)dy\right)-\inf_t \left(\int_{-\infty}^t K_{Y,h}(y)dy\right)\\
% = &\sup_t\left(\int_{-\infty}^t K_{Y}(y)dy\right)-\inf_t \left(\int_{-\infty}^t K_{Y}(y)dy\right).
% \end{align*}
Note that $\bar C_Y$ does not depend on $h$. Theorem \ref{the:McD} is now applied with
\begin{align*}
f&(w_1,\ldots,w_n) = \|\Ophat_q(\phidag)\|_{L_2}\\
& = \Big\|n^{-1}\int \bar K_{Y,h}(\phidag(x)-y_i)K_{X,h}(x-x_i) K_{Z,h}(z - z_i)dx -q K_{Z,h}(z - z_i)\Big\|_{L_2(z)}.
\end{align*}
The estimation
\begin{align*}
&|f(w_1,\ldots,w_n) - f(w_1,\ldots,w_{i-1},w_i',w_{i+1},\ldots,w_n)|\\
&= n^{-1} \Big\|\int \bar K_{Y,h}(\phidag(x)-y_i)K_{X,h}(x-x_i) K_{Z,h}(z - z_i)dx -q K_{Z,h}(z - z_i) \\
& \qquad \qquad - \int \bar K_{Y,h}(\phidag(x)-y'_i)K_{X,h}(x-x'_i) K_{Z,h}(z - z'_i)dx + q K_{Z,h}(z - z_i')\Big\|_{L_2}\\
&\le n^{-1} \Big\|\bar C_Y\int K_{X,h}(x-x_i) K_{Z,h}(z - z_i)dx\Big\|_{L_2} + 2qn^{-1} \Big\|K_{Z,h}(z - z_i)\Big\|_{L_2}\\
& = \bar C_Y (1+2q) n^{-1} \|K_{Z,h}\|_{L_2}\\
& = \bar C_Y (1+2q) \|K_{Z}\|_{L_2} n^{-1}h^{-d_Z/2}.
\end{align*}
proves together with Theorem \ref{the:McD}
\begin{align}\label{eq:var_Fq}
\Prob \left\{ \left| \|\Ophat_q(\phidag)\|_{L_2} - \Ex\|\Ophat_q(\phidag)\|_{L_2} \right| \ge \sqrt\tau \right\} \le 2\exp\left(\frac{-2\tau n h^{d_Z}}{\bar C_Y^2 (1+2q)^2 \|K_{Z}\|_{L_2}^2}\right).
\end{align}
%Hence, $\|\Ophat_q(\phidag)\|_{L_2}$ is subgaussian which proves the lemma.
Hence, there exists a constant $c_2 > 0$ such that \eqref{eq:concentration_de} holds for $\Ophat_q$.
%\end{proof}
%
%
%\begin{lemma}\label{lem:concentration_der} 
%Let the operators \eqref{eq:nonlininstreg:opeq2} and \eqref{eq:iv_quant_reg_op} map into $L^2(U,Z)$ or $L^2(Z)$ respectively. Assume that $\widehat k_n$ is estimated by a kernel density estimator with a product kernel composed of $K_Y$, $K_X$, and $K_Z$ with joint bandwidth $h$ as in Lemma \ref{lem:concentration_der}. Then the following exponential inequality holds for $\|\Tdaghat - \Tdag\|_{HS}$
%%\begin{equation*}
%%\begin{split}
%%&\Prob \Bigg[\left|\|\Tdaghat - \Tdag\|_{HS}^{1+\mu} - \Ex\left(\|\Tdaghat - \Tdag\|_{HS}^{1+\mu}\right) \right| \geq\\
%%& \hspace{5.3cm}\sqrt{ \tau \Var\left(\|\Tdaghat - \Tdag\|_{HS}^{1+\mu}\right)}  \Bigg] \leq 2 e^{-c_4\tau}.
%%\end{split}
%%\end{equation*}
%\begin{equation*}
%\Prob \Bigg\{\left|\|\Tdaghat - \Tdag\|_{HS}^{1+\mu} - \Ex\left(\|\Tdaghat - \Tdag\|_{HS}^{1+\mu}\right) \right| \geq \sqrt{ \tau \Var\left(\|\Tdaghat - \Tdag\|_{HS}^{1+\mu}\right)}  \Bigg\} \leq 2 e^{-c_4\tau}.
%\end{equation*}
%\end{lemma}
%
%\begin{proof}
\paragraph{part 3} \eqref{eq:concentration_de} for operator \eqref{eq:nonlininstreg:opeq2}

We follow the same strategy and adopt the notation above. Theorem \ref{the:McD} is applied with $W = (Y,X,Z)$ %to the operator \eqref{eq:nonlininstreg:opeq2} 
with
\begin{align*}
&f\big((y_1,x_1,z_1),\ldots, (y_n,x_n,z_n)\big) := \|\Tdaghat - \Tdag\|_{HS}^{1+\mu} \\
&=\bigg\|n^{-1}\sum_{i=1}^n K_{Y,h}'(\phidag(x)-u-y_i)K_{X,h}(x-x_i) K_{Z,h}(z - z_i) \\
& \qquad\quad - n^{-2}\sum_{k=1}^n \sum_{l=1}^n K_{Y,h}'(\phidag(x)-u-y_k)K_{X,h}(x-x_k) K_{Z,h}(z - z_l)\\
&\hspace{70pt}- \frac{\partial}{\partial y} f_{YXZ}(\phidag(x)-u,x,z) + \frac{\partial}{\partial y} f_{YX}(\phidag(x)-u,x)f_Z(z) \bigg\|_{L_2}^{1+\mu}
\end{align*}
where $K_{Y,h}'$ is the derivative of $K_{Y,h}$. With the same steps as in part 1 with repeated application of the triangular inequality and substitution we get
\begin{align*}
&|f(w_1,\ldots,w_n) - f(w_1,\ldots,w_{i-1},w_i',w_{i+1},\ldots,w_n)|\\
&\le   \big\| n^{-1}K'_{Y,h}(\phidag(x)-u-y_i)K_{X,h}(x-x_i) K_{Z,h}(z - z_i)\\
&\quad \quad - n^{-1}K'_{Y,h}(\phidag(x)-u-y'_i)K_{X,h}(x-x'_i) K_{Z,h}(z - z'_i)\\
&\quad\quad - n^{-2}\sum_{k=1}^n \sum_{l=1}^n K_{Y,h}'(\phidag(x)-u-y_k)K_{X,h}(x-x_k) K_{Z,h}(z - z_l)\\
&\quad\quad - n^{-2}\sum_{k=1}^n \sum_{l=1}^n K_{Y,h}'(\phidag(x)-u-y_k')K_{X,h}(x-x_k') K_{Z,h}(z - z_l')\big\|_{L_2}^{1+\mu}\\
&\le 6^{1+\mu} n^{-1-\mu} \big\| K'_{Y,h}(\phidag(x)-u-y_i)K_{X,h}(x-x_i) K_{Z,h}(z - z_i)\big\|_{L_2}^{1+\mu}\\
&= 6^{1+\mu} n^{-1-\mu} \| K'_{Y,h}\|_{L_2}^{1+\mu}\, \|K_{X,h}\|_{L_2}^{1+\mu}\, \|K_{Z,h}\|_{L_2}^{1+\mu}\\
&= 6^{1+\mu} n^{-1-\mu}h^{-\frac{(1+\mu)(d_X+d_Z+3)}{2}} \| K'_{Y}\|_{L_2}^{1+\mu}\, \|K_{X}\|_{L_2}^{1+\mu}\, \|K_{Z}\|_{L_2}^{1+\mu}.
\end{align*}
Hence,
\begin{align}\label{eq:iv_reg_der_tail}
\Prob \left\{ \left| \|\Tdaghat - \Tdag\|_{HS}^{1+\mu} - \Ex\left(\|\Tdaghat - \Tdag\|_{HS}^{1+\mu}\right) \right| \ge \sqrt\tau \right\}\qquad\qquad\\
\le 2\exp\left(\frac{-2\tau n^{1+2\mu} h^{(1+\mu)(d_X+d_Z+3)}}{36^{1+\mu}\|K_{X}\|_{L^2}^{2+2\mu}\|K_{Y}\|_{L^2}^{2+2\mu}\|K_{Z}\|_{L_2}^{2+2\mu}}\right).
\end{align}
%This shows that $\|\Tdaghat - \Tdag\|_{HS}^{1+\mu} - \Ex\left(\|\Tdaghat - \Tdag\|_{HS}^{1+\mu}\right)$ is sub-Gaussian. 
Thus, there exist a constant $c_4$ such that
\begin{align*}
\Prob \left\{ \left| \|\Tdaghat - \Tdag\|_{HS}^{1+\mu} - \Ex\left(\|\Tdaghat - \Tdag\|_{HS}^{1+\mu}\right) \right|\ge \sqrt{\tau \Var\left(\|\Tdaghat - \Tdag\|_{HS}^{1+\mu}\right)} \right\}\qquad\\
\le 2\exp\left(-c_4 \tau\right).
\end{align*}

\paragraph{part 4} \eqref{eq:concentration_de} for operator \eqref{eq:iv_quant_reg_op}

A similar argument holds for the instrumental quantile regression problem. The kernel of the Fr\'echet derivative of the operator $\Op_q$ in \eqref{eq:iv_quant_reg_op} at $\phidag$ is simply $f_{YXZ}(\phidag(x),x,z)$. So Theorem \ref{the:McD} is applied to
 \begin{align*}
&f\big((y_1,x_1,z_1),\ldots, (y_n,x_n,z_n)\big) := \|\Tdaghat - \Tdag\|_{HS}^{1+\mu} \\
&=\left\|n^{-1}\sum_{i=1}^n K_{Y,h}(\phidag(x)-y_i)K_{X,h}(x-x_i) K_{Z,h}(z - z_i) - f_{YXZ}(\phidag(x),x,z) \right\|_{L_2}^{1+\mu}.
\end{align*}
Note that $\sup_y\big(K_{Y,h}(y)\big) - \inf_y \big(K_{Y,h}(y)\big) = h^{-1}\left[\sup_y\big(K_{Y}(y)\big) - \inf_y \big(K_{Y}(y)\big)\right]$ and set $C_Y = \left|\sup_y\big(K_{Y}(y)\big) - \inf_y \big(K_{Y}(y)\big)\right|$. This allows for the following estimation
\begin{align*}
&|f(w_1,\ldots,w_n) - f(w_1,\ldots,w_{i-1},w_i',w_{i+1},\ldots,w_n)|\\
&\le  n^{-1-\mu} \big\| K_{Y,h}(\phidag(x)-y_i)K_{X,h}(x-x_i) K_{Z,h}(z - z_i)\\
& \qquad \qquad - K_{Y,h}(\phidag(x)-y'_i)K_{X,h}(x-x'_i) K_{Z,h}(z - z'_i)\big\|_{L_2}^{1+\mu}\\
& \le n^{-1-\mu} h^{-1-\mu} C_Y \big\|K_{X,h}(x-x'_i) K_{Z,h}(z - z'_i)\big\|_{L_2}^{1+\mu}\\
& = n^{-1-\mu} h^{-\frac{(1+\mu)(2+d_X+d_Z)}{2}} C_Y \|K_{X}\|_{L_2}^{1+\mu} \|K_{Z}\|_{L_2}^{1+\mu}.
\end{align*}
This implies
\begin{align}\label{eq:iv_quant_der_tail}
\Prob \left\{ \left|  \|\Tdaghat - \Tdag\|_{HS}^{1+\mu} - \Ex\left(\|\Tdaghat - \Tdag\|_{HS}^{1+\mu}\right)  \right| \ge \sqrt\tau \right\}\qquad\qquad\\
\le 2\exp\left(\frac{-2\tau n^{1+2\mu} h^{(1+\mu)(2+d_X+d_Z)}}{C_Y^2\|K_{X}\|_{L_2}^{2+2\mu} \|K_{Z}\|_{L_2}^{2+2\mu} }\right).
\end{align}
%and thereby the sub-Gaussianity of $\|\Tdaghat - \Tdag\|_{HS}^{1+\mu}$.
Hence, there exist a constants $c_4$ such that 
\begin{align*}
\Prob \left\{ \left| \|\Tdaghat - \Tdag\|_{HS}^{1+\mu} - \Ex\left(\|\Tdaghat - \Tdag\|_{HS}^{1+\mu}\right) \right| \ge \sqrt{\tau \Var\left(\|\Tdaghat - \Tdag\|_{HS}^{1+\mu}\right)} \right\}\qquad\\
\le 2\exp\left(-c_4 \tau\right).
\end{align*}
\end{proof}

\begin{corollary}\label{cor:var_noi}
Under the assumptions of Lemma \ref{lem:concentration}

For the operator $\Op_{ind}$ in \eqref{eq:nonlininstreg:opeq2}
\begin{align*}
\Var\left(\|\Ophat_{ind}(\phidag )\|_{L_2} \right) &\le \frac{18\|K_{X}\|_{L^2}^2\|K_{Y}K_{Z}\|_{L_2}^2}{n h^{d_Z+1}} = \Or(n^{-1}h^{-d_Z-1})\\
\Var\left( \|\Tdaghat - \Tdag\|_{HS}^{1+\mu} \right) &\le \frac{36^{1+\mu}\|K_{X}\|_{L^2}^{2+2\mu}\|K_{Y}\|_{L^2}^{2+2\mu}\|K_{Z}\|_{L_2}^{2+2\mu}}{2n^{1+2\mu} h^{(1+\mu)(d_X+d_Z+3)}}.
\end{align*}

For the operator $\Op_{q}$ in \eqref{eq:iv_quant_reg_op}
\begin{align*}
\Var\left(\|\Ophat_{q}(\phidag )\|_{L_2} \right) &\le \frac{\bar C_Y^2 (1+2q)^2 \|K_{Z}\|_{L_2}^2}{2 n h^{d_Z}}  = \Or(n^{-1}h^{-d_Z})\\
\Var\left( \|\Tdaghat - \Tdag\|_{HS}^{1+\mu} \right) &\le \frac{C_Y^2\|K_{X}\|_{L_2}^{2+2\mu} \|K_{Z}\|_{L_2}^{2+2\mu} }{2 n^{1+2\mu} h^{(1+\mu)(2+d_X+d_Z)}}
\end{align*}
%\[
%\Var\left(\|\Ophat_{ind}(\phidag )\|_{L_2} \right) = \Or(n^{-1}h^{-d_Z-1}) \quad \mbox{and} \quad \Var\left(\|\Ophat_{q}(\phidag )\|_{L_2} \right) = \Or(n^{-1}h^{-d_Z}).
%\]
\end{corollary}

\begin{proof}
It follows from \eqref{eq:var_Find} that $\|\Ophat_{ind}(\phidag )\|_{L_2}$ is a subgaussian random variable. The moment condition for subgaussian variables implies that
\[
\Var\left(\|\Ophat_{ind}(\phidag )\|_{L_2} \right) = \frac{18\|K_{X}\|_{L^2}^2\|K_{Y}K_{Z}\|_{L_2}^2}{n h^{d_Z+1}}~.
\]
The other three statements follow in the same way from \eqref{eq:iv_reg_der_tail}, \eqref{eq:var_Fq}, and \eqref{eq:iv_quant_der_tail} respectively.
\end{proof}

\subsection{Error analysis}\label{sec:error}

In this section we prepare the proofs of the convergence rate theorems by decompose the error $e_{j+1} := \phihat_{j+1} - \phidag$ of our method \eqref{eq:irgnm_spektral} into different components and derive estimates for each component.

\subsubsection{Error decomposition}
The error in the $j+1$-th Newton step is
\begin{equation*}
\begin{split}
e_{j+1} &= \phihat_{j+1} - \phidag= \varphi_0 - \phidag + g_{\alpha_j}(\That_{n,j}^* \That_{n,j})\That_{n,j}^* \left(\That_{n,j}(\phihat_j-\varphi_0) - \Ophat_n(\phihat_j) \right).
\end{split}
\end{equation*}
We decompose the error into four parts. These are an approximation error, a propagated noise error, an error due to noise in the derivative, and a nonlinearity error
\[
e_{j+1} = e^{app}_{j+1} + e^{noi}_{j+1} + e^{der}_{j+1} + e^{nl}_{j+1}.
\]
In the decomposition we use a function $r_\alpha$ defined as $r_\alpha(\lambda) := 1 - \lambda g_\alpha(\lambda)$. With $g_\alpha$ as in \eqref{eq:g_alpha} we have: $r_\alpha(\lambda) = \left(\frac{\alpha}{\lambda + \alpha} \right)^m$.
\begin{description}
 \item[approximation error] $e^{app}_{j+1} := r_{\alpha_j}(\Tdaghat^* \Tdaghat) \Lambda(\Tdaghat^* \Tdaghat) \omega$

 \item[propagated noise error] $e^{noi}_{j+1} := g_{\alpha_j}(\That_{n,j}^*\That_{n,j})\That_{n,j}^*[%\Op(\phidag) 
 - \Ophat_n(\phidag)]$

 \item[derivative noise error] $e^{der}_{j+1} := r_{\alpha_j}(\That_{n,j}^*\That_{n,j})[\Lambda(\Tdag^* \Tdag) - \Lambda(\Tdaghat^* \Tdaghat)] \omega$

 \item[nonlinearity error] $e^{nl}_{j+1} := g_{\alpha_j}(\That_{n,j}^*\That_{n,j})\That_{n,j}^*[\Ophat_n(\phidag) - \Ophat_n(\phihat_j) + \That_{n,j}(\phihat_j - \phidag)]$ %\\[0.3cm]
 
\hspace{170pt} $+ [r_{\alpha_j}(\That_{n,j}^*\That_{n,j}) - r_{\alpha_j}(\Tdaghat^* \Tdaghat)]\Lambda(\Tdaghat^* \Tdaghat) \omega$
\end{description}
A similar related decomposition without $e^{der}_{j+1}$ was proposed in \cite{Baku:92} for the case of known operators. In the rest of the section we will derive bounds on each error component.

\subsubsection{Approximation error}

Assumption \ref{ass:saturation} implies the existence of a constant $C_\Lambda$ such that
\[
\sup_{0<x\le \|\Tdaghat^* \Tdaghat\|%_{\calL(\X,\X)}
} r_\alpha(x) \Lambda(x) \leq C_\Lambda \Lambda(\alpha) \qquad \text{ for all } \alpha \geq 0.
\]
Hence, with $\rho = \|\omega\|$ the approximation error is bounded by
\begin{equation}\label{eq:eapp}
\|e^{app}_{j+1}\|_\X \leq C_\Lambda \Lambda(\alpha_j) \rho.
\end{equation}
Furthermore, in our setting with $\alpha_j := q_\alpha \alpha_{j-1}$ the following inequalities hold with $\gamma_{app} := q_\alpha^{-m}$
\begin{equation}\label{eq:gammaapp}
\begin{alignedat}{2}
&\|e^{app}_{j+1}\|_\X \leq \|e^{app}_j\|_\X \leq \gamma_{app} \|e^{app}_{j+1}\|_\X &\qquad &\text{for } j \geq 1\\
&\text{and } \|e^{app}_0\|_\X \leq \gamma_{app}\|e^{app}_1\|_\X &\qquad &\text{since } \alpha_0 \geq \frac{\|\Tdaghat^* \Tdaghat\|_{\calL(\X,\X)}}{1-q_\alpha}.
\end{alignedat}
\end{equation}
Note that the bound on the approximation error behave like a bias term. It tends to $0$ with increasing $j$ because $\alpha_j$ is decreasing while $\Lambda$ is strictly increasing and $\Lambda(0) = 0$.

\subsubsection{Propagated noise error}

The propagated noise error $e^{noi}_{j+1} := g_{\alpha_j}(\That_{n,j}^*\That_{n,j})\That_{n,j}^* [-\Ophat_n(\phidag)]$ can be bounded by using some standard estimates and the functional calculus. Note that for any linear bounded operator $T: \X \rightarrow \Y$ and $\psi \in \Y$
\begin{align}
\nonumber\|g_\alpha(TT^*)\|_{\mathcal{L}(\Y,\Y)} &\leq \|g_\alpha\|_\infty = \sup_{x \geq 0} \left( \frac{(x+\alpha)^m -\alpha^m}{x(x+\alpha)^m}\right) \leq \frac{m}{\alpha}~,\\
\nonumber\|g_\alpha(TT^*)TT^*\|_{\mathcal{L}(\Y,\Y)} &\leq \sup_{x \geq 0} |g_\alpha(x)x| = \sup_{x \geq 0} \left( \frac{(x+\alpha)^m -\alpha^m}{(x+\alpha)^m}\right) = 1~, \text{ and}\\
\begin{split}\label{eq:g(TT)T}
\|g_\alpha(T^*T)T^*\psi\|_\X^2 &= \langle g_\alpha(TT^*)\psi,\, g_\alpha(TT^*)TT^*\psi \rangle_\Y\\
&\leq \sup_{x \geq 0} |x g_\alpha(x)| \|g_\alpha\|_\infty \|\psi\|_\Y^2 \leq \frac{m}{\alpha} \|\psi\|_\Y^2.
\end{split}
\end{align}
where $\|\cdot\|_\infty$ denotes the sup norm. Hence, 
\begin{align}
\nonumber&\|e^{noi}_{j+1}\|_\X = \|g_{\alpha_j}(\That_{n,j}^*\That_{n,j})\That_{n,j}^* \Ophat_n (\phidag)\|_\X \leq \sqrt{\frac{m}{\alpha_j}} \|\Ophat_n(\phidag)\|_\Y \quad \mbox{and}\\
\label{eq:est_noi_1}&\Ex \left(\|e^{noi}_{j+1}\|_\X^2\right) \le \frac{m}{\alpha_j} \Ex\left(\|\Ophat_n(\phidag)\|^2_\Y\right).
\end{align}
Note that this bound does not depend on the noise in the derivative $\That_{n,j}$ but only on the error in the operator $\Ophat_n$ at $\phidag$. It behaves like a variance term, i.e. it increases with a decreasing regularization parameter $\alpha_j$. In addition to the bound \eqref{eq:est_noi_1} a concentration inequality is needed to bound the MISE of the Newton method. This is part of Assumption \ref{ass:concentration}. The following example illustrates the asymptotic behavior of the bound \eqref{eq:est_noi_1} for the nonparametric IV operators \eqref{eq:nonlininstreg:opeq2} and \eqref{eq:iv_quant_reg_op}.

\begin{example}\label{example_noi}
%Let $\Op_{ind}(\varphi)(u,z) =  \int k_{ind}(\varphi(x) + u, x, z) dx$ be a nonlinear integral operator as in \eqref{eq:nonlininstreg:opeq2} and $\Op_{q}(\varphi)(z) = \int k_{q}(\varphi(x), x, z) dx$ as in \eqref{eq:iv_quant_reg_op}. We consider these operators as maps into $L^2$-spaces, and $X$ and $Z$ to be random vectors with $\supp(X) \subset \R^{d_X}$, $\supp(Z) \subset \R^{d_Z}$. 
We work under the assumptions of Lemma \ref{lem:concentration}. We need different smoothness condition for the operators $\Op_{ind}$ and $\Op_q$. Assume for $\Op_{ind}$ that all derivatives of degree $r$ of the density $f_{YXZ}$ exist and are bounded. For the operator in \eqref{eq:iv_quant_reg_op} we assume less smoothness. Derivatives of degree $r$ of $F_{YXZ}$ should exist and be bounded. Let the joint density $f_{YXZ}$ be estimated by a kernel density estimator $\widehat f_{YXZ}$ with a kernel of sufficiently high order and with a common bandwidth $h$. This gives straight forward estimators $\widehat k_{ind}$ and $\Ophat_{ind}$ for the regression with full independence and $\widehat k_q$ and $\Ophat_q$ for the quantile regression.

With these smoothness assumptions, sample size $n$, and bandwidth $h$ the estimators $\widehat k_{ind}$ and $\widehat k_q$ converge in both cases with the rate 
\[
\Ex(\|k - \widehat k\|_{L^2}^2) = \Or(n^{-1} h^{-d_X-d_Z-1} + h^{2r}).
\]
%The operators $\Ophat_{ind}$ and $\Ophat_q$ integrate $\widehat k$ over $x$. Hence, the dimension of $X$ does not play a role in the asymptotic convergence of the MISE of $\Ophat_{ind}(\varphi)$ and $\Ophat_{q}(\varphi)$. Corollary \ref{cor:var_noi} in Appendix proves this fact and shows the rates 
% \begin{align}\label{eq:MISE_Op_phi}
% \Ex(\|\Ophat_n(\varphi) - \Op(\varphi)\|) = \Or(h^{2r}) \mbox{ and } \Var(\|\Ophat_n(\varphi) - \Op(\varphi)\|)= \Or(n^{-1} h^{-q-1}).
% \end{align}
% Putting this together yields
It follows from Corollary \ref{cor:var_noi} that
\begin{align*}
&\Ex(\|\Ophat_{ind}(\varphi) - \Op_{ind}(\varphi)\|_{L^2}^2) = \Or(n^{-1} h^{-d_Z-1} + h^{2r})\\
&\phantom{\Ex(\|\Ophat_{ind}(\varphi) - \Op_{ind}(\varphi)\|_{L^2}^2)} = \Or(n^{-\frac{2r}{2r+d_Z+1}}) \quad \text{when } h \sim n^{-\frac{1}{2r+d_Z+1}} \qquad \mbox{and}\\
&\Ex(\|\Ophat_q(\varphi) - \Op_q(\varphi)\|_{L^2}^2) = \Or(n^{-1} h^{-d_Z} + h^{2r}) = \Or(n^{-\frac{2r}{2r+d_Z}}) \quad \text{when } h \sim n^{-\frac{1}{2r+d_Z}}.
\end{align*}
With the bound in \eqref{eq:est_noi_1} the rate of the MISE of the propagated noise error is
\begin{align*}%\label{eq:noi1}
\Ex\left(\|e^{noi}_{j+1}\|_{L^2}^2\right) &\leq \frac{m}{\alpha_j} \Ex\left(\|\Ophat_{ind}(\phidag)\|_{L^2}^2\right)= \Or\left(\alpha_j^{-1} (n^{-1} h^{-d_Z-1} + h^{2r})\right) \text{ for } \Op_{ind},\\
\Ex\left(\|e^{noi}_{j+1}\|_{L^2}^2\right) &\leq \frac{m}{\alpha_j} \Ex\left(\|\Ophat_q(\phidag)\|_{L^2}^2\right)= \Or\left(\alpha_j^{-1} (n^{-1} h^{-d_Z} + h^{2r})\right) \text{ for } \Op_q.
\end{align*}
\end{example}

\subsubsection{Derivative noise error}

%The third component in the error decomposition is the error in the approximation of the derivative
%\[
%e^{der}_{j+1} := r_{\alpha_j}(\That_{n,j}^*\That_{n,j})[\Lambda(\Tdag^* \Tdag) - \Lambda(\Tdaghat^* \Tdaghat)] \omega.
%\]
The simple observation that $r_\alpha(x) = \left(\frac{\alpha}{x + \alpha}\right)^m \leq 1$ for $x \in [0,\, \infty)$ independent of $\alpha$ or $m$ leads to the estimate $\|e^{der}_{j+1}\|_\X \leq \rho \|\Lambda(\Tdag^* \Tdag) - \Lambda(\Tdaghat^* \Tdaghat)\|_{\calL(\X,\X)}$.
%\[
%\|e^{der}_{j+1}\|_\X \leq \rho \|\Lambda(\Tdag^* \Tdag) - \Lambda(\Tdaghat^* \Tdaghat)\|_{\calL(\X,\X)}.
%\]
Here the norm on the right hand side of the inequality is the usual operator norm. A way to simplify the term $\|\Lambda(\Tdag^* \Tdag) - \Lambda(\Tdaghat^* \Tdaghat)\|_{\calL(\X,\X)}$ is provided by the following lemma.

%\begin{lemma}[\cite{Egger:05} Lemma 3.2.]\label{lemma:egger}
%For two linear bounded operators between Hilbert spaces $A$ and $B$ and $\mu \geq 0$ there exists a constant $c_\mu$ such that with the corresponding operator norms
%\begin{align*}
%\|(A^*A)^\mu - (B^*B)^\mu\| \leq c_\mu \begin{cases}
%\|A - B\|^{2\mu} &\text{for } \mu < \frac 1 2 \\
%\|A - B\| \, \left(1+ \|A\| + \|B\| + |\ln(\|A -B\|)|\right) &\text{for } \mu = \frac 1 2\\
%\|A - B\|\, \big|\|A\| - \|B\|\big|^\mu &\text{for } \mu > \frac 1 2 \, .
%                                         \end{cases}
%\end{align*}
%\end{lemma}
\begin{lemma}[\cite{Egger:05} Lemma 3.2.]\label{lemma:egger}
For two linear bounded operators between Hilbert spaces $A$ and $B$ and $\mu > \frac{1}{2}$ there exists a constant $c_\mu$ such that with the corresponding operator norms
\begin{align*}
\|(A^*A)^\mu - (B^*B)^\mu\| \leq c_\mu \|A - B\|\, \big|\|A\| - \|B\|\big|^\mu~.
\end{align*}
\end{lemma}
Hence, with some constant $C_d$ and a norm $\|\cdot\|_D$ which is either the operator norm or some norm dominating the operator norm
%\begin{align}\label{eq:der_noi_est}
%\|e^{der}_{j+1}\|_\X \leq C_d \rho \, \begin{cases}
%\|\Tdaghat - \Tdag\|_D \, |\ln(\|\widehat T_\dag - T_\dag\|_D)| &\text{for } \mu = \frac 1 2\\
%\|\Tdaghat - \Tdag\|_D^{1 + \mu} &\text{for } \mu > \frac 1 2\\
%\|\Tdaghat^* \Tdaghat - \Tdag^* \Tdag\|^\mu &\text{for } \mu \leq 1. %\text{for any operator monoton } \Lambda
%                                         \end{cases}
%\end{align}
\begin{align}\label{eq:der_noi_est}
\|e^{der}_{j+1}\|_\X \leq C_d \rho \|\Tdaghat - \Tdag\|_D^{1 + \mu}~.
\end{align}
The bound in \eqref{eq:der_noi_est} is independent of the regularization parameter $\alpha$ and of the number of Newton steps $j$. It depends only on the noise in the Fr\'echet derivative of $\Op$ at $\phidag$. In addition to this bound we have to assume a concentration inequality for the right hand side of \ref{eq:der_noi_est} which is part of Assumption \ref{ass:concentration}.

The following example interprets $\|e^{der}_{j+1}\|_{L_2}$ for the IV regression operators \eqref{eq:nonlininstreg:opeq2} and \eqref{eq:iv_quant_reg_op} in the setup of the previous example.

\begin{example}\label{example_der}
We adopt the assumptions and constructions of $\widehat \Op_{ind}$, $\widehat \Op_q$, $\widehat k_{ind}$ and $\widehat k_q$ from Example \ref{example_noi}. When Assumption \ref{ass:dif} holds, the Fr\'echet derivatives have the form
\begin{align*}
&\Ophat_{ind}'[\varphi]\psi(u,z) = \int \frac{\partial}{\partial y} \widehat k_{ind}(\varphi(x) + u,\, x,\, z) \psi(z)\, dz,\\
&\Ophat_{q}'[\varphi]\psi(z) = \int \frac{\partial}{\partial y} \widehat k_{q}(\varphi(x),\, x,\, z) \psi(z)\, dz.
\end{align*}

The Hilbert-Schmidt norm bounds the operator norm from above and is the $L^2$ norm of the integral kernels $\frac{\partial}{\partial y} \widehat k_{ind}(\varphi(x) + u,x,z)$ and $\frac{\partial}{\partial y} \widehat k_{q}(\varphi(x),x,z)$. We will denote the Hilbert-Schmidt norm by $\|\cdot\|_{HS}$. We introduce the notation $\kappa(u,x,z) := \frac{\partial}{\partial y} k_{ind}(u,x,z)$ and $\widehat\kappa_{n,h}(u,x,z) := \frac{\partial}{\partial y} \widehat k_{ind}( u,x,z)$ when a sample of size $n$ and the bandwidth $h$ are used to estimate $\widehat k_{ind}$. Accordingly, $\widehat\kappa_{1,1}$ stands for the partial derivative of the unscaled kernel.
\begin{align*}
\Ex &\left(\|e^{der}_{j+1}\|_\X^2\right) \le C_d \rho \Ex\left( \|\Tdaghat - \Tdag\|_{HS}^{2(1+\mu)}\right)\\
&= C_d \rho \Ex\left(\int \left(\widehat\kappa_{n,h}(\varphi(x) + u,x,z) - \kappa(\varphi(x) + u,x,z) \right)^2 d(u,x,z)\right)^{1+\mu}\\
&= C_d \rho \Ex\bigg(\int \left(\widehat\kappa_{n,h}(\varphi(x) + u,x,z)-\Ex\widehat\kappa_{n,h}(\varphi(x) + u,x,z)\right)^2\\
&\qquad + \left(\Ex\widehat\kappa_{n,h}(\varphi(x) + u,x,z) - \kappa(\varphi(x) + u,x,z)\right)^2  d(u,x,z)\bigg)^{1+\mu}\\
&\le 2^{1+\mu} C_d \rho \int \Ex \big|\widehat\kappa_{n,h}(\varphi(x) + u,x,z)-\Ex\widehat\kappa_{n,h}(\varphi(x) + u,x,z)\big|^{2(1+\mu)} d(u,x,z)\\
&\qquad + 2^{1+\mu} C_d \rho\left(\int \big(\Ex\widehat\kappa_{n,h}(\varphi(x) + u,x,z) - \kappa(\varphi(x) + u,x,z)\big)^2  d(u,x,z)\right)^{1+\mu}~.
\end{align*}
Jensen's inequality was used in the last inequality. We analyze the second term first. Here 
$\Ex\, \widehat\kappa_{n,h}(\varphi(x) + u,x,z) - \kappa(\varphi(x) + u,x,z)$ is the bias of a partial derivative of a $1+d_X+d_Z$-dimensional kernel density estimator. Hence,
\[
\left(\int\big(\Ex\, \widehat\kappa_{n,h}(\varphi(x) + u,x,z) - \kappa(\varphi(x) + u,x,z)\big)^2  d(u,x,z)\right)^{1+\mu} = \Or\left(h^{2(r-1)(1+\mu)}\right).
\]
The expectation in the first term can be analyzed with the usual change in variables 
\begin{align*}
\Ex &\big|\widehat\kappa_{n,h}(\varphi(x) + u,x,z)-\Ex\widehat\kappa_{n,h}(\varphi(x) + u,x,z)\big|^{2(1+\mu)}\\
&= \int  \big|\widehat\kappa_{n,h}(\varphi(x) + u,x,z)-\Ex\widehat\kappa_{n,h}(\varphi(x) + u,x,z)\big|^{2(1+\mu)} f_{YXZ}(\tilde y, \tilde x, \tilde z) d(\tilde y,\tilde x,\tilde z)\\
&= \frac{h^{(d_X+d_Z+1)}}{n^{1+\mu}h^{2(1+\mu)(d_X+d_Z+2)}}\int  \big|\widehat\kappa_{1,1}(\bar u,\bar x,\bar z)-\Ex\widehat\kappa_{1,1}(\bar u,\bar x,\bar z)\big|^{2(1+\mu)}\\
&\hspace{140pt} f_{YXZ}(y -h(\varphi(x) + \bar u),x+ h\bar x, z+ h\bar z) d(\bar y,\bar x,\bar z)\\
&=n^{-1-\mu}h^{-((1+2\mu)(d_X+d_Z+2)+1)} \big( C f_{YXZ}(y,x,z) + \Or(h)\big) + \Or(n^{-1-\mu}). 
\end{align*}
The constant $C$ in the last line does not depend on $n$ or $h$. Combining the analysis of both terms yields
\[
\Ex \left(\|e^{der}_{j+1}\|_\X^2\right) = \Or\left(n^{-1-\mu} h^{-((1+2\mu)(d_X+d_Z+2)+1)} + h^{2(r-1)(1+\mu)}\right).
\]
A similar computation can be carried out for the quantile regression problem with operator \eqref{eq:iv_quant_reg_op}. %We can conclude that a large value of $\mu$ has a positive impact on the convergence of $\Ex \left(\|e^{der}_{j+1}\|_\X^2\right)$. Larger values of $\mu$ correspond to more smoothness in the solution, i.e. a less ill-posed problem. We like to point out that in the special case of a linear operator always $\|e^{der}_{j+1}\|_\X^2 = 0$ for all $j$.
\end{example}

%\begin{lemma}%\label{lem:concentration_der} 
%Consider the operators \eqref{eq:nonlininstreg:opeq2} and \eqref{eq:iv_quant_reg_op} as maps into $L^2(U,Z)$ or $L^2(Z)$ respectively. Assume that $\widehat k_n$ is estimated by a kernel density estimator with a product kernel composed of $K_Y$, $K_X$, and $K_Z$ with joint bandwidth $h$ as in Lemma \ref{lem:concentration_der}. Then the following exponential inequality holds for $\|\Ophat_n(\phidag)\|_{L^2}$
%%\begin{equation*}
%%\begin{split}
%%&\Prob \Bigg[\left|\|\Tdaghat - \Tdag\|_{HS}^{1+\mu} - \Ex\left(\|\Tdaghat - \Tdag\|_{HS}^{1+\mu}\right) \right| \geq\\
%%& \hspace{5.3cm}\sqrt{ \tau \Var\left(\|\Tdaghat - \Tdag\|_{HS}^{1+\mu}\right)}  \Bigg] \leq 2 e^{-c_4\tau}.
%%\end{split}
%%\end{equation*}
%\begin{equation*}
%\Prob \Bigg[\left|\|\Tdaghat - \Tdag\|_{HS}^{1+\mu} - \Ex\left(\|\Tdaghat - \Tdag\|_{HS}^{1+\mu}\right) \right| \geq \sqrt{ \tau \Var\left(\|\Tdaghat - \Tdag\|_{HS}^{1+\mu}\right)}  \Bigg] \leq 2 e^{-c_4\tau}.
%\end{equation*}
%\end{lemma}

\subsubsection{Nonlinearity error}

%It is easy to check that the last part of the error decomposition
%\begin{align*}
%e^{nl}_{j+1} := g_{\alpha_j}(\That_{n,j}^*\That_{n,j})\That_{n,j}^*[\Ophat_n(\phidag) &- \Ophat_n(\phihat_j) + \That_{n,j}(\phihat_j - \phidag)]\\
%&+ [r_{\alpha_j}(\That_{n,j}^*\That_{n,j}) - r_{\alpha_j}(\Tdaghat^* \Tdaghat)]\Lambda(\Tdaghat^* \Tdaghat) \omega.
%\end{align*}
%vanishes if the operator $\Ophat_n$ is linear. In contrast, it can become arbitrary large when $\Ophat_n$ is nonlinear and no constraints on its nonlinearity are imposed. This is why we call $e^{nl}_{j+1}$ the nonlinearity error.

A restriction on the nonlinearity of $\Ophat_n$ is necessary to control $\|e^{nl}_{j+1}\|$ .A suitable constraint is the  Lipschitz condition \eqref{eq:lipschitz} in Assumption \ref{ass:lipschitz}. It  allows to bound the Taylor reminder of the first term in the nonlinearity error by
\[
\|\Ophat_n(\phidag) - \Ophat_n(\phihat_j) + \That_{n,j}(\phihat_j - \phidag)\|_\Y \leq \frac {L} {2} \|\phihat_j - \phidag\|_\X^2 = \frac {L} {2} \|e_j\|_\X^2.
\]
For the norm of the second term in $e^{nl}_{j+1}$ an additional inequality is needed. It was shown in \cite{BK:04b} Chapter 4.1 that for every $\mu \geq \frac 1 2$ there is a constant $ C_\mu$, such that for two linear operators \mbox{$A, B: \X \rightarrow \Y$} between Hilbert spaces $\|[r_\alpha(A^*A) -r_\alpha(B^*B)](B^*B)^\mu\|_{\calL(\X,\Y)} \leq C_\mu \|A-B\|_{\calL(\X,\Y)}$.
%\[\|[r_\alpha(A^*A) -r_\alpha(B^*B)](B^*B)^\mu\|_{\calL(\X,\Y)} \leq C_\mu \|A-B\|_{\calL(\X,\Y)}.\]
This yields in our case %$\|[r_{\alpha_j}(\That_{n,j}^*\That_{n,j}) - r_{\alpha_j}(\Tdaghat^* \Tdaghat)]\Lambda(\Tdaghat^* \Tdaghat) \omega\|_{\calL(\X,\X)} \leq C_\mu\|\That_{n,j} - \Tdaghat\|_{\calL(\X,\Y)}\rho\leq C_\mu \rho L \|\phihat_j - \phidag\|_\X = C_\mu \rho L \|e_j\|_\X$.
\begin{align*}
\|[r_{\alpha_j}(\That_{n,j}^*\That_{n,j}) - r_{\alpha_j}(\Tdaghat^* \Tdaghat)]\Lambda(\Tdaghat^* \Tdaghat) \omega\|_{\calL(\X,\X)} &\leq C_\mu\|\That_{n,j} - \Tdaghat\|_{\calL(\X,\Y)}\rho\\
 &\leq C_\mu \rho L \|\phihat_j - \phidag\|_\X = C_\mu \rho L \|e_j\|_\X.
\end{align*}
Putting both estimates together and use \eqref{eq:g(TT)T} gives
\begin{equation}\label{eq:enl}
\|e^{nl}_{j+1}\|_\X \leq \frac {L\sqrt{m}}{2\sqrt{\alpha_j}} \|e_j\|_\X^2 + C_\mu \rho L \|e_j\|_\X.
\end{equation}

%This error bound grows quadratically and so does $\|e^{nl}_{j+1}\|$ in many cases. The Newton iteration has to be stopped at a sufficiently small step $j$ to control the nonlinearity error which is similar to the propagated noise error. In fact, the nonlinearity error is bounded by the other three error components for a certain number of Newton steps. 
The next Lemma computes an appropriate stopping parameter $J_{max}$ such that $\|e^{nl}_{j}\|$ is dominated by the other error components for all $j \le J_{max}$.

\begin{lemma}\label{lem:nonlin}
Let Assumptions \ref{ass:dif}, \ref{ass:source}, \ref{ass:lipschitz}, \ref{ass:saturation} hold true with a sufficiently small $\rho$ in Assumption \ref{ass:source}. Assume that $B_{2R}(\varphi_0) \subset \dom(\Op)$ and that $\phidag \in B_R(\varphi_0)$. Choose a monotonically increasing function $\Phi$ such that $\|e^{noi}_j + e^{der}_j\|_\X \leq \Phi(j)$ for all $j \ge 0$. Define
\begin{align}\label{eq:Jmax}
J_{max} := \max \left\{ j\in \N : \frac{\Phi(j)}{\sqrt{\alpha_j}} \leq C_{stop} \right\} \text{ with } 0 < C_{stop} \leq min \left\{\frac{1}{8L\sqrt{m}} , \frac{R}{4\sqrt{\alpha_0}}\right\}.
\end{align}
Then it holds for all $j:= 1,\; 2, \ldots, J_{max}$ that 
\[
\|e^{nl}_j\|_\X \leq \gamma_{nl} \left(\|e^{app}_j\|_\X + \Phi(j)\right) \text{ and } \phihat_j \in B_R(\phidag),
\]
with $\gamma_{nl} := 8L \sqrt{m} C_{stop} \leq 1$.
\end{lemma}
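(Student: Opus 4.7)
The natural approach is induction on $j$ from $0$ up to $J_{max}$, carrying as joint hypothesis both $\phihat_j \in B_R(\phidag)$ and $\|e^{nl}_j\| \le \gamma_{nl}(\|e^{app}_j\| + \Phi(j))$. Writing $M_j := \|e^{app}_j\| + \Phi(j)$ for brevity, the error decomposition combined with the bound $\|e^{noi}_j + e^{der}_j\| \le \Phi(j)$ and the hypothesis immediately give $\|e_j\| \le (1+\gamma_{nl})M_j \le 2M_j$, where $\gamma_{nl} \le 1$ is guaranteed by $C_{stop} \le 1/(8L\sqrt{C_g})$. The base case is immediate since $\phihat_0 = \varphi_0 \in B_R(\phidag)$ by Assumption \ref{ass:dif}.

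For the inductive step, I apply the inequality \eqref{enl}, namely $\|e^{nl}_{j+1}\| \le \frac{L\sqrt{C_g}}{2\sqrt{\alpha_j}}\|e_j\|^2 + C_\mu \rho L \|e_j\|$, and factor the quadratic term as $\|e_j\|\cdot (\|e_j\|/\sqrt{\alpha_j})$. The decisive step is to control $M_j/\sqrt{\alpha_j}$. The definition \eqref{eq:Jmax} of $J_{max}$ yields $\Phi(j)/\sqrt{\alpha_j} \le C_{stop}$, while \eqref{eq:eapp} together with Assumption \ref{ass:source} in the form $\Lambda(\alpha) \le C\sqrt{\alpha}$ (corresponding to $\mu = 1/2$) yields $\|e^{app}_j\|/\sqrt{\alpha_j} \le C'\rho$ with $C'$ depending on $C_\Lambda$ and $q_\alpha$. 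Choosing $\rho$ small enough that $C'\rho \le C_{stop}$ produces $M_j/\sqrt{\alpha_j} \le 2C_{stop}$ and $\|e_j\|/\sqrt{\alpha_j} \le 4C_{stop}$. Substituting into \eqref{enl} gives $\|e^{nl}_{j+1}\| \le (4L\sqrt{C_g}C_{stop} + 2C_\mu \rho L)M_j$, and a further shrinking of $\rho$ so that $2C_\mu \rho L \le 4L\sqrt{C_g}C_{stop}$ delivers $\|e^{nl}_{j+1}\| \le 8L\sqrt{C_g}C_{stop}\, M_j$. Passing from $M_j$ to $M_{j+1}$ via \eqref{eq:gammaapp} and the monotonicity of $\Phi$ costs at most a factor $\gamma_{app} = q_\alpha^{-m}$, which is then absorbed into the final shrinking of $\rho$ using that for $\mu = 1/2$ the sharper ratio $\|e^{app}_j\|/\|e^{app}_{j+1}\| \le 1/\sqrt{q_\alpha}$ is available.

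The second conclusion $\phihat_{j+1} \in B_R(\phidag)$ follows from $\|e_{j+1}\| \le 2M_{j+1}$ together with $M_{j+1} \le C_\Lambda \rho \Lambda(\alpha_0) + C_{stop}\sqrt{\alpha_0}$. The second summand is at most $R/4$ by the constraint $C_{stop} \le R/(4\sqrt{\alpha_0})$, and the first is at most $R/4$ for $\rho$ sufficiently small. This closes the induction.

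The main obstacle I expect is the constant bookkeeping. The statement demands exactly $\gamma_{nl} = 8L\sqrt{C_g}C_{stop}$, yet the outline generates extra multiplicative factors -- notably $\gamma_{app}$ from converting $M_j$ to $M_{j+1}$ and the additive term $2C_\mu \rho L \|e_j\|$ from the linear part of \eqref{enl}. Both have to be swept into the ``sufficiently small $\rho$'' requirement, and exploiting the favorable scaling $\|e^{app}_j\| \le C\sqrt{\alpha_j}$ specific to the source condition at $\mu = 1/2$ is what keeps the argument from acquiring a visible $\gamma_{app}$ in the final constant.
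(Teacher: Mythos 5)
Your overall strategy matches the paper's: induction on $j$, using the inequality \eqref{enl}, and controlling $\|e^{app}_j\|/\sqrt{\alpha_j}$ through the source condition and the smallness of $\rho$. However, there is a genuine gap in the constant bookkeeping that cannot be repaired by the fix you propose.

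You derive $\|e^{nl}_{j+1}\| \le \bigl(4L\sqrt{C_g}C_{stop} + 2C_\mu \rho L\bigr) M_j$ and only afterwards convert $M_j$ to $M_{j+1}$. By \eqref{eq:gammaapp} and the monotonicity of $\Phi$, the conversion gives $M_j \le \gamma_{app}\|e^{app}_{j+1}\| + \Phi(j+1)$, so you end with $\|e^{nl}_{j+1}\| \le \bigl(4L\sqrt{C_g}C_{stop} + 2C_\mu \rho L\bigr)\gamma_{app}\|e^{app}_{j+1}\| + \bigl(4L\sqrt{C_g}C_{stop} + 2C_\mu \rho L\bigr)\Phi(j+1)$. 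On the $\Phi(j+1)$ term the coefficient is fine for small $\rho$, but on the $\|e^{app}_{j+1}\|$ term the coefficient is at least $4L\sqrt{C_g}C_{stop}\gamma_{app}$ even at $\rho = 0$, and this exceeds $\gamma_{nl} = 8L\sqrt{C_g}C_{stop}$ as soon as $\gamma_{app} = q_\alpha^{-m} > 2$. The $4L\sqrt{C_g}C_{stop}$ part of the coefficient comes from $\Phi(j)/\sqrt{\alpha_j} \le C_{stop}$, not from the approximation error, so it is \emph{not} proportional to $\rho$ and shrinking $\rho$ cannot remove it. Your proposed fix — a sharper ratio $\|e^{app}_j\|/\|e^{app}_{j+1}\| \le q_\alpha^{-1/2}$ for $\mu = 1/2$ — is not available: \eqref{eq:gammaapp} gives only $\gamma_{app} = q_\alpha^{-m}$, and a sharper ratio does not follow from the source condition since the ratio $r_{\alpha_{j-1}}(\lambda)/r_{\alpha_j}(\lambda)$ actually approaches $q_\alpha^{-m}$ precisely in the large-$\lambda$ region where the source weight $\Lambda(\lambda)$ is largest. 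Moreover, even $q_\alpha^{-1/2}$ is greater than $1$, so the resulting coefficient would still exceed $\gamma_{nl}$.

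The paper avoids this trap by changing the order of operations. It first substitutes $\|e_{j-1}\| \le (1+\gamma_{nl})\bigl(\gamma_{app}\|e^{app}_j\| + \Phi(j)\bigr)$ (noting that the monotonicity of $\Phi$ costs no factor there) and only then squares, using $(a+b)^2 \le 2a^2 + 2b^2$. After this split, $\gamma_{app}$ multiplies only $\|e^{app}_j\|$ and $\gamma_{app}^2$ only $\|e^{app}_j\|^2$, and both of these contributions are proportional to $\rho$ (since $\|e^{app}_j\| \le C_\Lambda\Lambda(\alpha_{j-1})\rho$ and $\|e^{app}_j\|/\sqrt{\alpha_j} \le C_\Lambda\rho\,\alpha_0^{\mu-1/2}$), hence controllable by the smallness of $\rho$. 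The $\Phi(j)^2$ piece carries no $\gamma_{app}$ at all and is bounded directly via $\Phi(j)/\sqrt{\alpha_j} \le C_{stop}$, giving exactly $\tfrac{1}{2}\gamma_{nl}\Phi(j)$. If you adopt this split-then-convert order, your induction closes without any restriction on $\gamma_{app}$; without it, the argument fails for $q_\alpha < 2^{-1/m}$.
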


\begin{proof}

We generalize the proof strategy of Lemma 2.2 in \cite{BauHohMun:09} to our setting. The proposition follows by induction in $j$. We start with the induction step. Assume that the proposition holds for $j-1$ with $2 \leq j \leq J_{max}$. Since $\Phi$ is increasing and by \eqref{eq:gammaapp}
\begin{align*}
\|e_{j-1}\| &\leq (1+\gamma_{nl})\left(\|e^{app}_{j-1}\| + \Phi(j-1)\right)\\
&\leq (1+\gamma_{nl})\left(\gamma_{app}\|e^{app}_j\| + \Phi(j)\right).
\end{align*}
Combining this with inequality \eqref{eq:enl} and using $(a+b)^2\leq2a^2+2b^2$ yields
\begin{equation}\label{eq:ugl1}
\begin{split}
 \|e^{nl}_j\| \leq \; & C_\mu \rho L (1+\gamma_{nl})\left(\gamma_{app}\|e^{app}_j\| + \Phi(j)\right)\\
& + \frac {L \sqrt{m}} {\sqrt{\alpha_j}} (1+\gamma_{nl})^2\left(\gamma_{app}^2\|e^{app}_j\|^2 + \Phi(j)^2\right).
\end{split}
\end{equation}

If $\rho \leq \gamma_{nl} / (2 C_\mu (1+\gamma_{nl})\gamma_{app})$, the first line on the right hand side is bounded by $1/2\gamma_{nl}\left(\|e^{app}_j\| + \Phi(j)\right)$. To bound the second line, we assume that $\rho \leq \gamma_{nl} / (2 C_\Lambda \alpha_0^{\mu- 1/2} L \sqrt{m} (1+\gamma_{nl})^2 \gamma_{app}^2)$. It follows from \eqref{eq:eapp} that
\[
\frac {\|e^{app}_j\|}{\sqrt{\alpha_j}} \leq C_\Lambda \rho \alpha_j^{\mu- \frac 1 2 } \leq C_\Lambda \rho \alpha_0^{\mu- \frac 1 2 } \leq \frac{\gamma_{nl}} {2 L (1+\gamma_{nl})^2 \gamma_{app}^2} \; .
\]
Thus, $L / \sqrt{\alpha_j} (1+\gamma_{nl})^2 \gamma_{app}^2\|e^{app}_j\|^2 \leq \frac 1 2 \gamma_{nl}\|e^{app}_j\|$. 
By the definition of $J_{max}$ the fact that $\gamma_{nl} \leq 1$ we have
\[
\frac {L\sqrt{m}} {\sqrt{\alpha_j}} (1+\gamma_{nl})^2 \Phi^2(j) \leq \frac {4 L\sqrt{m}} {\sqrt{\alpha_j}} \Phi^2(j) \leq 4 L \sqrt{m} C_{stop} \Phi(j) \leq \frac {\gamma_{nl}}{2} \Phi(j).
\]
Therefore, the second line on the right hand side of \eqref{eq:ugl1} is also bounded by $\frac{1}{2}\gamma_{nl}(\|e^{app}_j\| + \Phi(j))$. Together with the estimation of the first line this gives
\[
\|e^{nl}_j\| \leq \gamma_{nl} \left(\|e^{app}_j\| + \Phi(j)\right).
\]

The base case $j=1$ of the induction follows in exactly the same way, as long as $\alpha_0$ is large enough which is guaranteed by Assumption \ref{ass:saturation} and \eqref{eq:gammaapp}.\medskip

Finally, we have to show that $\phihat_j \in B_R(\phidag)$. If $\rho \leq R / \left(2 C_\Lambda \alpha_0^\mu (1+\gamma_{nl})\right)$, then
\[
\|e^{app}_j\| \leq C_\Lambda \rho \alpha_j^\mu \leq C_\Lambda \rho \alpha_0^\mu \leq \frac R {2(1+\gamma_{nl})}.
\]
Moreover, the monotonicity of $\Phi$ and the definitions of $J_{max}, \; C_{stop}$ and $\gamma_{nl}$ imply:
\[
\Phi(j) \leq \Phi(J_{max}) \leq C_{stop}\sqrt{\alpha_{J_{max}}} \leq C_{stop}\sqrt{\alpha_0} \leq \frac R 4 \leq \frac R {2(1+\gamma_{nl})} \; .
\]
This shows together with the first part of the proof that 
\[
\|e_j\| \leq (1+\gamma_{nl})\left(\|e^{app}_j\| + \Phi(j)\right) \leq R.
\] 
Hence,
$\phihat_j \in B_R(\phidag) \subset \dom(\Op)$.
\end{proof}

The assumption that $\rho$ is sufficiently small means that the initial guess must be close enough to the true solution. As always for Newton type methods we get only local convergence. In practice, the convergence radius seems to be quite large and does usually not restrict the applicability of the method.

%The lemma is formulated for a deterministic setting. For random errors the assumption that a function $\Phi(j)$ exists with $\|e^{noi}_j + e^{der}_j\| \leq \Phi(j)$  usually holds only with a certain probability. In the next section it will be important to control this probability. We will use the following construction for $\Phi$.
%
%Choose sequences $\delta^{noi}_n$, $\sigma^{noi}_n$, $\delta^{der}_n$, and $\sigma^{der}_n$ with
%\begin{align*}
%&\delta^{noi}_n \geq \Ex(\|\Ophat_n(\phidag)\|_\Y), \quad (\sigma^{noi}_n)^2 \geq \Var(\|\Ophat_n(\phidag)\|_\Y),\quad \text{and}\\ 
%&\delta^{der}_n \geq \Ex(\|\Tdaghat - \Tdag\|_{\calL(\X,\Y)}^{1+\mu}), \quad (\sigma^{der}_n)^2 \geq \Var(\|\Tdaghat - \Tdag\|_{\calL(\X,\Y)}^{1+\mu}).
%\end{align*}
%In addition, we choose a weight function $\tau(j)$ with 
%\begin{equation}\label{eq:tau}
%\tau(j+1) q_\alpha \geq \tau(j)
%\end{equation}
%and $q_\alpha$ as in \eqref{eq:alpha} for all $j$. We define
%\begin{align}\label{eq:Phi}
%&\Phi^{noi}_n(\tau , j) := \sqrt{\frac{m}{\alpha_j}}\delta^{noi}_n + C_d \rho \delta^{der}_n + \sqrt{\tau(j)}\left(\sqrt{\frac{m}{\alpha_j}}\sigma_n^{noi} + C_d \rho \sigma^{der}_n\right).
%\end{align}
%By construction this $\Phi^{noi}_n(\tau , j)$ is monotonically increasing in $j$. Hence, $\Phi^{noi}_n(\tau , j)$ fulfills the assumptions of the last lemma. The probability that $\|e^{noi}_j + e^{der}_j\|_\X \leq \Phi^{noi}_n(\tau , j)$ holds can be estimated with the concentration inequalities \eqref{eq:concentration_noi} and \eqref{eq:concentration_der}. 

\subsection{Convergence rates with a priori parameter choice}

This section presents the proof to Theorem \ref{theo:spec_conv}. We generalize the proof strategy in \cite{BauHohMun:09} to our setting. We start with a lemma about deterministic errors, i.e. $0=\Var(\|\Ophat(\phidag)\|_\Y)= \Var(\|\Tdaghat\|_{\calL(\X,\Y)})$. The crucial point is to show that the maximal stopping parameter $J_{max}$ from in Lemma \ref{lem:nonlin} is larger or equal to a suitable stopping parameter.

\begin{lemma}\label{lem:deterministic}
Suppose that  the Assumptions \ref{ass:dif}, \ref{ass:source}, \ref{ass:saturation}, and \ref{ass:lipschitz} are fulfilled. Assume that $B_{2R}(\varphi_0) \subset \dom(\Op)$, and that $\rho$ is small enough as in Lemma \ref{lem:nonlin}. Let $\widetilde \delta^{noi}_n$ and $\widetilde \delta^{der}_n$ be a sequence such that $\widetilde \delta^{noi}_n \geq \|\Ophat_n(\phidag)\|_\Y$ and $\widetilde \delta^{der}_n \geq \|\Tdaghat - \Tdag\|_{\calL(\X,\Y)}^{1+\mu}$.
%\begin{align*}
%\widetilde \delta^{der}_n \geq \, \begin{cases}
%\|\Tdaghat - \Tdag\|_{\calL(\X,\Y)} \, |\ln(\|\widehat T_\dag - T_\dag\|_{\calL(\X,\Y)})| &\text{if } \mu = \frac 1 2\\
%\|\Tdaghat - \Tdag\|_{\calL(\X,\Y)}^{1+\mu} &\text{if } \mu > \frac 1 2
%                                         \end{cases}
%\end{align*} 
Set
\[
\widetilde J := \argmin\limits_{j \in \N}\left(\|e^{app}_j\|_\X + \sqrt{\frac{m}{\alpha_j}} \widetilde \delta^{noi}_n \right) \quad \text{and} \quad J := \min \{J_{max}, \widetilde J\}.
\]
Then there exists a constant $C$ % and $\delta_0$ 
such that
\[
\|\phihat_{J} - \phidag\|_\X \leq C \inf\limits_{j\in\N}\left(\|e^{app}_j\|_\X + \sqrt{\frac{m}{\alpha_j}} \widetilde \delta^{noi}_n + C_d \rho \widetilde \delta^{der}_n\right). %\;\; \text{ for all } \delta \in ] 0, \delta_0 ].
\]
\end{lemma}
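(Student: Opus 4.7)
The plan is to combine the four-part error decomposition $e_{j+1} = e^{app}_{j+1} + e^{noi}_{j+1} + e^{der}_{j+1} + e^{nl}_{j+1}$ with the nonlinearity control provided by Lemma \ref{lem:nonlin}, and then run a short case analysis on whether the oracle index $\widetilde J$ lies below or beyond the admissible horizon $J_{max}$.

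First I would verify the hypothesis of Lemma \ref{lem:nonlin} in the deterministic setting. The envelopes $\widetilde\delta^{noi}_n$ and $\widetilde\delta^{der}_n$ together with the bounds \eqref{eq:est_noi_1} and \eqref{eq:der_noi_est} give $\|e^{noi}_j\| + \|e^{der}_j\| \leq \Phi(j)$, where
\[
\Phi(j) := \sqrt{C_g/\alpha_j}\,\widetilde\delta^{noi}_n + C_d\rho\,\widetilde\delta^{der}_n
\]
is monotonically increasing since $\alpha_j$ decays geometrically. Lemma \ref{lem:nonlin} then yields, for every $1 \leq j \leq J_{max}$, both $\phihat_j \in B_R(\phidag)$ and $\|e^{nl}_j\| \leq \gamma_{nl}(\|e^{app}_j\| + \Phi(j))$ with $\gamma_{nl} \leq 1$. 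Plugging this into the triangle inequality applied to the decomposition produces the master bound
\[
\|e_j\| \leq (1+\gamma_{nl})\bigl(\|e^{app}_j\| + \Phi(j)\bigr), \qquad 1 \leq j \leq J_{max}.
\]

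In the easy case $\widetilde J \leq J_{max}$, so that $J = \widetilde J$, evaluating the master bound at $j = \widetilde J$ gives the claim immediately with $C = 1 + \gamma_{nl}$, since the $j$-independent summand $C_d\rho\widetilde\delta^{der}_n$ may be pulled out of the infimum. The substantive case is $\widetilde J > J_{max}$, where $J = J_{max}$ and one must control $\|e^{app}_{J_{max}}\| + \Phi(J_{max})$ by a multiple of $\|e^{app}_{\widetilde J}\| + \Phi(\widetilde J)$. Monotonicity of $\Phi$ gives $\Phi(J_{max}) \leq \Phi(\widetilde J)$, so the only real task is the bias term. Here the maximality of $J_{max}$ enters: the reverse inequality $\Phi(J_{max}+1) > C_{stop}\sqrt{\alpha_{J_{max}+1}}$ combined with $\alpha_{J_{max}} = \alpha_{J_{max}+1}/q_\alpha$ produces $\sqrt{\alpha_{J_{max}}} \leq \Phi(J_{max}+1)/(C_{stop}\sqrt{q_\alpha})$. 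Then \eqref{eq:eapp} together with Assumption \ref{ass:source} (in the intended regime where $\Lambda(\alpha) \lesssim \sqrt{\alpha}$ for small $\alpha$) yields $\|e^{app}_{J_{max}}\| \leq C_\Lambda\rho\Lambda(\alpha_{J_{max}}) \lesssim \sqrt{\alpha_{J_{max}}} \lesssim \Phi(J_{max}+1) \leq \Phi(\widetilde J)$, whence $\|e_{J_{max}}\| \leq C\bigl(\|e^{app}_{\widetilde J}\| + \Phi(\widetilde J)\bigr)$ as required.

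The main obstacle is exactly this conversion of the boundary stopping-rule inequality at $J_{max}+1$ into a bias bound at $J_{max}$: without the source condition matching at least the square-root qualification of the iterated Tykhonov scheme, the premature stop at $J_{max}$ could strictly worsen the oracle balance at $\widetilde J$ and the uniform bound in the lemma would fail. Everything else is routine bookkeeping around the monotonicity of $\Phi$, the geometric decay of $\alpha_j$, and the standard triangle-inequality treatment of the four error components.
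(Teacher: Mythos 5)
Your proof follows essentially the same route as the paper's: the same case split on $\widetilde J \lessgtr J_{max}$, the same application of Lemma \ref{lem:nonlin} with $\Phi(j) = \sqrt{C_g/\alpha_j}\,\widetilde\delta^{noi}_n + C_d\rho\,\widetilde\delta^{der}_n$ to obtain $\|e_j\|\le(1+\gamma_{nl})(\|e^{app}_j\|+\Phi(j))$ up to $J_{max}$, and the same conversion of the defining inequality at $J_{max}+1$ into $\sqrt{\alpha_{J_{max}}}\lesssim\Phi(J_{max}+1)$, which together with the source condition ($\mu\ge 1/2$, so $\Lambda(\alpha)\lesssim\alpha_0^{\mu-1/2}\sqrt{\alpha}$ on $[0,\alpha_0]$) absorbs $\|e^{app}_{J_{max}}\|$ into $\Phi(\widetilde J)$. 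The paper just tracks the explicit constant $C = (1 + C_\Lambda\rho\alpha_0^{\mu-1/2}/(C_{stop}\sqrt{q_\alpha}))(1+\gamma_{nl})$ where you use $\lesssim$, but the argument is the same.
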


\begin{proof}
Notice that $J$ also minimizes 
\[
\argmin\limits_{j \in \N} \left(\|e^{app}_j\| + \sqrt{\frac{m}{\alpha_j}} \widetilde \delta^{noi}_n + C_d \rho \widetilde \delta^{der}_n\right)
\] 
because $C_d \rho \widetilde \delta^{der}_n$ does not depend on $j$. Set $\Phi(j) := \sqrt{m/\alpha_j} \widetilde \delta^{noi}_n + C_d \rho \widetilde \delta^{der}_n$. If $\widetilde J \leq J_{max}$, the theorem is proven by Lemma \ref{lem:nonlin} with $C = 1 + \gamma_{nl}$.

If $\widetilde J > J_{max}$, then $\widetilde J \geq J_{max}+1$ and $\Phi(J_{max}+1) / C_{stop} \geq \sqrt{\alpha_{J_{max}+1}}$. Hence, by the monotonicity of $\Phi$
\begin{align*}
\left(1 + \frac{C_\Lambda \rho \alpha_0^{\mu-\frac1 2}} {C_{stop} \sqrt{q_\alpha}} \right) \left(\|e^{app}_J\| + \Phi(\widetilde J)\right)
&\geq \left(1 + \frac{C_\Lambda \rho \alpha_0^{\mu-\frac 1 2}} {C_{stop} \sqrt{q_\alpha}} \right) \Phi(J_{max}+1)\\
&\geq \Phi(J_{max}) + C_\Lambda \rho \frac{\Phi(J_{max}+1) \alpha_0^{\mu-\frac1 2}} {C_{stop} \sqrt{q_\alpha}}\\
&\geq \Phi(J_{max}) + C_\Lambda \rho \frac{\sqrt{\alpha_{J_{max}+1}} \alpha_0^{\mu-\frac1 2}} {\sqrt{q_\alpha} }\\
&= \Phi(J_{max}) + C_\Lambda \rho \sqrt{\alpha_{J_{max}}} \alpha_0^{\mu-\frac 1 2}\\
&\geq \Phi(J_{max}) + C_\Lambda \rho \alpha_{J_{max}}^\mu\\
&\geq \Phi(J_{max}) + \|e^{app}_{J_{max}}\|.
\end{align*}

This proves the lemma when $\widetilde J > J_{max}$ with 
\[
C = \left(1 + \frac{C_\Lambda \rho \alpha_0^{\mu-\frac1 2}} {C_{stop} \sqrt{q_\alpha}} \right)(1+ \gamma_{nl}).
\]
\end{proof}

This lemma implies convergence in probability of the estimator with the same rate. We can now proof Theorem \ref{theo:spec_conv}.

\begin{proof}(of Theorem \ref{theo:spec_conv})
We introduce the notation
\begin{align*}
&\delta^{noi}_n = \Ex\big[\|\Ophat_n(\phidag)\|_\Y\big], &&(\sigma^{noi}_n)^2 = \Var\big(\|\Ophat_n(\phidag)\|_\Y\big),\\
&\delta^{der}_n = \Ex\big[\|\Tdaghat - \Tdag\|_{\calL(\X,\Y)}^{1+\mu}\big], &&(\sigma^{der}_n)^2 = \Var\big(\|\Tdaghat - \Tdag\|_{\calL(\X,\Y)}^{1+\mu}\big).
\end{align*}
Similar to the last proof $J$ is also a minimizer of 
\[
J = \argmin\limits_{j \in \N} \left(\|e^{app}_j\| + \sqrt{\frac{m}{\alpha_j}}(\delta^{noi}_n + \sigma^{noi}_n) + C_d \rho (\delta^{der}_n + \sigma^{der}_n)\right).
\]
The proof uses a threshold argument. The key tool is the following construction. Define a chain of events with increasing noise level containing each other as $A_1 \subset A_2 \subset \ldots \subset A_{k_{max}}$ by
\begin{align}\label{eq:def_A}
A_k := \left\{ \phihat_j \in B_{2R}(\varphi_0) \text{ and } \|e^{noi}_j + e^{der}_j\| \leq \Phi^{noi}_n(\tau_k, j) \text{ for all } j = 1, \ldots, J \right\}
\end{align}
and
\[
k_{max} := \max \left\{\left\lfloor\frac{\ln\big((\sigma^{noi}_n)^{-2}\big)}{c_2}\right\rfloor , \left\lfloor\frac{\ln\left((\sigma^{der}_n)^{-2}\right)}{c_4}\right\rfloor \right\}
\]
with $c_2$ and $c_4$ from \eqref{eq:concentration_noi} and \eqref{eq:concentration_der}, and with
\begin{align}\label{eq:Phi}
&\Phi^{noi}_n(\tau , j) := \sqrt{\frac{m}{\alpha_j}}\delta^{noi}_n + C_d \rho \delta^{der}_n + \sqrt{\tau(j)}\left(\sqrt{\frac{m}{\alpha_j}}\sigma_n^{noi} + C_d \rho \sigma^{der}_n\right).
\end{align}
Set $\tau_k(j) := k + c_2^{-1}\ln(\kappa)(J - j)$ with some $\kappa > 1$ small enough such that 
\begin{equation}\label{eq:tau}
\tau(j+1) q_\alpha \geq \tau(j)
\end{equation}
with $q_\alpha$ as in \eqref{eq:alpha} for all $j$. Consequently, $\Phi^{noi}_n(\tau_k, j)$ is monotonically increasing in $j$ as required for the application of Lemma \ref{lem:nonlin}. Notice that $k_{max}$ is chosen in a way such that 
\[
\max\left\{e^{-c_2k_{max}} , e^{-c_4k_{max}}\right\} \leq \ \max\{(\sigma^{noi}_n)^2 , (\sigma^{der}_n)^2\}.
\]

Lemma \ref{lem:nonlin} and Lemma \ref{lem:2implies1} below show that $\|e^{noi}_j + e^{der}_j\| \leq \Phi^{noi}_n(\tau_k, j)$ implies $\phihat_j \in B_{2R}(\varphi_0)$ when $\sigma^{noi}_n$ is sufficiently small, i.e. the second condition in the definition of $A_k$ implies the first one.

%for sufficiently small $\sigma^{noi}_n$ the second condition $\|e^{noi}_j + e^{der}_j\| \leq \Phi^{noi}_n(\tau_k, j)$ in the definition of $A_k$ implies the first one $\phihat_j \in B_{2R}(\varphi_0)$.

In order to prepare the final step of the proof, we bound the probability of $A_k \backslash A_{k-1}$ and the probability of the event complementary to $A_k$. The following computation uses \eqref{eq:lipschitz}, \eqref{eq:concentration_noi}, \eqref{eq:concentration_der}.%, and the assumption that $\sigma^{noi}_n$ and $\sigma^{der}_n$ are small.
\begin{align*}
P(A_k \backslash A_{k-1}) &= P \left\{\Phi^{noi}_n(\tau_{k-1}, j) < \|e^{noi}_j + e^{der}_j\| \leq \Phi^{noi}_n(\tau_k, j) \text{ for all } j = 1, \ldots, J\right\}\\
&\leq P\left\{\Phi^{noi}_n(\tau_{k-1}, j) < \|e^{noi}_j + e^{der}_j\| \text{ for all } j = 1, \ldots, J \right\}\\
&\leq \sum\limits_{j=1}^{J} c_1 e^{-c_2\tau_k(j)} + c_3 e^{-c_4\tau_k(j)} \leq (c_1 e^{-c_2 k} + c_3 e^{-c_4 k}) \sum\limits_{j=1}^{J} \kappa^{j-J}\\
&\leq (c_1 e^{-c_2 k} + c_3 e^{-c_4 k}) \sum\limits_{j=0}^\infty \kappa^{j} = \frac {c_1 e^{-c_2 k} + c_3 e^{-c_4 k}} {1-\kappa^{-1}}\\
P(\mathcal{C}A_k) &\leq P\left\{\Phi^{noi}_n(\tau_{k-1}, j) < \|e^{noi}_j + e^{der}_j\| \text{ for all } j = 1, \ldots, J \right\}\\
&\leq (c_1 e^{-c_2 k} + c_3 e^{-c_4 k}) \sum\limits_{j=0}^\infty \kappa^{j} = \frac {c_1 e^{-c_2 k} + c_3 e^{-c_4 k}} {1-\kappa^{-1}} \; .
\end{align*}
In every event $A_k$ we have $J = J^*$. The assumptions of Lemma \ref{lem:nonlin} are fulfilled in $A_k$. This implies the following error bound
\begin{align*}
\|\phihat_{J}& - \phidag\|^2 \leq \bigg[\|e^{app}_{J}\| + \sqrt{\frac{m}{\alpha_{J}}} \delta^{noi}_n + C_d \rho \delta^{der}_n   + \sqrt{\tau_k(J)}\left( \sqrt{\frac{m}{\alpha_{J}}}\sigma^{noi}_n + C_d \rho \sigma^{der}_n\right)\bigg]^2\\
%&\leq& 10\|e^{app}_J\|^2 + 10\delta^2 \frac{m^2}{\alpha_J} + 10\delta^{der}_n^2 \widetilde C_\mu^2 + 10\tau_k(J)\sigma^2\phi^{noi}_n(\alpha_j)^2 + 10\tau_k(J)\sigma^{der}_n^2 \widetilde C_\mu^2\\
&\leq 10\|e^{app}_J\|^2 + 10\frac{m}{\alpha_{J}} (\delta^{noi}_n)^2 + 10(\delta^{der}_n)^2 C_d^2 \rho^2   + 10k \frac{m}{\alpha_{J}} (\sigma^{noi}_n)^2 + 10kC_d^2 \rho^2(\sigma^{der}_n)^2\\
&=: C_k.
\end{align*}
By the construction of the algorithm \eqref{eq:it_tik} the worst case error is $\|\phihat_{J^*} -\phidag\| \leq 3R$. This will serve as an error bound in the event $\mathcal{C}A_{k_{max}}$. Putting everything together yields
\begin{align*}
\Ex(\|&\phihat_{J^*} -\phidag\|^2) \leq  P(A_1)C_1 + \sum\limits_{k=2}^{k_{max}}P(A_k\backslash A_{k-1})C_k + P(\mathcal{C}A_{k_{max}})9R^2\\
&\leq 10\left( \|e^{app}_{J}\|^2 + \frac{m}{\alpha_{J}} (\delta^{noi}_n)^2  + (\delta^{der}_n)^2 C_d^2 \rho^2\right)\\
&\quad + 10 P(A_1)\left(\frac{m}{\alpha_{J}} (\sigma^{noi}_n)^2 + (\sigma^{der}_n)^2 C_d^2 \rho^2 \right)\\
& \quad + \sum\limits_{k=2}^{k_{max}} P(A_k \backslash A_{k-1}) \left(10k\frac{m}{\alpha_{J}} (\sigma^{noi}_n)^2 + 10k(\sigma^{der}_n)^2 C_d^2 \rho^2 \right) + P(\mathcal{C}A_{k_{max}})9R^2\\[6pt]
% &\leq 10\|e^{app}_{J}\|^2 + 10\frac{m}{\alpha_{J}} (\delta^{noi}_n)^2  + 10(\delta^{der}_n)^2 C_d^2 \rho^2 + P(A_1)\left(10\frac{m}{\alpha_{J}} (\sigma^{noi}_n)^2 + 10(\sigma^{der}_n)^2 C_d^2 \rho^2 \right) \\
% & \quad + \sum\limits_{k=2}^{k_{max}} P(A_k \backslash A_{k-1}) \left(10k\frac{m}{\alpha_{J}} (\sigma^{noi}_n)^2 + 10k(\sigma^{der}_n)^2 C_d^2 \rho^2 \right) + P(\mathcal{C}A_{k_{max}})9R^2\\[6pt]
&\leq 10\left( \|e^{app}_{J}\|^2 + \frac{m}{\alpha_{J}} (\delta^{noi}_n)^2 + (\delta^{der}_n)^2 C_d^2 \rho^2 \right) + P(\mathcal{C}A_{k_{max}})9R^2\\
& \quad + 10\left(\frac{m}{\alpha_{J}} (\sigma^{noi}_n)^2 + (\sigma^{der}_n)^2 C_d^2 \rho^2 \right) \left(2 + \sum\limits_{k=3}^{k_{max}} k P(A_k \backslash A_{k-1}) \right) \\[6pt]
&\leq 10\left( \|e^{app}_{J}\|^2 + \frac{m}{\alpha_{J}} (\delta^{noi}_n)^2 + (\delta^{der}_n)^2 C_d^2 \rho^2 \right) + \left( \frac {c_1 e^{-c_2 k_{max}} + c_3 e^{-c_4 k_{max}}} {1-\kappa^{-1}}\right) 9R^2\\
& \quad + 10\left( \frac{m}{\alpha_{J}} (\sigma^{noi}_n)^2 + (\sigma^{der}_n)^2 C_d^2 \rho^2 \right) \left(2 + \sum\limits_{k=2}^{k_{max}-1} (k+1) \frac {c_1 e^{-c_2 k} + c_3 e^{-c_4 k}} {1-\kappa^{-1}} \right) \\[6pt]
&\leq 10\left( \|e^{app}_{J}\|^2 + \frac{m}{\alpha_{J}} (\delta^{noi}_n)^2 + (\delta^{der}_n)^2 C_d^2 \rho^2 \right) + (c'\max\{(\sigma^{noi}_n)^2, (\sigma^{der}_n)^2\}) 9R^2\\
& \quad + 10\left(\frac{m}{\alpha_{J}} (\sigma^{noi}_n)^2 + (\sigma^{der}_n)^2 C_d^2 \rho^2 \right) \left(2 + \sum\limits_{k=2}^{\infty} (k+1) \frac {c_1 e^{-c_2 k} + c_3 e^{-c_4 k}} {1-\kappa^{-1}} \right) \\[6pt]
&\leq 10 \left( \|e^{app}_{J}\|^2 + \frac{m}{\alpha_{J}} (\delta^{noi}_n)^2 + (\delta^{der}_n)^2 C_d^2 \rho^2 \right) + (c'\max\{(\sigma^{noi}_n)^2, (\sigma^{der}_n)^2\}) 9R^2\\
& \quad + 10 c'' \left(\frac{m}{\alpha_{J}} (\sigma^{noi}_n)^2 + (\sigma^{der}_n)^2 C_d^2 \rho^2 \right)\\[6pt]
&\leq C \left(\|e^{app}_{J}\|^2 + \frac{m}{\alpha_{J}} \left[(\delta^{noi}_n)^2 +(\sigma^{noi}_n)^2 \right] + C_d^2 \rho^2\left[(\delta^{der}_n)^2  + (\sigma^{der}_n)^2\right]\right)\\
&= C \left(\|e^{app}_{J}\|^2 + \frac{m}{\alpha_{J}} \Ex\big[\|\Ophat_n(\phidag)\|^2_\Y\big] + C_d^2 \rho^2\Ex\big[\|\Tdaghat - \Tdag\|_{\calL(\X,\Y)}^{2+2\mu}\big]\right)\\
&=\Or \left(\left(\Ex(\|\Ophat_n(\phidag)\|_\Y^2)\right)^{\frac{2\mu}{2\mu + 1}} + \Ex\left(\|\Tdaghat - \Tdag\|_{\calL(\X,\Y)}^{2+2\mu}\right) \right)~.
\end{align*}
We used $P(A_1) + \sum\limits_{k=2}^{k_{max}}P(A_k\backslash A_{k-1}) + P(\mathcal{C}A_{k_{max}}) = 1$ and $P(A_1) + P(A_2 \backslash A_1) \leq 1$. Furthermore, $c'> 0$, $c''>0$ and $C>0$ are generic constants.
\end{proof}

The following two lemmas are needed for the proof of Theorem \ref{theo:spec_conv} above.

\begin{lemma}
Let the assumptions of Theorem \ref{theo:spec_conv} hold and define:
\begin{align*}
\tilde\Phi(j) &:=  \sqrt{\frac{m}{\alpha_j}}(\delta^{noi}_n + \sigma^{noi}_n) + C_d \rho (\delta^{der}_n + \sigma^{der}_n)\\
\underline\Gamma_{noi} &:= \frac{\sqrt{m / (q_\alpha \alpha_1)}(\delta^{noi}_n + \sigma^{noi}_n) + C_d \rho (\delta^{der}_n + \sigma^{der}_n)}{\sqrt{m / \alpha_1}(\delta^{noi}_n + \sigma^{noi}_n) + C_d \rho (\delta^{der}_n + \sigma^{der}_n)}\\[6pt]
\overline\Gamma_{noi} &:= q_\alpha^{-\frac{1}{2}}.
% \underline\Gamma_{noi} &:= \frac{\alpha_1^{-\eta_m}(q^{-\frac{1}{2}} \delta^{noi}_n + \underline\gamma_{noi} \sigma^{noi}_n) + C_d \rho (\delta^{der}_n + \sigma^{der}_n)}{\alpha_1^{-\eta_m}(\delta^{noi}_n + \sigma^{noi}_n) + C_d \rho (\delta^{der}_n + \sigma^{der}_n)}\\[6pt]
% \overline\Gamma_{noi} &:= \max\{\overline\gamma_{noi}, q_\alpha^{-\frac{1}{2}}\}.
\end{align*}

The following two bounds hold for the stopping index $J$ in Theorem \ref{theo:spec_conv}:
\begin{eqnarray}
\label{l1f1}&&(1-\underline\Gamma_{noi}^{-1}) \tilde \Phi(J) \leq (\gamma_{app}-1) \|e_J^{app}\|,\\
\label{l1f2}&&J \geq %\underline{J}_m := 
\sup\left\{k \in \N \bigg\vert \|e^{app}_1\|\gamma_{app}^{1-k} > \inf_{l\in\N}\left(C_\Lambda\rho\sqrt{\alpha_l} + \tilde\Phi(1)\overline\Gamma_{noi}^{l-1} \right) \right\}.
\end{eqnarray}
\end{lemma}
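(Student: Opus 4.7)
The plan is to exploit the affine structure of $\tilde\Phi$ together with the optimality of $J$. Since the additive term $C_d\rho(\delta^{der}_n+\sigma^{der}_n)$ is independent of $j$, $J$ equivalently minimises $\|e^{app}_j\| + \tilde\Phi(j)$. The first step is to establish two-sided bounds on the geometric ratio of $\tilde\Phi$. Writing $B := \delta^{noi}_n + \sigma^{noi}_n$ and $C := C_d\rho(\delta^{der}_n+\sigma^{der}_n)$, and recalling $\alpha_{j+1} = q_\alpha\alpha_j$, one can display the ratio
\[
\frac{\tilde\Phi(j+1)}{\tilde\Phi(j)} = \frac{\sqrt{C_g/\alpha_j}\,B/\sqrt{q_\alpha} + C}{\sqrt{C_g/\alpha_j}\,B + C},
\]
whose derivative with respect to $t := \sqrt{C_g/\alpha_j}\,B$ equals $C(q_\alpha^{-1/2}-1)/(t+C)^2 \geq 0$. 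Hence the ratio is non-decreasing in $j$; its value at $j = 1$ is exactly $\underline\Gamma_{noi}$ and its limit as $j \to \infty$ is $\overline\Gamma_{noi} = q_\alpha^{-1/2}$. Consequently $\underline\Gamma_{noi} \leq \tilde\Phi(j)/\tilde\Phi(j-1) \leq \overline\Gamma_{noi}$ for every $j \geq 2$.

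For \eqref{l1f1}, I would apply the optimality of $J$ to the competitor $J-1$ (assuming $J \geq 2$; the boundary case $J = 1$ needs a separate, straightforward argument using only \eqref{eq:gammaapp}):
\[
\|e^{app}_J\| + \tilde\Phi(J) \;\leq\; \|e^{app}_{J-1}\| + \tilde\Phi(J-1) \;\leq\; \gamma_{app}\|e^{app}_J\| + \tilde\Phi(J)/\underline\Gamma_{noi},
\]
where the second inequality uses $\|e^{app}_{J-1}\| \leq \gamma_{app}\|e^{app}_J\|$ from \eqref{eq:gammaapp} and the lower ratio bound just established. Rearranging gives \eqref{l1f1}.

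For \eqref{l1f2}, I would combine two one-sided estimates. First, \eqref{eq:eapp} together with Assumption \ref{ass:source} (with $\mu \geq 1/2$, so that $\Lambda(\alpha_l) \leq \alpha_l^{\mu} \leq \sqrt{\alpha_l}$ once $\alpha_l \leq 1$) yields $\|e^{app}_l\| \leq C_\Lambda\rho\sqrt{\alpha_l}$; iterating the upper ratio bound gives $\tilde\Phi(l) \leq \tilde\Phi(1)\overline\Gamma_{noi}^{l-1}$. Adding these, and using optimality of $J$,
\[
\|e^{app}_J\| \;\leq\; \|e^{app}_J\| + \tilde\Phi(J) \;\leq\; \inf_{l\in\N}\bigl(C_\Lambda\rho\sqrt{\alpha_l} + \tilde\Phi(1)\overline\Gamma_{noi}^{l-1}\bigr).
\]
Second, iterating the lower bound $\|e^{app}_{j-1}\| \leq \gamma_{app}\|e^{app}_j\|$ from \eqref{eq:gammaapp} gives $\|e^{app}_k\| \geq \|e^{app}_1\|\gamma_{app}^{1-k}$. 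For any $k$ in the set under the supremum in \eqref{l1f2}, chaining these inequalities produces $\|e^{app}_k\| > \|e^{app}_J\|$; by the monotonicity $\|e^{app}_{j+1}\| \leq \|e^{app}_j\|$ this forces $J > k$. Taking the supremum over admissible $k$ proves \eqref{l1f2}.

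The main obstacle will be the monotonicity calculation for $\tilde\Phi(j+1)/\tilde\Phi(j)$: because the $C$ term breaks the pure geometric progression in $\tilde\Phi$, a naive telescoping argument fails and one must inspect the derivative explicitly to see that the ratio is minimised at $j = 1$ (giving $\underline\Gamma_{noi}$) and maximised in the limit (giving $\overline\Gamma_{noi}$). A secondary subtlety is the pointwise bound $\|e^{app}_l\| \leq C_\Lambda\rho\sqrt{\alpha_l}$, which depends on the regime $\mu \geq 1/2$ of Assumption \ref{ass:source} and on $\alpha_0$ being bounded (otherwise an absorbing constant needs to be carried); both are tacitly in force under the hypotheses of Theorem \ref{theo:spec_conv}.
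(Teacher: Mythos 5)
Your proposal is correct and follows essentially the same route as the paper: establish the two-sided ratio bound $\underline\Gamma_{noi}\le\tilde\Phi(j+1)/\tilde\Phi(j)\le\overline\Gamma_{noi}$, compare $J$ against the competitor $J-1$ to get \eqref{l1f1}, and for \eqref{l1f2} combine $\|e^{app}_l\|\le C_\Lambda\rho\sqrt{\alpha_l}$, $\tilde\Phi(l)\le\tilde\Phi(1)\overline\Gamma_{noi}^{l-1}$ and $\|e^{app}_k\|\ge\|e^{app}_1\|\gamma_{app}^{1-k}$ with the optimality of $J$. The one genuine improvement you make over the paper's exposition is the explicit derivative computation showing the ratio $t\mapsto(t/\sqrt{q_\alpha}+C)/(t+C)$ is increasing (the paper merely attributes the ratio bounds to \eqref{eq:gammaapp}, which concerns $\|e^{app}_j\|$, not $\tilde\Phi$); note, though, that your parenthetical promise of a ``straightforward'' fix for $J=1$ is optimistic, since $\tilde\Phi(1)/\tilde\Phi(0)<\underline\Gamma_{noi}$ by the very monotonicity you establish, so the $j=0$ competitor does not yield \eqref{l1f1} directly --- but the paper glosses over this boundary case too, and it is asymptotically immaterial.
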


\begin{proof}
Note that \eqref{eq:gammaapp} implies
\begin{equation}\label{l1p1}
1 < \underline\Gamma_{noi} \leq \frac{\tilde\Phi(j+1)}{\tilde\Phi(j)} \leq \overline\Gamma_{noi}, \quad \text{for all } j \in \N.
\end{equation}

We start with inequality \eqref{l1f1}. Assume the opposite holds true
\[
(1-\underline\Gamma_{noi}^{-1}) \tilde \Phi^{noi}_n(J) > (\gamma_{app}-1) \|e_{J}^{app}\|.
\]
It would follow from \eqref{eq:gammaapp} and \eqref{l1p1} that
\[
\|e^{app}_{J-1}\| + \tilde \Phi(J - 1) \leq \gamma_{app}\|e^{app}_{J}\| + \underline\Gamma_{noi}^{-1} \tilde \Phi(J) < \|e^{app}_{J}\| + \tilde \Phi(J).
\] 
This is a contradiction to the definition of $J$ and therefore proves \eqref{l1f1}.\medskip

In order to prove \eqref{l1f2} assume that for some $k$, and some $l \geq 1$ 
\[
\|e^{app}_1\|\gamma_{app}^{1-k} > C_\Lambda\rho\sqrt{\alpha_l} + \tilde\Phi(1)\overline\Gamma_{noi}^{l-1}.
\]
It follows from \eqref{eq:eapp}, \eqref{eq:gammaapp} and \eqref{l1p1} that for all $j \leq k$
\begin{equation*}
\begin{split}
\|e^{app}_l\| + \tilde \Phi(l) &\leq C_\Lambda \rho \sqrt{\alpha_l} + \tilde \Phi(1)\overline\Gamma_{noi}^{l-1} < \|e^{app}_1\|\gamma_{app}^{1-k} \leq \|e^{app}_k\| \leq \|e^{app}_j\|\\
&\leq \|e^{app}_j\| + \tilde \Phi(j).
\end{split}
\end{equation*}
As $J$ is the minimizer for $\|e^{app}_j\| + \tilde \Phi(j)$ this implies $J >k$. Taking the infimum over $l$ and the supremum over $k$ proves the lemma.
\end{proof}

\begin{lemma}\label{lem:2implies1}
Let the assumptions of Theorem \ref{theo:spec_conv} hold true. Define $J_{max}$ as in Lemma \ref{lem:nonlin} with $\tau_k(j) := k + \frac{\ln(\kappa)}{c_2}(J - j)$ %as in the proof of Theorem \ref{theo:spec_conv}
\begin{align*}
J_{max}(k) := \max \bigg\{j \in \N \bigg\vert \bigg[&\sqrt{\frac{m}{\alpha_j}}\delta^{noi}_n  + C_d \rho \delta^{der}_n\\
&+ \sqrt{\tau_k(j)} \left(\sqrt{\frac{m}{\alpha_j}} \sigma^{noi}_n + C_d \rho \sigma^{der}_n\right) \bigg] \alpha_j^{-\frac{1}{2}} \leq C_{stop} \bigg\}.
\end{align*}
There exist $\bar\sigma^{noi}_n > 0$ and $\bar\sigma^{der}_n > 0$ such that for all $\sigma^{noi}_n \leq \bar\sigma^{noi}_n$ and $\sigma^{der}_n \leq \bar\sigma^{der}_n$ and for all \hbox{$k = 1, \ldots, k_{max}$} it holds that $J \leq J_{max}$.
\end{lemma}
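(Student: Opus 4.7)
The plan is to verify that the inequality defining $J_{max}(k)$ holds at $j=J$ for every $k \leq k_{max}$; by the monotonicity of $j\mapsto \Phi^{noi}_n(\tau_k,j)/\sqrt{\alpha_j}$ built into the construction of $\tau_k$ via \eqref{eq:tau}, this suffices to conclude $J\leq J_{max}(k)$.

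First I would evaluate at $j=J$, using $\tau_k(J) = k$, and split
\[
\frac{\Phi^{noi}_n(\tau_k, J)}{\sqrt{\alpha_J}} = \frac{\tilde\Phi(J)}{\sqrt{\alpha_J}} + (\sqrt{k}-1)\left[\frac{\sqrt{C_g}\,\sigma^{noi}_n}{\alpha_J} + \frac{C_d\rho\,\sigma^{der}_n}{\sqrt{\alpha_J}}\right],
\]
separating a deterministic piece driven by $\tilde\Phi(J)$ from a stochastic remainder which carries the $k_{max}$-dependence. For the deterministic piece I would chain bound \eqref{l1f1} of the previous lemma with the source condition (Assumption \ref{ass:source}): $\tilde\Phi(J) \leq \frac{\gamma_{app}-1}{1-\underline\Gamma_{noi}^{-1}}\|e^{app}_J\| \leq C_1 C_\Lambda\rho\, \alpha_J^\mu$, so $\tilde\Phi(J)/\sqrt{\alpha_J}\leq C_1 C_\Lambda\rho\,\alpha_J^{\mu-1/2}\leq C_1 C_\Lambda\rho\,\alpha_0^{\mu-1/2}$, which is at most $C_{stop}/2$ once $\rho$ is small (a smallness condition already imposed in Lemma \ref{lem:nonlin}).

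For the stochastic remainder I would extract two lower bounds on $\alpha_J$ from the same chain $\tilde\Phi(J)\leq C_1 C_\Lambda\rho\,\alpha_J^\mu$ by minorizing $\tilde\Phi(J)$ in turn by its two noise components: dominating it by $\sqrt{C_g/\alpha_J}\sigma^{noi}_n$ gives $\alpha_J \gtrsim (\sigma^{noi}_n/\rho)^{1/(\mu+1/2)}$, and dominating it by $C_d\rho\sigma^{der}_n$ gives $\alpha_J \gtrsim (\sigma^{der}_n)^{1/\mu}$. Substituting back yields
\[
\frac{\sigma^{noi}_n}{\alpha_J} \lesssim \rho^{1/(\mu+1/2)}\,(\sigma^{noi}_n)^{\frac{\mu-1/2}{\mu+1/2}}, \qquad \frac{\sigma^{der}_n}{\sqrt{\alpha_J}} \lesssim (\sigma^{der}_n)^{\frac{2\mu-1}{2\mu}},
\]
where both exponents are strictly positive because $\mu>1/2$. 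Since $k_{max}$ only grows logarithmically in $1/\min\{\sigma^{noi}_n,\sigma^{der}_n\}$, the products $\sqrt{k_{max}}\sigma^{noi}_n/\alpha_J$ and $\sqrt{k_{max}}\sigma^{der}_n/\sqrt{\alpha_J}$ vanish as the noise levels shrink, so choosing $\bar\sigma^{noi}$ and $\bar\sigma^{der}$ sufficiently small pushes the stochastic remainder below $C_{stop}/2$ as well.

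The main obstacle will be making this bound uniform across the whole rectangle $(0,\bar\sigma^{noi}]\times(0,\bar\sigma^{der}]$: because $k_{max}$ mixes the two noise levels, a cross-term such as $\sqrt{\log((\sigma^{der}_n)^{-2})}\cdot (\sigma^{noi}_n)^{(\mu-1/2)/(\mu+1/2)}$ need not be small when one noise level shrinks much faster than the other. Overcoming this requires choosing $\bar\sigma^{noi}$ and $\bar\sigma^{der}$ in a coordinated fashion, so that the positive power of each noise level always dominates the logarithm of its partner; once this coupling is in place the four error contributions combine to produce the desired inequality $\Phi^{noi}_n(\tau_k,J)/\sqrt{\alpha_J}\leq C_{stop}$, completing the argument.
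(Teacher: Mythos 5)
Your strategy is genuinely different from the paper's, and it is worth spelling out what each route buys. You use only inequality \eqref{l1f1}: chaining $\tilde\Phi(J)\lesssim\|e^{app}_J\|\lesssim C_\Lambda\rho\,\alpha_J^\mu$ with the two trivial minorants $\tilde\Phi(J)\geq\sqrt{C_g/\alpha_J}\,\sigma^{noi}_n$ and $\tilde\Phi(J)\geq C_d\rho\,\sigma^{der}_n$ gives explicit \emph{lower bounds on $\alpha_J$}, from which $\sigma^{noi}_n/\alpha_J$ and $\sigma^{der}_n/\sqrt{\alpha_J}$ decay like positive powers of the respective noise levels, enough to absorb a $\sqrt{k_{max}}\sim\sqrt{\log}$ factor. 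The paper instead goes through \eqref{l1f2}, which gives a \emph{lower bound on the iteration count $J$} of order $\min\{\ln(1/\sigma^{noi}_n),\ln(1/\sigma^{der}_n)\}$, and then exploits the geometric decay $\alpha_J=\alpha_0 q_\alpha^J$ so that $\alpha_J^{\mu-1/2}$ is polynomially small in the slower of the two noise levels. Your route is more direct (it never needs the infimum-over-$l$ gymnastics behind \eqref{l1f2}) and gives sharper, separate rates for the two stochastic contributions; the paper's route has the cosmetic advantage of reducing everything to a single exponential-versus-logarithm comparison.

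Two points deserve care. First, the deterministic piece: you bound $\tilde\Phi(J)/\sqrt{\alpha_J}\leq C_1 C_\Lambda\rho\,\alpha_0^{\mu-1/2}$ and appeal to the smallness of $\rho$ from Lemma~\ref{lem:nonlin}. But $C_1=(\gamma_{app}-1)/(1-\underline\Gamma_{noi}^{-1})$ is \emph{not} a fixed constant: $\underline\Gamma_{noi}\to 1$ (so $C_1\to\infty$) whenever the $j$-independent term $C_d\rho(\delta^{der}_n+\sigma^{der}_n)$ dominates $\tilde\Phi(1)$, which is exactly the regime the paper highlights as typical. The smallness conditions on $\rho$ in Lemma~\ref{lem:nonlin} do not control this. (The paper's own constant $C$ in this proof carries the identical $(1-\underline\Gamma_{noi}^{-1})^{-1}$ factor, so this is a shared, not a new, weakness — but it means you cannot simply cite Lemma~\ref{lem:nonlin} for that half of the bound.) Second, the cross-term you flag is a genuine obstruction and your proposed fix does not work as stated: the lemma quantifies over a product set $(0,\bar\sigma^{noi}]\times(0,\bar\sigma^{der}]$, so fixing $\sigma^{noi}_n=\bar\sigma^{noi}$ and sending $\sigma^{der}_n\to 0$ makes $\sqrt{k_{max}}\,\sigma^{noi}_n/\alpha_J$ diverge no matter how the two thresholds are "coordinated"; no choice of a fixed pair $(\bar\sigma^{noi},\bar\sigma^{der})$ is uniform over the rectangle. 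The paper's final display has the same problem (its exponent should carry $\min\{\ln((\sigma^{noi}_n)^{-2}),\ln((\sigma^{der}_n)^{-2})\}$, not $\max\{\sqrt{\ln}\,\}$, and with the $\min$ the product still diverges in the same corner). The statement is morally a single-small-parameter result — sound when $\sigma^{noi}_n$ and $\sigma^{der}_n$ are linked through a common sample size $n$ with polynomially comparable rates, as in all the paper's applications — but as a uniform claim over the rectangle it fails. You were right to be suspicious here; the issue is real and not fixable by threshold coordination alone.
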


\begin{proof}
Since $\tau_k(j)$ fulfills inequality \eqref{eq:tau} for $k\leq k_{max}$ and $j \leq J$, 
\[
\tau_k(J) \leq \tau_{k_{max}}(J) \leq \max \left\{ \ln((\sigma^{noi}_n)^{-2}) / c_2, \, \ln((\sigma^{der}_n)^{-2}) / c_4 \right\}.
\]
Hence,
\begin{equation*}
\begin{split}
&\left(\sqrt{\frac{m}{\alpha_j}} \delta^{noi}_n  + C_d \rho \delta^{der}_n + \sqrt{\tau_k(j)} \left(\sqrt{\frac{m}{\alpha_j}} \sigma^{noi}_n + C_d \rho \sigma^{der}_n \right) \right) \alpha_j^{-\frac{1}{2}}\\
&\qquad \qquad \leq \left( \sqrt{\frac{m}{\alpha_{J}}}\delta^{noi}_n + C_d \rho \delta^{der}_n + \sqrt{\tau_{k_{max}}(J)} \left( \sqrt{\frac{m}{\alpha_{J}}}\sigma^{noi}_n + C_d \rho \sigma^{der}_n \right) \right) \alpha_{J}^{-\frac{1}{2}}\\
&\qquad  \qquad \leq \max \left\{ \sqrt{\frac{\ln((\sigma^{noi}_n)^{-2})}{c_2}}, \sqrt{\frac{\ln((\sigma^{der}_n)^{-2})}{c_4}}\right\} \tilde\Phi(J)\alpha_J^{-\frac{1}{2}}\\
&\qquad  \qquad \leq \max \left\{ \sqrt{\frac{\ln((\sigma^{noi}_n)^{-2})}{c_2}}, \sqrt{\frac{\ln((\sigma^{der}_n)^{-2})}{c_4}}\right\} \frac{\gamma_{app}-1}{1-\underline\Gamma_{noi}^{-1}} \|e^{app}_{J}\|\alpha_{J}^{-\frac{1}{2}}\\
&\qquad  \qquad \leq C \max \left\{ \sqrt{\ln((\sigma^{noi}_n)^{-2})} , \sqrt{\ln((\sigma^{der}_n)^{-2})} \right\} \alpha_{J}^{\mu-\frac{1}{2}}
\end{split}
\end{equation*}
with $\displaystyle C:=\frac{\rho C_\Lambda (\gamma_{app} - 1)}{\min\{c_2 , c_4\} (1-\underline\Gamma_{noi}^{-1})}$.\medskip

Moreover, we have to take into account that in inequality \eqref{l1f2}
\begin{equation*}
\begin{split}
&\inf_{l\in\N}\left(C_\Lambda\rho\sqrt{\alpha_l} + \tilde\Phi(1)\overline\Gamma_{noi}^{l-1} \right)\\
&\qquad = \inf_{l\in\N}\left(C_\Lambda\rho\sqrt{\alpha_l} + \left(\alpha_1^{-\frac{1}{2}}(\delta^{noi}_n + \sigma^{noi}_n) + C_d \rho (\delta^{der}_n + \sigma^{der}_n) \right)\overline\Gamma_{noi}^{l-1} \right)
\end{split}
\end{equation*}
decays with a polynomial rate in $\sigma^{noi}_n$ and $\sigma^{der}_n$. Therefore, there exists a constant $b$ for which $J \geq - b \max\{\ln(\sigma^{noi}_n), \ln(\sigma^{der}_n)\}$, while $\displaystyle\lim_{x\rightarrow \infty}xq_\alpha^{cx}$ goes to $0$ for every $c$ as $q_\alpha<1$. Hence, there are $\bar\sigma^{noi}_n$ and $\bar\sigma^{der}_n$ such that for all $\sigma^{noi}_n \in \; ]0, \bar\sigma^{noi}_n]$ and for all $\sigma^{der}_n \in \; ]0, \bar\sigma^{der}_n]$ it holds:
\begin{equation*}
\begin{split}
&C \max \left\{ \sqrt{\ln((\sigma^{noi}_n)^{-2})} , \sqrt{\ln((\sigma^{der}_n)^{-2})} \right\} \alpha_{J}^{\mu-\frac{1}{2}}\\
& \leq C \max \left\{ \sqrt{\ln((\sigma^{noi}_n)^{-2})} , \sqrt{\ln((\sigma^{der}_n)^{-2})} \right\} \left(\frac{\alpha_0}{q_\alpha}\right) q_\alpha^{\max \left\{ \sqrt{\ln((\sigma^{noi}_n)^{-2})} , \sqrt{\ln((\sigma^{der}_n)^{-2})} \right\} \frac{b}{2}(\mu-\frac{1}{2})}\\
&\leq C_{stop}.
\end{split}
\end{equation*}
Together with the first estimate this proves the lemma.
\end{proof}

Finally, we can proof Corallary \ref{cor:rates_npiv}.
\begin{proof}(of Corollary \ref{cor:rates_npiv})

Combining the results of Theorem \ref{theo:spec_conv} and Examples \ref{example_noi} and \ref{example_der} we get the rate
\begin{align*}
\Ex(\|&\phihat_{J^*} -\phidag\|_\X^2)\\ 
&= \Or \left(\rho^{\frac{2}{2\mu +1}}(n^{-1} h^{-(d_Z+1)})^{\frac{2\mu}{2\mu + 1}} + n^{-1-\mu} h^{-((1+2\mu)(d_X+d_Z+2)+1)} + h^{\frac{4\mu r}{2\mu + 1}} \right).
\end{align*}
\end{proof}

\subsection{Convergence rates for adaptive estimation}

%The next theorem shows that Lepski{\u\i}'s principle achieves the same rate of convergence in a deterministic setting as the a priori parameter choice. The convergence is only slowed down by a constant factor.
%
%\begin{theorem}%\label{theo:lepskii_det}
%Let the assumptions of Lemma \ref{lem:deterministic} hold. Define the Lepski{\u\i} stopping parameter by
%\begin{align*}
%J_{Lep} := \min\bigg\{j \leq J_{max} \, \bigg| \, \|\phihat_i - \phihat_j\|_\X \leq 4(1+\gamma_{nl})\left[\sqrt{\frac{m}{\alpha_i}} \widetilde \delta^{noi}_n + C_d \rho \widetilde \delta^{der}_n\right]&\\
%\text{for all } i = 1,\, \ldots,\, &J_{max}\bigg\}.
%\end{align*}
%Then there exists a constant $\tilde C$, such that
%\[
%\sqrt{\Ex\left(\|\phihat_{J_{Lep}} - \phidag\|_\X^2\right)} \leq \tilde C \inf\limits_{j\in\N}\left(\|e^{app}_j\|_\X + \sqrt{\frac{m}{\alpha_j}} \widetilde \delta^{noi}_n + C_d \rho \widetilde \delta^{der}_n\right). %\;\; \text{ for all } \delta \in ] 0, \delta_0 ].
%\]
%\end{theorem}
%\begin{proof}
%The Theorem follows directly by applying Corollary 1 in \cite{Mathe:06} to our problem.
%\end{proof}

\begin{proof}(of Theorem \ref{theo:lepskii_risk})
When $\widetilde\Phi^{noi}_n$ is used in the definition of $J_{max}$ in \eqref{eq:Jmax}, it follows that $J_{max} = \Or\big[\ln((\sigma^{noi}_n)^{-1}) + \ln((\sigma^{der}_n)^{-1})\big]$. Consider the event $A$ defined as in \eqref{eq:def_A} with
\[
\tau(j) : = \max \left\{\frac{\ln\big((\sigma^{noi}_n)^{-2}\big)}{c_2},\, \frac{\ln\left((\sigma^{der}_n)^{-2}\right)}{c_4} \right\}.
\]
Applying the Lepski{\u\i} principle (e.g. Corollary 1 in \cite{Mathe:06}) % with $\Phi^{app} := \|e^{app}\|$ and $\widetilde\Phi^{noi}_n$ 
in this event gives the estimate
\[
\|\phihat_{Lep} - \phidag\| \leq 6 q_\alpha^{-\frac{1}{2}} (1 + \gamma_{nl}) \min_{j=1,\, \ldots,\, J_{max}} \left(\|e^{app}\| + \widetilde\Phi^{noi}_n \right).
\]
In Lemma \ref{lem:2implies1} it was shown that for sufficiently small values of $\delta^{noi}_n$, $\sigma^{noi}_n$, $\delta^{der}_n$ and $\sigma^{der}_n$ the parameter $J_{max}$ is large enough. Hence, in the asymptotics we can take the infimum over $\N$
\[
\|\phihat_{Lep} - \phidag\| \leq 6 q_\alpha^{-\frac{1}{2}} (1 + \gamma_{nl}) \inf_{j \in \N} \left(\|e^{app}\| + \widetilde\Phi^{noi}_n \right).
\]

In addition, we estimate the probability of the opposite event of $A$ by
\begin{align*}
P(\mathcal{C} A) &\leq \sum_{j=1}^{J_{max}} c_1 \exp(-\ln((\sigma^{noi}_n)^{-2}) + c_3 \exp(-\ln((\sigma^{der}_n)^{-2})\\
&\leq  J_{max} \left(c_1 (\sigma^{noi}_n)^2 + c_3 (\sigma^{der}_n)^2 \right)\\
&\leq C' \max \left\{ \ln((\sigma^{noi}_n)^{-1})(\sigma^{noi}_n)^2,\, \ln((\sigma^{der}_n)^{-1})(\sigma^{der}_n)^2 \right\}\\
&\leq C'' \min\limits_{j \in \N} \bigg( \|e^{app}_j\|^2 + \frac{m}{\alpha_j} \big[(\delta^{noi}_n)^2 + \ln((\sigma^{noi}_n)^{-1}) (\sigma^{noi}_n)^2 \big]\\
&\hspace{66mm} + C_d^2 \rho^2 \big[(\delta^{der}_n)^2 + \ln((\sigma^{der}_n)^{-1}) (\sigma^{der}_n)^2 \big] \bigg)
\end{align*}
with two constants $C'$ and $C''$. We used in the third row the fact that $J_{max} = \Or \big[\ln((\sigma^{noi}_n)^{-1}) + \ln((\sigma^{der}_n)^{-1})\big]$ and in the fourth row that $\alpha_j^{-\frac{1}{2}}$ is monotonically increasing in $j$.

We finish the proof with the estimation of the MISE
\begin{align*}
\Ex[\|\phihat_{Lep} - \phidag\|^2] &\leq P(A) 36 q_\alpha^{-1} (1 + \gamma_{nl})^2 \inf_{j \in \N} \left(\|e^{app}\| + \widetilde\Phi^{noi}_n \right)^2 + P(\mathcal{C} A) 9R^2\\
&\leq C \min\limits_{j \in \N} \bigg( \|e^{app}_j\|^2 + \frac{m}{\alpha_j} \big[(\delta^{noi}_n)^2 + \ln((\sigma^{noi}_n)^{-1}) (\sigma^{noi}_n)^2 \big]\\
&\hspace{90pt} + C_d^2 \rho^2 \big[(\delta^{der}_n)^2 + \ln((\sigma^{der}_n)^{-1}) (\sigma^{der}_n)^2 \big] \bigg).
\end{align*}
The rate in the theorem follows from the last line and the bound \eqref{eq:eapp} on $\|e_j^{app}\|$.
\end{proof}

\end{document}